\newtheorem{lemma}{Lemma}
\newtheorem{corollary}{Corollary}
\newtheorem{remark}{Remark}
\newtheorem{definition}{Definition}
\newtheorem{theorem}{Theorem}
\newtheorem{example}{Example}
\newtheorem{note}{Note}
\title{Linear Network Coding, Linear Index Coding and Representable Discrete Polymatroids}
\begin{document}

\author{Vijayvaradharaj T. Muralidharan and B. Sundar Rajan,~\IEEEmembership{Fellow,~IEEE}
\thanks{V. T. Muralidharan is currently with Qualcomm India Private Limited, Hyderabad-500081, India (e-mail: vmuralid@qti.qualcomm.com). B. S. Rajan is with the Dept. of Electrical Communication Engineering, Indian Institute of Science, Bangalore-560012, India (e-mail: bsrajan@ece.iisc.ernet.in).}
\thanks{This work was carried out when V. T. Muralidharan was with the Dept. of Electrical Communication Engineering, Indian Institute of Science, Bangalore-560012, India.}
}
\maketitle
\begin{abstract}
Discrete polymatroids are the multi-set analogue of matroids. In this paper, we explore the connections among linear network coding, linear index coding and representable discrete polymatroids. We consider vector linear solutions of networks over a field $\mathbb{F}_q,$  with possibly different message and edge vector dimensions, which are referred to as linear fractional solutions. It is well known that a scalar linear solution over $\mathbb{F}_q$ exists for a network  if and only if the network is \textit{matroidal} with respect to a \textit{matroid} representable over $\mathbb{F}_q.$ We define a \textit{discrete polymatroidal} network and show that a linear fractional solution over a field $\mathbb{F}_q,$ exists for a network if and only if the network is discrete polymatroidal with respect to a discrete polymatroid representable over $\mathbb{F}_q.$ An algorithm to construct networks starting from certain class of discrete polymatroids is provided. Every representation over $\mathbb{F}_q$ for the discrete polymatroid, results in a linear fractional solution over $\mathbb{F}_q$ for the constructed network.\\ 
\indent Next, we consider the index coding problem, which involves a sender which generates a set of messages $X=\{x_1,x_2,\dotso x_k\}$ and a set of receivers $\mathcal{R}$ which demand messages. A receiver $R \in \mathcal{R}$ is specified by the tuple $(x,H)$ where $x \in X$ is the message demanded by $R$ and $H \subseteq X \setminus \{x\}$ is the side information possessed by $R.$ We first show that a linear solution to an index coding problem exists if and only if there exists a representable discrete polymatroid satisfying certain conditions which are determined by the index coding problem considered. El Rouayheb et. al. showed that the problem of finding a multi-linear representation for a matroid can be reduced to finding a \textit{perfect linear index coding solution} for an index coding problem obtained from that matroid. Multi-linear representation of a matroid can be viewed as a special case of representation of an appropriate discrete polymatroid. We generalize the result of El Rouayheb et. al. by showing that the problem of finding a representation for a discrete polymatroid can be reduced to finding a perfect linear index coding solution for an index coding problem obtained from that discrete polymatroid.
\end{abstract}
\section{Background and Related Work}
\label{I}
The concept of network coding, originally introduced by Ahlswede et. al. in \cite{Ah}, helps towards providing more throughput in a communication network than what pure routing solutions provide. For solvable multicast networks, it was shown in \cite{Li} that linear solutions exist for sufficiently large field size. An algebraic framework for finding linear solutions in networks was introduced in \cite{KoMe}.

The connection between matroids and network coding was studied by Dougherty et. al. in \cite{DoFrZe}. In \cite{DoFrZe}, the notion of \textit{matroidal network} was introduced and it was shown that if a scalar linear solution over $\mathbb{F}_q$ exists for a network, then the network is matroidal with respect to a representable matroid. The converse that a scalar linear solution exists for a network if the network is matroidal with respect to a representable matroid was shown in \cite{KiMe}. 

A construction procedure was given in \cite{DoFrZe} to obtain networks from matroids, in which the resulting network admits a scalar linear solution over $\mathbb{F}_q,$ if the matroid is representable over $\mathbb{F}_q.$ Using the networks constructed with the construction procedure given in \cite{DoFrZe}, it was shown in \cite{DoFrZe_In} that there exists networks which do not admit any scalar and vector linear solution, but admit a non-linear solution. In \cite{LiSu}, optimal solutions for cycilc networks were constructed from associated acyclic networks, motivated by results from matroid duality theory. Linear network codes over cyclic networks were characterized using matroids in \cite{SuLiCh}.

Extending the notion of matroidal network to networks which admit error correction, it was shown in \cite{PrRa_ISIT} that a network admits a scalar linear
error correcting network code if and only if it is a matroidal error correcting network associated with a representable matroid. Constructions of networks from matroids with error correction capability were provided in \cite{PrRa_ISIT,PrRa_ISITA}.

It was shown in \cite{MeEfHoKa} that it is possible for a non-scalar linear solvable network to admit a vector linear solution, in which the edges carry vectors over $\mathbb{F}_q$ whose dimensions are same as that of the message vectors. Throughout this paper, by a vector network coding solution, we refer to a solution for which all the dimensions of the message vectors are equal to the edge vector dimension. It is possible that a network does not admit any scalar or vector solution, but admits a solution if all the dimensions of the message vectors are not equal to the edge vector dimension. Such network coding solutions called Fractional Network Coding (FNC) solutions, for which all the dimensions of the message vectors are not necessarily equal to the edge vector dimension, have been considered in \cite{CaDoFrZe,KiMe_FNC,DoFrZe_FNC}. The work in \cite{CaDoFrZe} primarily focusses on fractional routing, which is a special case of FNC. 
In \cite{KiMe_FNC}, algorithms were provided to compute the capacity region for a network, which was defined to be the closure of all rates achievable using FNC. In \cite{DoFrZe_FNC}, achievable rate regions for certain specific networks were found and it was shown that achievable rate regions using linear FNC need not be convex.

An index coding problem $\mathcal{I}(X,\mathcal{R}),$ which is a special case of the general network coding problem, involves a sender which generates a set of messages $X=\{x_1,x_2,\dotso x_k\}$ and a set of receivers $\mathcal{R}$ which demand messages \cite{BaBiJaKo,ChSp,RoSpGe}. A receiver $R \in \mathcal{R}$ is specified by the tuple $(x,H)$ where $x \in X$ is the message demanded by $R$ and $H \subseteq X \setminus \{x\}$ is the side information possessed by $R.$
In \cite{RoSpGe}, El Rouayheb, Sprinston and Georghiades analyzed the connection among network coding, index coding and multi-linear representations of matroids. In \cite{RoSpGe}, it was shown that the problem of finding a linear solution for a network coding problem can be reduced to the problem of finding a \textit{perfect linear index code} (for a formal definition see Section \ref{III B}) for an index coding problem, which was obtained from the network considered. Also, it was shown in \cite{RoSpGe} that the problem of finding a multi-linear representation for a matroid can be reduced to finding a perfect linear index code for an index coding problem obtained from that matroid.

Discrete polymatroids are the multi-set analogue of matroids\cite{HeHi,Vl,Fu}{\footnote{The term discrete polymatroid was first introduced by Herzog and Hibi in \cite{HeHi}, while the concept was earlier treated in the first edition of \cite{Fu} with the underlying additive group being the set of integers.}}. Linear and multi-linear representations of matroids can be viewed equivalently as representations of appropriate discrete polymatroids. Representable discrete polymatroids have been used in the context of secret sharing in cryptography \cite{FaFaPa,HsCh,FaPa,FaMaPa}.
In this paper, for the first time to the best of our knowledge, we explore the connections among linear network coding, linear index coding and representable discrete polymatroids. 

The organization of this paper is as follows: An overview of matroids and discrete polymatroids is presented in Section II. Section III deals with the preliminaries related to network coding and index coding. Section IV deals with the connection between linear FNC and representable discrete polymatroids. The connection between linear index coding and representable discrete polymatroids is explored in Section V. In Section VI, we discuss about other possible connections between network/index coding and discrete polymatroids, obtained using the results in this paper and the one in \cite{RoSpGe}.   

The main contributions of this paper are as follows:
\begin{itemize}
\item
Discrete polymatroids can be viewed as the  generalization of matroids. It is known that the vectors which belong to a discrete polymatroid, can be viewed as the generalization of the notion of independent sets of a matroid and the basis vectors of a discrete polymatroid can be viewed as the generalization of the notion of basis sets of a matroid (Section \ref{II C}). To the best of our knowledge, the notion of circuits of matroids has not been generalized to discrete polymatroids. In Section \ref{II D}, we introduce the notion of \textit{minimal excluded vector} for a discrete polymatroid, which can be viewed as the generalization of the notion of circuits of a matroid. 
In the later sections, this notion of minimal excluded vector is extensively used to construct networks from discrete polymatroids, which admit linear FNC solutions, as well as to construct index coding problems which admit perfect linear index coding solutions. 
\item
In \cite{KiMe}, Kim and Medard made the following comment:
``\textit{\dotso Unfortunately, the results presented in this paper do not seem to generalize to vector-linear network coding or more general network coding schemes. The difficulty is that the matroid structure requires that a subset of the ground set of a matroid is either independent or dependent, but what this corresponds to in vector-linear codes, for instance, is not clear.\dotso}'' 
In this paper, we establish that there is a fundamental connection between discrete polymatroids and linear FNC. Towards, establishing that connection,  the notion of \textit{discrete polymatroidal network} is introduced, which can be viewed as a generalization of the notion of matroidal network introduced in \cite{DoFrZe}. In Section \ref{IV A}, it is shown that a linear FNC solution exists for a network over a field $\mathbb{F}_q$ if and only if the network is discrete polymatroidal with respect to a discrete polymatroid representable over $\mathbb{F}_q.$
\item
A construction algorithm to obtain networks from a class of discrete polymatroids is provided in Section \ref{IV B}. Starting from a discrete polymatroid which is representable over $\mathbb{F}_q,$ the resulting networks admit a linear FNC solution over $\mathbb{F}_q.$ 
\item
In Section \ref{VI A}, it is shown that a linear solution to an index coding problem exists if and only if there exists a representable discrete polymatroid satisfying certain conditions which are determined by the index coding problem considered. 
In Section \ref{VI B}, we provide a construction of an index coding problem, starting from a discrete polymatroid. It is shown that a perfect linear index coding solution exists for this index coding problem, if and only if the discrete polymatroid from which the index coding problem was constructed is representable. In this way, the problem of finding a representation for a discrete ploymatroid reduces to the problem of finding a perfect linear solution for an index coding problem constructed from the discrete polymatroid.
\end{itemize}


The main differences between the work in this paper and the related work in \cite{RoSpGe} are as follows:
\begin{itemize}
\item
The work in \cite{RoSpGe} considers multi-linear representations of matroids. In this paper, we consider discrete polymatroids, which are more general than matroids. With every matroid we can associate a unique discrete polymatroid, but not vice versa. All multi-linear representations of matroids can be viewed equivalently as representations of appropriate discrete polymatroids, but the converse is not true. There exists discrete polymatroids whose representation cannot be viewed equivalently as the multi-linear representation of any matroid. For more details on this, see Section \ref{II C}. We show that not all linear FNC solutions can be characterized using multi-linear representations of matroids, whereas they can be characterized using representations of discrete polymatroids.
\item
The relationship among multi-linear representation of matroids, linear index coding and linear network coding established in \cite{RoSpGe} is as follows: Starting from a matroid, an index coding problem was constructed and it was shown that a perfect-linear index coding solution exists for the index coding problem if and only if the associated matroid has a multi-linear representation. Also, a network coding problem was obtained from the constructed index coding problem, which has a vector linear solution if and only if the associated matroid has a multi-linear representation. This relationship between matroid multi-linear representation and network (index) coding established in \cite{RoSpGe} is restricted to the network (index) coding problem constructed from a matroid and not for an arbitrary network (index) coding problem. 
The connections established in this paper between discrete polymatroids and linear FNC in Section \ref{IV A}, and  between discrete polymatroids and linear index coding in Section \ref{VI A}, are valid for arbitrary networks and index coding problems respectively. 
\item
The construction of networks and index coding problems presented in \cite{RoSpGe} are from matroids, where as the constructions provided in this paper are from discrete polymatroids, which are more general than matroids. The construction of index coding problem from discrete polymatroids provided in Section \ref{VI B} in this paper is a generalization of the construction from matroids in \cite{RoSpGe}.
\end{itemize}
\textbf{\textit{Notations:}}
The set $\lbrace 1,2,\dotso,r \rbrace$ is denoted as  $\lceil r \rfloor.$ $\mathbb{Z}_{\geq 0} $ and $\mathbb{R}_{\geq 0}$ denote the set of non-negative integers and real numbers respectively. For a vector $v$ of length $r$ and $A \subseteq \lceil r \rfloor,$ $v(A)$ is the vector obtained by taking only the components of $v$ indexed by the elements of $A.$ The vector of length $r$ whose $i^{\text{th}}$ component is one and all other components are zeros is denoted as $\epsilon_{i,r}.$  For $u,v \in \mathbb{Z}_{\geq 0}^r,$ $u \leq v$ if all the components of $v-u$ are non-negative and,  $u < v$ if $u \leq v$ and $u \neq v.$ For $u,v \in \mathbb{Z}_{\geq 0}^r,$ $u \vee v$ is the vector whose $i^{\text{th}}$ component is the maximum of the $i^{\text{th}}$ components of $u$ and $v.$ A vector $u \in \mathbb{Z}_{\geq 0}^r$ is called an integral sub-vector of $v \in \mathbb{Z}_{\geq 0}^r$ if $u \leq v.$ For a set $A,$ $\vert A \vert$ denotes its cardinality and for a vector $v \in  \mathbb{Z}_{\geq 0}^r,$ $\vert v \vert$ denotes the sum of the components of $v.$ For a vector $u \in \mathbb{Z}_{\geq 0}^{r},$ $(u)_{>0}$ denotes the set of indices corresponding to the non-zero components of $u.$

\section{Matroids and Discrete Polymatroids}
In Section \ref{II A} and Section \ref{II B}, the basic definitions and notations related to matroids and discrete polymatroids are provided. In Section \ref{II C}, how a matroid can be viewed as a special case of a discrete polymatroid is explained. In Section \ref{II D}, the notion of minimal excluded vectors for a discrete polymatroid is introduced, which when specialized reduces to the well known notion of circuits for matroids. 
\subsection{Matroids}
\label{II A}
In this subsection, a brief overview of matroids is presented. For a comprehensive treatment, the readers are referred to \cite{We,Ox}.
\begin{definition}[\cite{We}]
Consider a function $\Upsilon:2^{\lceil r \rfloor} \rightarrow \mathbb{Z}_{\geq 0}$ on ground set $\lceil r \rfloor$ which satisfies the following conditions $\forall A \subseteq \lceil r \rfloor$:
\begin{itemize}
\item[(R1)] 
 $\Upsilon(A)\leq \vert A \vert.$
\item[(R2)]
$\Upsilon(A)\leq \Upsilon(B),A \subseteq B.$
\item[(R3)]
$\Upsilon(A\cup B)+\Upsilon(A \cap B)\leq \Upsilon(A)+\Upsilon(B).$ 
\end{itemize}
A matroid with rank function $\Upsilon$ is the pair $(\lceil r \rfloor,\mathcal{I}),$ where the set $\mathcal
{I}$ called the set of independent sets is defined as $\mathcal{I}=\{X\subseteq \lceil r \rfloor:\Upsilon(X)=\vert X \vert\}.$
\end{definition}

The sets which do not belong to $\mathcal{I}$ are called the dependent sets. A maximal independent set is a basis set and a minimal dependent set is called a circuit. 
The rank of the matroid $\mathbb{M},$ denoted by $rank(\mathbb{M})$ is equal to $\Upsilon(\lceil r \rfloor).$ 
A matroid can be equivalently defined in terms of the set of independent sets, basis sets and the set of circuits.  


A matroid $\mathbb{M}$ is said to be representable over $\mathbb{F}_q$ if there exist one-dimensional vector subspaces $V_1,V_2, \dotso V_r$ of a vector space $E$ such that $\dim(\sum_{i \in X} V_i)=\Upsilon(X), \forall X \subseteq \lceil r \rfloor$ and the set of vector subspaces $V_i, i \in \lceil r  \rfloor,$ is said to form a representation of $\mathbb{M}.$ The one-dimensional vector subspaces $V_i, i \in \lceil r \rfloor,$ can be described by a matrix $A$ over $\mathbb{F}_q$ whose $i^{\text{th}}$ column spans $V_i.$ 


The notion of multi-linear representation of matroids was introduced in \cite{SiAs,Ma}.
A matroid $\mathbb{M}$ on the ground set $\lceil r \rfloor$ is said to be multi-linearly representable of dimension $n$ over $\mathbb{F}_q$ if there exist vector subspaces $V_1,V_2,\dotso,V_r$ of a vector space $E$ over $\mathbb{F}_q$ such that $dim(\sum_{i \in X} V_i)=n\Upsilon(X), \forall X \subseteq \lceil r \rfloor.$ The vector subspaces $V_i,i\in\lceil r \rfloor,$ are said to form a multi-linear representation of dimension $n$ over $\mathbb{F}_q$ for the matroid $\mathbb{M}.$ For $n=1,$ the notion of multi-linear representation reduces to the notion of representation of matroids. 

\subsection{Discrete Polymatroids}
\label{II B}
In this subsection, an overview of discrete polymatroids is presented. For more details and examples on discrete polymatroids, interested readers are referred to \cite{HeHi,Vl,Fu}.

A discrete polymatroid $\mathbb{D}$ is defined as follows:
\begin{definition}[\cite{FaFaPa}]
Consider a function $\rho:2^{\lceil r \rfloor} \rightarrow \mathbb{Z}_{\geq 0}$ on the ground set $\lceil r \rfloor$ with $\rho(\phi)=0,$ which satisfies (R2) and (R3) in Definition 1, but not necessarily (R1). 
An integer polymatroid $\mathbb{I}$ with rank function $\rho$ is the region defined as $\{x \in \mathbb{R}_{\geq 0}^r:\vert x(A)\vert\leq \rho(A), \forall A \subseteq \lceil r \rfloor\}$\cite{Ed}.
A discrete polymatroid $\mathbb{D}$ with rank function $\rho$ is the set of vectors in $\mathbb{I}$ whose components take only integral values. In other words, a discrete polymatroid $\mathbb{D}$ with rank function $\rho$ is defined as $\mathbb{D}=\{x \in \mathbb{Z}_{\geq 0}^r:\vert x(A)\vert\leq \rho(A), \forall A \subseteq \lceil r \rfloor\}.$
\end{definition}

\begin{note}
A function $\rho:2^{\lceil r \rfloor} \rightarrow \mathbb{Z}_{\geq 0}$ for which $\rho(X)=0, \forall X \subseteq \lceil r \rfloor$ is  the rank function of a trivial discrete polymatroid which contains only the all-zero vector. In this paper, we only consider non-trivial discrete polyamtroids. 
\end{note}

 
A vector $u \in  \mathbb{D}$ for which there does not exist $v \in \mathbb{D}$ such that $u<v,$ is called a basis vector of $\mathbb{D}.$  Let $\mathcal{B}(\mathbb{D})$ denote the set of basis vectors of $\mathbb{D}.$  The sum of the components of a basis vector of $\mathbb{D}$ is referred to as the rank of $\mathbb{D},$ denoted by $rank(\mathbb{D}).$ Note that for all the basis vectors, sum of the components will be equal \cite{Vl}. A discrete polymatroid is nothing but the set of all integral subvectors of its basis vectors.



\begin{example}
\label{ex7}
Consider the discrete polymatroid $\mathbb{D}$ on the ground set $\lceil 3 \rfloor$ with the rank function $\rho$ given by $\rho(\{1\})=\rho(\{2\})=\rho(\{2,3\})=2,\rho(\{3\})=1$ and $\rho(\{1,2\})=\rho(\{1,3\})=\rho(\{1,2,3\})=3.$ Note that the function $\rho$ satisfies the conditions (R2) and (R3). The set of basis vectors for this discrete polymatroid is given by $\mathcal{B}(\mathbb{D})=\{(1,1,1),(1,2,0),(2,0,1),(2,1,0)\}.$
\end{example}

 
 Let $V_1,V_2,\dotso, V_r$ be vector subspaces of a finite dimensional vector space $E.$ Define the mapping $\rho: 2^{\lceil r \rfloor} \rightarrow \mathbb{Z}_{\geq 0}$ as $\rho(X)=dim(\sum_{i \in X} V_i),$ $X \subseteq \lceil r \rfloor.$ The mapping $\rho$ satisfies (R2) and (R3), and is the rank function of a discrete polymatroid, which we denote by $\mathbb{D}(V_1,V_2,\dotso,V_r).$ 
Note that $\rho$ remains the same even if we replace the vector space $E$ by the sum of the vector subspaces $V_1,V_2,\dotso,V_r.$ In the rest of the paper, the vector subspace $E$ is taken to be the sum of the  vector subspaces $V_1,V_2,\dotso,V_r$ considered. 
The vector subspaces $V_1,V_2,\dotso,V_r$ can be described by a matrix $A=[A_1 \; A_2 \; \dotso A_r ],$ where $A_i, i \in \lceil r \rfloor,$ is a matrix whose columns span $V_i.$
\begin{definition}[\cite{FaFaPa}]
A discrete polymatroid $\mathbb{D}$ is said to be representable over $\mathbb{F}_q$ if there exist vector subspaces $V_1,V_2,\dotso,V_r$ of a vector space $E$ over $\mathbb{F}_q$ such that $dim(\sum_{i \in X} V_i)=\rho(X),$ $\forall X \subseteq \lceil r \rfloor.$ The set of vector subspaces $V_i,i\in\lceil r \rfloor,$ is said to form a representation of $\mathbb{D}.$ 
A discrete polymatroid is said to be representable if it is representable over some field.
\end{definition}
\begin{example}
\label{ex8}
Let $A_1=\begin{bmatrix}1 & 0\\0 & 1\\0 & 0\end{bmatrix},$ $A_2=\begin{bmatrix} 0 & 1\\0 & 1 \\1 &1\end{bmatrix}$ and $A_3=\begin{bmatrix}0 \\ 0 \\ 1\end{bmatrix}$ be matrices over $\mathbb{F}_2.$ Let $V_i, i\in \lceil 3 \rfloor,$ denote the column span of $A_i.$ It can be verified that the vector subspaces $V_1,V_2$ and $V_3$ form a representation over $\mathbb{F}_2$ of the discrete polymatroid given in Example \ref{ex7}.
\end{example}
\begin{example}
\label{ex9}
Let $A= \underbrace{\hspace{-.3 cm}\left[\begin{matrix}\hspace{.2 cm} 1 \\ \hspace{.2 cm}0 \\\hspace{.2 cm}0 \end{matrix}\right.}_{A_1} \; \underbrace{\begin{matrix} 0\\ 1\\0 \end{matrix}}_{A_2} \; \underbrace{\begin{matrix} 0\\ 0\\1 \end{matrix}}_{A_3}  \; \underbrace{\left.\begin{matrix} 1&0\\ 0&1\\0&1 \end{matrix}\right]}_{A_4}$ be a matrix over $\mathbb{F}_q.$ Let $V_i$ denote the column span of $A_i,$ \mbox{$i \in \lceil 4 \rfloor.$} The rank function $\rho$ of the discrete polymatroid $\mathbb{D}(V_1,V_2,V_3,V_4)$ is as follows: \mbox{$\rho(X)=1, \text{\;if\;} X \in \left\{\{1\},\{2\},\{3\}\right\};$} \mbox{$\rho(X)=2, \text{\;if\;} X \in \left\{\{1,2\},\{1,3\},\{1,4\},\{2,3\},\{4\}\right\}$} and \mbox{$\rho(X)=3 \text{\; otherwise}.$} The set of basis vectors for this discrete polymatroid is given by, 
{ $$\left\{(0,0,1,2),(0,1,0,2),(0,1,1,1),(1,0,1,1),(1,1,0,1),\right.$$ $$\left.\hspace{6.75 cm}(1,1,1,0)\right\}.$$}
\end{example}

Next, an example of a discrete polymatroid which is not representable is provided.
\begin{example}
Let $\rho:2^{\lceil 4 \rfloor}\rightarrow \mathbb{Z}_{\geq 0}$ be a function given by $\rho(\{1\})=\rho(\{2\})=\rho(\{3\})=\rho(\{4\})=2,$ $\rho(\{1,2\})=\rho(\{1,3\})=\rho(\{1,4\})=\rho(\{2,3\})=\rho(\{2,4\})=3$ and $\rho(\{3,4\})=\rho(\{1,2,3\})= \rho(\{1,2,4\})=\rho(\{1,3,4\})=\rho(\{2,3,4\})=\rho(\{1,2,3,4\})=4.$ It can be verified that $\rho$ satisfies the conditions (R2) and (R3), and hence it is the rank function of a discrete polymatroid. Note that $\rho$ does not satisfy the Ingleton inequality \cite{Ing}, which is a necessary condition for a discrete polymatroid to be representable. Hence, this discrete polymatroid is not representable. The set of basis vectors for this discrete polymatroid is given by,
$$\left\{(0,0,2,2),(2,1,1,0),(2,1,0,1),(2,0,1,1),(0,2,1,1),\right.$$ $$\left.(1,2,0,1),(1,2,1,0),(1,1,2,0),(1,0,2,1),(1,0,2,1),\right.$$ $$\left.\hspace{2 cm}(1,1,0,2),(1,0,1,2),(0,1,1,2),(1,1,1,1)\right\}.$$
\end{example}

\subsection{Matroids viewed as a special case of Discrete Polymatroids}
\label{II C} 
Discrete polymatroids can be viewed as a generalization of matroids \cite{HeHi,Vl}. It is well known that there is a one-to-one correspondence between the independent sets of a matroid and the vectors which form an associated discrete polymatroid. Similarly, it is known that there is a one-to-one correspondence between the basis sets of a matroid and the basis vectors of an associated discrete polymatroid. In this subsection, a brief discussion about this connection between matroids and discrete polymatroids is presented.
 
 
 Since the rank function $\Upsilon$ of a matroid $\mathbb{M}$ satisfies (R2)and (R3), it is also the rank function of a discrete polymatroid denoted as $\mathbb{D}(\mathbb{M}).$ Note that the rank function $\Upsilon$ of $\mathbb{D}(\mathbb{M})$ satisfies $\Upsilon(X)\leq \vert X \vert, \forall X \subseteq \lceil r \rfloor,$ in addition to (R2) and (R3). 
There is a one-to-one correspondence between the matroid $\mathbb{M}$ and the discrete polymatroid $\mathbb{D}(\mathbb{M}).$ For every independent set $I$ of the matroid $\mathbb{M},$ there exists a unique vector belonging to $\mathbb{D}(\mathbb{M})$ whose components indexed by the elements of $I$ take the value one and all other components are zeros.
In other words, in terms of the set of independent sets $\mathcal{I}$ of $\mathbb{M},$  the discrete polymatroid $\mathbb{D}(\mathbb{M})$ can be written as $\mathbb{D}(\mathbb{M})=\{\sum_{i \in I} \epsilon_{i,r} :I \in \mathcal{I}\}.$ Conversely, the set of independent sets $\mathcal{I}$ of $\mathbb{M}$ is given by $\mathcal{I}=\{(u)_{>0} : u \in \mathbb{D}(\mathbb{M})\}.$ 

Similarly, for a basis set $B$ of a matroid $\mathbb{M},$ the vector $\sum_{i \in B}\epsilon_{i,r}$ is a basis vector of $\mathbb{D}(\mathbb{M})$ and conversely, for a basis vector $b$ of $\mathbb{D}(\mathbb{M}),$ the set $(b)_{>0}$ is a basis set of $\mathbb{M}.$
\begin{example}
\label{ex11}
Consider the matroid on the ground set $\lceil 4 \rfloor$ with the set of independent sets given by  {\small$\{\emptyset, \{1\},\{2\},\{3\},\{4\},\{1,2\},\{1,3\},\{1,4\},\{2,3\},\{2,4\},\{3,4\}\}.$} This matroid is referred to as the uniform matroid $U_{2,4}.$ The rank function for this matroid is given by, $\Upsilon(X)=\min\{\vert X \vert,2\}, X \subseteq \lceil 4 \rfloor.$
For the matroid $U_{2,4}$ , the discrete polymatroid $\mathbb{D}(U_{2,4})$ is given by $$\mathbb{D}(U_{2,4})=\{(0,0,0,0),(1,0,0,0),(0,1,0,0),(0,0,1,0),$$ $$\hspace{1.8cm}(0,0,0,1),(1,1,0,0),(1,0,1,0),(1,0,0,1),$$ $$\hspace{3.2cm}(0,1,1,0),(0,1,0,1),(0,0,1,1)\}.$$ For every independent set $I$ of $\mathbb{M},$ $\mathbb{D}(\mathbb{M})$ contains a vector whose components indexed by the elements of $I$ are ones and all other components are zeros. 
\end{example}

A set of vector subspaces $V_i, i \in \lceil r  \rfloor,$ forms a representation of $\mathbb{M}$ if and only if it forms a representation of $\mathbb{D}(\mathbb{M}).$ In this way, the representability of a matroid $\mathbb{M}$ over $\mathbb{F}_q$ can be viewed equivalently as the representability of the discrete polymatroid $\mathbb{D}(\mathbb{M})$ over $\mathbb{F}_q.$


For a discrete polymatroid $\mathbb{D}$ with rank function $\rho,$ let $n\mathbb{D}$ denote the discrete polymatroid whose rank function $\rho'(X)= n\rho(X), \forall X \subseteq \lceil r \rfloor.$ Note that the function $\rho'$ satisfies the conditions (R2) and (R3). 

\begin{example}
\label{ex13}
For the uniform matroid $U_{2,4},$ the discrete polymatroid $2\mathbb{D}(U_{2,4})$ has the rank function $ \rho'$ given by $\rho'(X)=min\{2\vert X \vert,4\}, X \subseteq \lceil 4 \rfloor.$ The set of basis vectors for this discrete polymatroid is given by,
{
\begin{align*}
&\left \lbrace (0,0,2,2),(0,1,1,2),(0,1,2,1),(0,2,0,2),(0,2,1,1),\right.\\
&\hspace{0.2 cm}(0,2,2,0),(1,0,1,2),(1,0,2,1),(1,1,0,2),(1,1,1,1),\\
&\hspace{0.2 cm}(1,1,2,0),(1,2,0,1),(1,2,1,0),(2,0,0,2),(2,0,1,1),\\
&\left.\hspace{1.6 cm}(2,0,2,0),(2,1,0,1),(2,1,1,0),(2,2,0,0)\right\}.
\end{align*}
}
\end{example}

It is straightforward to see that a matroid has a multi-linear representation of dimension $n$ over $\mathbb{F}_q$ if and only if the discrete polymatroid $n\mathbb{D}(\mathbb{M})$ is representable over $\mathbb{F}_q.$  In this way, the notion of multi-linear representation of dimension $n$ of a matroid $\mathbb{M}$ can be viewed equivalently in terms of the notion of representation of the discrete polymatroid $ n\mathbb{D}(\mathbb{M}).$ 


While the multi-linear representation of any matroid can be viewed equivalently in terms of the representation of an appropriate discrete polymatroid, the converse is not true. For example, consider the representable discrete polymatroid $\mathbb{D}$ given in Example \ref{ex7}. The vector subspaces $V_1, V_2$ and $V_3$ in Example \ref{ex8} which form a representation for $\mathbb{D}$ cannot form a multi-linear representation for any matroid. The reason for this is as follows: For vector subspaces $V_1, V_2$ and $V_3$ to form a multi-linear representation of a matroid, $dim(\sum_{i \in X}V_i)$ should be  a multiple of $n,$ for some integer $n,$ for all $X \subseteq \lceil 3 \rfloor.$ Since $dim(V_3)=1,$ the only possibility for $n$ is 1. In that case, the matroid for which $V_1,V_2$ and $V_3$ form a multi-linear representation of dimension 1 should have a rank function $\Upsilon$ which satisfies $\Upsilon(\{1\})=2,$ which is not possible since $\Upsilon(\{1\})\leq 1.$
\subsection{Excluded and Minimal Excluded Vectors for a Discrete Polymatroid}
\label{II D}
As explained in the previous subsection, the vectors which belong to a discrete polymatroid can be viewed as the generalization of independent sets of matroid and the basis vectors of a discrete polymatroid can be viewed as the generalization of basis sets of a matroid. To the best of our knowledge, the notions of dependent sets and circuits of a matroid have not been generalized to discrete polymatroids. In this subsection, we introduce the notions of excluded and minimal excluded vectors for discrete polymatroids, which when specialized to a matroid reduce to the well known notions of dependent sets and circuits respectively. These notions are useful towards constructing networks and index coding problems from discrete polymatroids in Section \ref{IV B} and Section \ref{VI B}.   

We define an excluded vector for a discrete polymatroid $\mathbb{D}$ as follows:
\begin{definition}
For a discrete polymatroid $\mathbb{D}$ on the ground set $\lceil r \rfloor,$ a vector $u \in \mathbb{Z}^r_{\geq 0}$ is said to be an excluded vector if the $i^{\text{th}}$ component of $u$ is less than or equal to $\rho(\lbrace i \rbrace), \forall i \in \lceil r \rfloor,$ and $u \notin \mathbb{D}.$
\end{definition}

Let $\mathcal{D}(\mathbb{D})$ denote the set of excluded vectors for the discrete polymatroid $\mathbb{D}.$


\begin{example}
\label{ex16}
For the discrete polymatroid considered in Example \ref{ex7}, the set of excluded vectors is given by $\{(0,2,1),(1,2,1),(2,1,1),(2,2,0),(2,2,1)\}.$
\end{example}

The notion of excluded vectors for discrete polymatroids can be viewed as the generalization of the notion of dependent sets for matroids. For a matroid $\mathbb{M},$ the set of excluded vectors for $\mathbb{D}(\mathbb{M})$ uniquely determines the set of dependent sets for $\mathbb{M}.$ The set of dependent sets for $\mathbb{M}$ is given by $\{(u)_{>0}: u \in \mathcal{D}(\mathbb{D}(\mathbb{M}))\}.$ Conversely, for a dependent set $D$ for $\mathbb{M},$ the vector $\sum_{i \in D} \epsilon_{i,r}$ is an excluded vector for $\mathbb{D}(\mathbb{M}).$
\begin{example}
\label{ex17}
For the uniform matroid $U_{2,4}$ considered in Example \ref{ex11}, the set of dependent sets is given by $\{\{1,2,3\},\{1,2,4\},\{1,3,4\},\{2,3,4\},\{1,2,3,4\}\}.$ The set of excluded vectors for $\mathbb{D}(U_{2,4})$ is given by $\{(1,1,1,0),(1,1,0,1),(1,0,1,1),(0,1,1,1),(1,1,1,1)\}.$
\end{example}

We define a minimal excluded vector for a discrete polymatroid $\mathbb{D}$ as follows:
\begin{definition}
An excluded vector $u \in \mathcal{D}(\mathbb{D})$ is said to be a minimal excluded vector, if there does not exist $v \in \mathcal{D}(\mathbb{D})$ for which $v<u.$
\end{definition}

 Let $\mathcal{C}(\mathbb{D})$ denote the set of minimal excluded vectors for the discrete polymatroid $\mathbb{D}.$ 

\begin{example}
\label{ex199}
For the discrete polymatroid considered in Example \ref{ex7}, the set of minimal excluded vectors is given by $\{(0,2,1),(2,1,1),(2,2,0)\}.$
\end{example}

The notion of minimal excluded vectors for discrete polymatroids can be viewed as the generalization of the notion of circuits for matroids.
The set of minimal excluded vectors for the discrete polymatroid $\mathbb{D}(\mathbb{M})$ uniquely determines the set of circuits for the matroid $\mathbb{M}.$ The set of circuits of $\mathbb{M}$ is given by $\{(u)_{>0}: u \in \mathcal{C}(\mathbb{D}(\mathbb{M}))\}.$ Conversely, for a circuit $C$ for $\mathbb{M},$ the vector $\sum_{i \in C} \epsilon_{i,r}$ is a minimal excluded vector for $\mathbb{D}(\mathbb{M}).$

\begin{example}
\label{ex200}
For the uniform matroid $U_{2,4}$ considered in Example \ref{ex11}, the set of circuits is given by $\{\{1,2,3\},\{1,2,4\},\{1,3,4\},\{2,3,4\}\}.$ The set of minimal excluded vectors for the discrete polymatroid $\mathbb{D}(U_{2,4})$ is $\{(1,1,1,0),(1,1,0,1),(1,0,1,1),(0,1,1,1)\}.$
\end{example}

\section{Network Coding and Index Coding - Preliminaries}
In Section \ref{III A}, the basic definitions and notations related to networks and their solvability are defined. In Section \ref{III B}, the preliminaries related to the index coding problem are provided.
\subsection{Network Coding}
\label{III A}
A communication network consists of a directed acyclic graph without self-loops, with the set of vertices denoted by $\mathcal{V}$ and the set of edges denoted by $\mathcal{E}.$ For an edge $e$ directed from $x$ to $y,$ $x$ is called the head vertex of $e$ denoted by $head(e)$ and $y$ is called the tail vertex of $e$ denoted by $tail(e).$ The in-degree of an edge $e$ is the in-degree of its head vertex and out-degree of $e$ is the out-degree of its tail vertex. The messages in the network are generated at edges with in-degree zero, which are called the input edges of the network and let $\mathcal{S}  \subset \mathcal{E}$ denote the set of input edges with $\vert \mathcal{S} \vert =m.$ Let $x_i, i \in \lceil m \rfloor,$ denote the row vector of length $k_i$ generated at the $i^{\text{th}}$ input edge of the network. Let $x=[ x_1,x_2,\dotso,x_m ]$ denote the row vector obtained by the concatenation of the $m$ message vectors. An edge which is not an input edge is referred to as an intermediate edge. All the intermediate edges in the network are assumed to carry a vector of dimension $n$ over $\mathbb{F}_q.$ A vertex $v \in \mathcal{V}$ demands the set of messages generated at the input edges given by $\delta(v) \subseteq \mathcal{S},$ where $\delta$ is called the demand function of the network. $In(v)$ denotes the set of incoming edges of a vertex $v $ ($In(v)$ includes the intermediate edges as well as the input edges which are incoming edges at node $v$) and $Out(v)$ denotes the union of the set of intermediate edges originating from $v$ and $\delta(v).$

A $(k_1,k_2,\dotso,k_m;n)$-FNC solution over $\mathbb{F}_q$ is a collection of functions $\lbrace \psi_e : \mathbb{F}_q^{\sum_{i=1}^{m}k_i} \rightarrow \mathbb{F}_q^{k_i} , e \in \mathcal{S}\rbrace \cup \lbrace \psi_e : \mathbb{F}_q^{\sum_{i=1}^{m}k_i} \rightarrow \mathbb{F}_q^{n} , e \in \mathcal{E}\setminus\mathcal{S}\rbrace,$ where the function $\psi_e$ is called the global encoding function associated with edge $e.$  The global encoding functions satisfy the following conditions:
\begin{description}
\item [(N1):]
$\psi_i(x)=[x_i], \forall i \in \mathcal{S},$
\item [(N2):]
For every $v \in \mathcal{V},$ for all $j \in \delta(v),$ there exists a function $\chi_{v,j} : \mathbb{F}_{q}^{n\vert In(v)\vert} \rightarrow \mathbb{F}_q^{k_j}$ called the decoding function for message $j$ at node $v$ which satisfies \\$\chi_{v,j}(\psi_{i_1}(x),\psi_{i_2}(x), \dotso, \psi_{i_t(x)})=x_{j},$  where $In(v)=\{i_1,i_2, \dotso i_t \} .$
\item [(N3):]
For all \mbox{$i \in \mathcal{E} \setminus \mathcal{S},$} there exists \mbox{$\phi_i: \mathbb{F}_q^{n \vert In(head(i))\vert} \rightarrow \mathbb{F}_q^n$} such that \\ \mbox{$\psi_i(x)=\phi_i(\psi_{i_1}(x),\psi_{i_2}(x), \dotso, \psi_{i_r}(x)),$}  where $In(head(i))=\{i_1,i_2, \dotso i_r \}.$ The function $\phi_i$ is called the local encoding function associated with edge $i.$ 
\end{description}
%



An FNC solution with \mbox{$k_1=k_2=\dotso=k_m=n=1$} is called a scalar solution and an FNC solution for which \mbox{$k_1=k_2=\dotso=k_m=n=k$} is called a vector solution of dimension $k.$ 
A solution for which all the local encoding functions and hence the global encoding functions are linear is said to be a linear solution. For a linear $(k_1,k_2,\dotso,k_m;n)$-FNC solution, the global encoding function $\psi_i, i \in \mathcal{E}\setminus \mathcal{S},$ is of the form $\psi_i(x)= x M_i,$ where $M_i$ is an $\sum_{i=1}^{m}k_i \times n$ matrix over $\mathbb{F}_q$ called the global encoding matrix associated with edge $i.$

If a network admits a $(k_1,k_2,\dotso,k_m;n)$-FNC solution, then $(k_1/n,k_2/n,\dotso,k_m/n)$ is said to be an achievable rate vector and the scalar $\frac{1}{m}\sum_{i=1}^m \frac{k_i}{n}$ is said to be an achievable average rate \cite{DoFrZe_FNC}. The closure of the set of all achievable rate vectors is said to be the achievable rate region of the network and the supremum of all achievable average rates is said to be the average coding capacity of the network \cite{DoFrZe_FNC}. 
 A $(k,k,\dotso,k;n)$-FNC solution is said to be a uniform FNC solution and the scalar $k/n$ is called a uniform achievable rate. The supremum of all uniform achievable rates is defined to be the uniform coding capacity of the network \cite{DoFrZe_FNC}.
\subsection{Index Coding}
\label{III B}
Most of the definitions and notations in this subsection have been adapted from \cite{RoSpGe}.

An index coding problem $\mathcal{I}(X,\mathcal{R})$ includes
\begin{itemize}
\item
a set of messages $X=\{x_1,x_2,\dotso,x_m\}$ and 
\item
a set of receiver nodes $\mathcal{R}\subseteq \{(x,H);x \in X, H \subseteq X \setminus \{x\}\}.$
\end{itemize}

For a receiver node $R=(x,H)\in \mathcal{R},$ $x$ denotes the message demanded by $R$ and $H$ denotes the side information possessed by $R.$ 
Each one of the messages $x_i, i \in \{1,2,\dotso,m\},$ is assumed to be row vectors of length $n,$ over an alphabet set, which in this paper is assumed to be a finite field $\mathbb{F}_q$ of size $q.$   Let $y=[x_1\;x_2\dotso x_m]$ denote the row vector of length $nm$ obtained by the concatenation of the $m$ message vectors.

An index coding solution (also referred to as an index code) over $\mathbb{F}_q$ of length $c$ and dimension $n$ for the index coding problem $\mathcal{I}(X,\mathcal{R})$ is a function $f: \mathbb{F}_q^{nm} \rightarrow \mathbb{F}_q^{c},$ $c$ an integer, which satisfies the following condition: For every $R=(x,H) \in \mathcal{R},$ there exists a function $\psi_R: \mathbb{F}_q^{n \vert H \vert +c} \rightarrow \mathbb{F}_q^{n}$ such that $\displaystyle{\psi_R((x_i)_{i \in H},f(y))=x,\forall y \in \mathbb{F}^{nm}_q.}$
The function $\psi_R$ is referred to as the decoding function at receiver $R.$

An index coding solution for which $n=1$ is called a scalar solution; otherwise it is called a vector solution. An index coding solution is said to be linear if the functions $f$ and $\psi_R$ are linear.

For an index coding problem $\mathcal{I}(X,\mathcal{R}),$ define 
{\small$\displaystyle{\mathcal{M}(\mathcal{I}(X,\mathcal{R}))=\max_{Y \subseteq X}\vert \{R =(x,H)\in \mathcal{R}: H=Y\}\vert.}$} The length $c$ and dimension $n$ of an index coding solution for the index coding problem $\mathcal{I}(X,\mathcal{R})$ satisfy the condition $c/n \geq \mathcal{M}(\mathcal{I}(X,\mathcal{R}))$ \cite{RoSpGe}. 
\begin{definition}[\cite{RoSpGe}]
An index coding solution for which $c/n = \mathcal{M}(\mathcal{I}(X,\mathcal{R}))$ is said to be a perfect index coding solution.
\begin{example}
Consider the index coding problem with the message set$X=\{x_1,x_2,x_3,x_4\}$ and 
the set of receiver nodes $$\mathcal{R}=\{(x_3,\{x_1,x_2\}),(x_4,\{x_1,x_2\})(x_1,\{x_2,x_3,x_4\}),$$ $$\hspace{6 cm}(x_2,\{x_1,x_3,x_4\})\}.$$ For this index coding problem, in order to satisfy the demands of receiver nodes $(x_3,\{x_1,x_2\})$ and $(x_4,\{x_1,x_2\})$ which contain the same side information, we need to have $c/n \geq 2.$ In other words, for this index coding problem, we have, $\mathcal{M}(\mathcal{I}(X,\mathcal{R}))=2.$ A scalar perfect linear index coding solution over $\mathbb{F}_q$ with $c=2$ exists for this index coding problem and is given by $f(X)=[x_1+x_2+x_3; x_3+x_4].$
\end{example}
\end{definition}
\section{Linear Fractional Network Coding and Representable Discrete Polymatroids}
In this section, we obtain results on the connection between representable discrete polymatroids and linear FNC. In Section \ref{IV A}, the notion of a \textit{$(k_1,k_2,\dotso,k_m;n)$-discrete polymatroidal network} is introduced and it is shown that a linear $(k_1,k_2,\dotso,k_m;n)$-FNC solution exists for a network  if and only if the network is $(k_1,k_2,\dotso,k_m;n)$-discrete polymatroidal with respect to a discrete polymatroid representable over $\mathbb{F}_q.$ In Section \ref{IV B}, an algorithm to construct networks from a class of discrete polymatroids is provided. If the discrete polymatroid from which the network is constructed is representable over $\mathbb{F}_q,$ then the constructed network admits a linear FNC solution over $\mathbb{F}_q.$

\subsection{Linear Fractional Solvability of Networks and Representation of Discrete Polymatroids}
\label{IV A}
The notion of a matroidal network was introduced by Dougherty et. al. in \cite{DoFrZe}. In \cite{DoFrZe}, it was shown that if a scalar linear solution exists for a network, then the network is matroidal with respect to a representable matroid. The converse that a scalar linear solution exists for a network if the network is matroidal with respect to a representable matroid was shown in \cite{KiMe}. In this section, we generalize this result to networks which admit linear FNC solutions.


We define a $(k_1,k_2,\dotso,k_m;n)$-discrete polymatroidal network as follows:
\begin{definition}
\label{defn_DPMN}
A network is said to be $(k_1,k_2,\dotso,k_m;n)$-discrete polymatroidal with respect to a discrete polymatroid $\mathbb{D},$ if there exists a map $f: \mathcal{E} \rightarrow \lceil r \rfloor$ which satisfies the following conditions:
\begin{description}
\item [(DN1):]
~~~$f$ is one-to-one on the elements of $\mathcal{S}.$
\item [(DN2):]
~~~\mbox{\small$\sum_{i \in f(\mathcal{S})} k_i \epsilon_{i,r}\in \mathbb{D}.$}
\item [(DN3):]
~~~\mbox{\small$\forall i \in f(\mathcal{S}),$} \mbox{\small$\rho(\{i\})=k_i$} and \mbox{\small$\displaystyle{\max_{i\in \mathcal{E}\setminus \mathcal{S}} \rho(f(\{i\}))=n}.$}
\item [(DN4):]
~~~\mbox{\small$\rho(f(In(x)))=\rho(f(In(x) \cup Out(x))), \forall x \in \mathcal{V}.$}
\end{description}
\end{definition}

 It can be verified that a network is matroidal with respect to a matroid $\mathbb{M}$ if and only if it is $(1,1,\dotso,1;1)$-discrete polymatroidal with respect to $\mathbb{D}(\mathbb{M}).$ In this way, for a discrete polymatroid $\mathbb{D}(\mathbb{M}),$ the notion of a discrete polymatroidal network with respect to $\mathbb{D}(\mathbb{M})$ is equivalent to the notion of a matroidal network with respect to $\mathbb{M}.$  


The connection between the linear fractional solvablity over $\mathbb{F}_q$ for a network and the network being discrete polymatroidal with respect to a discrete polymatroid representable over $\mathbb{F}_q$ is established in the following theorem. 

\begin{theorem}
\label{thm3}
A network has a linear $(k_1,k_2,\dotso,k_m;n)$-FNC solution over $\mathbb{F}_q,$ if and only if it is $(k_1,k_2,\dotso,k_m;n)$-discrete polymatroidal with respect to a discrete polymatroid  $\mathbb{D}$ representable over $\mathbb{F}_q.$
\end{theorem}
\begin{proof}
Let the edge set $\mathcal{E}$ of the network be $\lceil l \rfloor$ and let the message set $\mathcal{S}$ be $\lceil m \rfloor.$ The edges are assumed to be arranged in the ancestral ordering which exists since the networks considered are acyclic and the set of intermediate edges in the network is assumed to be $\{m+1,m+2,\dotso l\}.$ We first prove the `if' part of the theorem. Assume that the network considered is $(k_1,k_2,\dotso,k_m;n)$-discrete polymatroidal with respect to a representable discrete polymatroid $\mathbb{D}(V_1,V_2,\dotso,V_r)$ on the ground set $\lceil r \rfloor.$ For brevity, the discrete polymatroid $\mathbb{D}(V_1,V_2,\dotso, V_r)$ is denoted as $\mathbb{D}.$ Let $f$ denote the network-discrete polymatroid mapping. Since, $f$ is one-to-one on the elements of $\mathcal{S},$ assume $f(\mathcal{S})$ to be $\lceil m \rfloor.$

It is claimed that without loss of generality, the set $\lceil r \rfloor$ can be taken to be the image of the map $f.$ Otherwise, if the image of the map $f$ is $\{i_1,i_2,\dotso i_t\},$ then we show that the network is $(k_1,k_2,\dotso,k_m;n)$-discrete polymatroidal with respect to the discrete polymatroid $\mathbb{D}(V_{i_1},V_{i_2},\dotso,V_{i_t}),$ with the same network-discrete polymatroid mapping $f.$ (DN1), (DN3) and (DN4) follow from the fact that the network is discrete polymatroidal with respect to $\mathbb{D}(V_1,V_2,\dotso,V_r).$ To show that (DN2) is satisfied, it needs to be shown that the vector $u = \sum_{i \in \lceil m \rfloor} k_i \epsilon_{i,t} \in \mathbb{D}(V_{i_1},V_{i_2},\dotso,V_{i_t}).$ Let $v$ denote the vector defined as $\sum_{i \in \lceil m \rfloor} k_i \epsilon_{i,r}.$ Since, the network is discrete polymatroidal with respect to $\mathbb{D}(V_1,V_2,\dotso,V_r),$ from (DN2), we have,
\begin{equation}
\label{eqn_thm1_proof}
\vert v(A) \vert \leq dim\left(\sum_{j \in A} V_j\right), \forall A \subseteq \lceil r \rfloor.
\end{equation}
To show that $u \in \mathbb{D}(V_{i_1},V_{i_2},\dotso,V_{i_t}),$ it needs to be shown that $\vert u (A)\vert \leq dim(\sum_{j \in A} V_{i_j}), \forall A \subseteq \lceil t \rfloor$ which follows from \eqref{eqn_thm1_proof} and from the fact that any subset of $\{i_1,i_2,\dotso,i_t\}$ is also a subset of $\lceil r \rfloor.$ 

Next it will be shown that $dim(\sum_{i=1}^r V_i)=\sum_{i=1}^m k_i.$ Define $s_0=\lceil m \rfloor.$ Let $s_1=s_0 \cup \lbrace f(m+1)\rbrace.$ Since the edges in the set $\{m+1,m+2, \dotso, l \}$ are arranged in ancestral ordering, $In(head(m+1))$ is contained in $s_0.$ Hence, from (DN3) we have $\rho(s_1)=dim(\sum_{i \in s_0} V_i + V_{f(m+1)})=  dim(\sum_{i \in s_0}V_i)=\rho(s_0).$ Iteratively, defining $s_{i+1}=s_{i} \cup \lbrace f(m+i+1)\rbrace,$ using a similar argument, we have $\rho(s_{i+1})=\rho(s_0).$ Hence, we have $\rho(s_{l-m})=\rho(s_0)=\rho(\lceil m \rfloor).$ But $s_{l-m}=\lceil r \rfloor,$ since the image of $f$ is $\lceil r \rfloor.$ Hence, we have, $\rho(\lceil r \rfloor)=\rho(\lceil m \rfloor).$ From (DN2), we have $\sum_{i \in \lceil m\rfloor}k_i \epsilon_{i,r} \in \mathbb{D}.$ Hence from the definition of a discrete polymatroid, we have $\sum_{i=1}^m k_i \leq \rho(\lceil m \rfloor).$ From (D2), we have $\rho(\lceil m \rfloor)\leq \rho(\{1\})+\rho(\{2,3,\dotso,m\})\dotso \leq \sum_{i=1}^m \rho(\{i\}).$ We have $\rho(\lceil m \rfloor)\leq \sum_{i=1}^m k_i,$ since from (DN3) $\rho(\{i\})=k_i,$ for $i \in f(\mathcal{S}).$ As a result $dim(\sum_{i=1}^r V_i)=\rho(\lceil r \rfloor)=\rho(\lceil m\rfloor)=\sum_{i=1}^m k_i.$ 

The vector subspace $V_i, i \in \lceil r \rfloor, i \notin \lceil m \rfloor$ can be described by a matrix $A_i$ of size $\sum_{i=1}^m k_i \times n$ whose columns span $A_i.$ For $i \in \lceil m \rfloor,$ the vector subspace $V_i$ can be written as the column span of a matrix $A_i$ of size $\sum_{i=1}^m k_i \times k_i.$ Let $B=[A_1 A_2 \dotso A_m].$ Since $dim({\sum_{i=1}^m}V_i)=\sum_{i=1}^m k_i,$ $B$ is invertible and can be taken to be the $\sum_{i=1}^m k_i \times \sum_{i=1}^m k_i$ identity matrix (Otherwise, it is possible to define $A'_i=B^{-1} A_i$ and $V'_i$ to be the column span of $A'_i$ so that $D(V'_1,V'_2,\dotso,V'_r)=D(V_1,V_2,\dotso,V_r)$).

The claim is that taking the global encoding matrix of edge $i$ to be $A_{f(i)}$ forms a \\$(k_1,k_2,\dotso,k_m;n)$-FNC solution for the network. The proof of the claim is as follows: Since $B$ is an identity matrix, $A_i x=x_i$ for $i \in \lceil m \rfloor$ and hence (N1) is satisfied. For any node $v$ in the network, from (DN4) it follows that $dim(\sum_{i \in In(v) \cup Out(v)}V_{f(i)})= dim(\sum_{i \in In(v)}V_{f(i)}).$ Hence, $\forall j \in Out(v),$ $A_{f(j)}$ can be written as $\sum_{i \in In(V)} W_i A_{f(i)}.$ Hence, (N2) and (N3) are satisfied. This completes the `if' part of the proof.

For the `only if' part of the proof, assume that the network considered admits a $(k_1.k_2,\dotso,k_m;n)$-FNC solution, with $A_i, i \in \lceil l \rfloor,$ being the global encoding matrix associated with edge $i.$ Consider the discrete polymatroid $D(V_1,V_2,\dotso,V_l),$ where $V_i$ denotes the column span of $A_i.$ Let $f(i)=i, i\in \lceil l \rfloor$ be the mapping from the edge set of the network to the ground set of the discrete polymatroid. It can be verified that the network is $(k_1,k_2,\dotso;n)$-discrete polymatroidal with respect to $\mathbb{D}(V_1,V_2,\dotso,V_l).$ 
\end{proof}

Specializing for $k_i=n=1,i \in \lceil m \rfloor,$ from Theorem \ref{thm3}, we obtain the following corollary:

\begin{corollary}
\label{cor_scalar}
A scalar linear solution exists for a network over $\mathbb{F}_q$ if and only if the network is matroidal with respect to a matroid representable over $\mathbb{F}_q.$ 
\end{corollary}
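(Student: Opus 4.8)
The plan is to obtain the corollary as the $k=1$ specialisation of Theorem~\ref{thm1}, after translating the discrete-polymatroid language into matroid language using the dictionary of Sections~\ref{II C} and~\ref{IV A}. First I would note that, by the definitions in Section~\ref{III A}, a scalar linear solution over $\mathbb{F}_q$ is precisely a $1$-dimensional vector linear solution over $\mathbb{F}_q$ (the case $k_1=\dots=k_m=n=1$). Hence the $k=1$ instance of Theorem~\ref{thm1} already states: the network admits a scalar linear solution over $\mathbb{F}_q$ if and only if it is discrete polymatroidal with respect to some discrete polymatroid $\mathbb{D}$ representable over $\mathbb{F}_q$ with $\rho_{max}(\mathbb{D})=1$. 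So the whole task reduces to identifying ``discrete polymatroidal with respect to a representable $\mathbb{D}$ having $\rho_{max}(\mathbb{D})=1$'' with ``matroidal with respect to a representable matroid''.

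For that identification I would argue as follows. If $\rho_{max}(\mathbb{D})=1$ then $\rho(\{i\})\le 1$ for every $i\in\lceil r\rfloor$, and iterating submodularity (D2) together with monotonicity (D1) (exactly the chaining used inside the proof of Theorem~\ref{thm1}) gives $\rho(A)\le\sum_{i\in A}\rho(\{i\})\le\vert A\vert$ for all $A\subseteq\lceil r\rfloor$; combined with (D1)--(D3), this is precisely the characterisation of a matroid rank function recalled in Section~\ref{II C}, so $\rho$ is the rank function of a matroid $\mathbb{M}$ and $\mathbb{D}=\mathbb{D}(\mathbb{M})$. Conversely $\rho_{max}(\mathbb{D}(\mathbb{M}))\le 1$ always holds. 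I would then invoke the two equivalences recorded earlier: a network is matroidal with respect to $\mathbb{M}$ if and only if it is discrete polymatroidal with respect to $\mathbb{D}(\mathbb{M})$ (when $\rho_{max}=1$, (DN2) says exactly that $f(\mathcal{S})$ is an independent set of $\mathbb{M}$, and (DN1), (DN3) are the remaining defining conditions of a matroidal network), and a family of subspaces is a representation of $\mathbb{M}$ over $\mathbb{F}_q$ if and only if it is a representation of $\mathbb{D}(\mathbb{M})$ over $\mathbb{F}_q$, so $\mathbb{M}$ is representable over $\mathbb{F}_q$ exactly when $\mathbb{D}(\mathbb{M})$ is. Chaining these with the $k=1$ case of Theorem~\ref{thm1} gives both directions of the corollary.

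The only point that needs a word of care is the degenerate case in which the matroid has rank $0$, where $\rho_{max}(\mathbb{D}(\mathbb{M}))=0\neq 1$ and Theorem~\ref{thm1} does not apply verbatim; but in a rank-$0$ matroid the only independent set is $\emptyset$, so a matroidal network with respect to such a matroid must have $\mathcal{S}=\emptyset$, i.e.\ no messages, and is solvable trivially. Apart from this triviality the proof is a pure unwinding of definitions, so I do not anticipate any real obstacle: all the genuine content---in particular the explicit construction of the global encoding matrices from a representation in the ``if'' direction---is already carried out in the proof of Theorem~\ref{thm1}.
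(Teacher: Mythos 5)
Your proposal is correct and follows essentially the same route as the paper, which obtains the corollary simply by specializing Theorem~\ref{thm1} to $k=1$ and relying on the equivalences of Section~\ref{II C} and Section~\ref{IV A} (a discrete polymatroid with $\rho_{max}=1$ is $\mathbb{D}(\mathbb{M})$ for a matroid $\mathbb{M}$, matroidal w.r.t.\ $\mathbb{M}$ iff discrete polymatroidal w.r.t.\ $\mathbb{D}(\mathbb{M})$, and representability of $\mathbb{M}$ iff that of $\mathbb{D}(\mathbb{M})$). Your explicit unwinding of these identifications, including the degenerate rank-$0$ case, is exactly the detail the paper leaves implicit.
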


Note that the statement in Corollary \ref{cor_scalar} is more general than the statement of Theorem 13 in \cite{KiMe} stated as follows: ``\textit{A network is scalar-linearly solvable over a finite field of characteristic p if and only if the network is a matroidal network associated with a representable matroid over a finite field of characteristic p.}'' For a network which is matroidal with respect to a matroid representable over a field $\mathbb{F}_q,$ Theorem 13 in \cite{KiMe} implies that a scalar linear solution exists for the network over a sufficiently large field whose characteristic is the same as that of $\mathbb{F}_q.$ In contrast, the result in Corollary \ref{cor_scalar} above implies that such a scalar linear solution exists over the field $\mathbb{F}_q$ itself, and there is no need to look for solutions over larger fields. 

It is important to note that the discrete polymatroid $\mathbb{D}$ in Theorem \ref{thm3} needs not be unique.  A network can admit more than one linear FNC solution over $\mathbb{F}_q$ and from these solutions it may be possible to obtain multiple discrete polymatroids with respect to which the network is discrete polymatroidal, as illustrated in Example \ref{ex18} below. 

Also, note that Theorem \ref{thm3} characterizes the linear fractional solvability of a network in terms of discrete polymatroid representation. As mentioned earlier in Section \ref{II C}, not all representations of discrete polymatroids can be viewed as the multi-linear representations of matroids. Vector linear solvability of networks cannot be characterized using multi-linear representations of matroids, whereas they can be characterized using representations of discrete polymatroids. This fact is also illustrated in Example \ref{ex18} below.

In Example \ref{ex18} below, we consider the popular example of M-network introduced in \cite{MeEfHoKa}, which was shown to have a 2 dimensional vector linear solution, which is in fact a vector routing solution, but does not admit scalar linear solution over any field. It was shown in \cite{DoFrZe} that the M-network is not matroidal with respect to any representable matroid. But since the M-network admits a vector linear solution of dimension 2, from Theorem 1, it follows that the M-network is $(2,2,2,2;2)$-discrete polymatroidal with respect to a representable discrete polymatroid, as discussed in the following example.
\begin{example}
\label{ex18}
\begin{figure*}[t]
\tiny
\begin{align}
\label{matrix_M_network_1}
&\hspace{3 cm}\underbrace{\hspace{-.01 cm}\left[\begin{matrix} 1&\hspace{-.2 cm}0\\ 0&\hspace{-.2 cm}1\\0&\hspace{-.2 cm}0\\0&\hspace{-.2 cm}0\\0&\hspace{-.2 cm}0\\0&\hspace{-.2 cm}0\\0&\hspace{-.2 cm}0 \\0&\hspace{-.2 cm}0\end{matrix}\right.}_{A_1} \quad\quad \underbrace{\begin{matrix} 0&\hspace{-.2 cm}0\\ 0&\hspace{-.2 cm}0\\1&\hspace{-.2 cm}0\\0&\hspace{-.2 cm}1\\0&\hspace{-.2 cm}0\\0&\hspace{-.2 cm}0\\0&\hspace{-.2 cm}0\\0&\hspace{-.2 cm}0 \end{matrix}}_{A_2} \quad\quad \underbrace{\begin{matrix} 0&\hspace{-.2 cm}0\\ 0&\hspace{-.2 cm}0\\0&\hspace{-.2 cm}0\\0&\hspace{-.2 cm}0\\1&\hspace{-.2 cm}0\\0&\hspace{-.2 cm}1\\0&\hspace{-.2 cm}0\\0&\hspace{-.2 cm}0 \end{matrix}}_{A_3} \quad\quad \underbrace{\begin{matrix} 0&\hspace{-.2 cm}0\\ 0&\hspace{-.2 cm}0\\0&\hspace{-.2 cm}0\\0&\hspace{-.2 cm}0\\0&\hspace{-.2 cm}0\\0&\hspace{-.2 cm}0\\1&\hspace{-.2 cm}0\\0&\hspace{-.2 cm}1 \end{matrix}}_{A_4} \quad\quad \underbrace{\begin{matrix} 1&\hspace{-.2 cm}0\\ 0&\hspace{-.2 cm}0\\0&\hspace{-.2 cm}0\\0&\hspace{-.2 cm}1\\0&\hspace{-.2 cm}0\\0&\hspace{-.2 cm}0\\0&\hspace{-.2 cm}0\\0&\hspace{-.2 cm}0 \end{matrix}}_{A_5} \quad\quad \underbrace{\begin{matrix} 0&\hspace{-.2 cm}0\\ 1&\hspace{-.2 cm}0\\0&\hspace{-.2 cm}1\\0&\hspace{-.2 cm}0\\0&\hspace{-.2 cm}0\\0&\hspace{-.2 cm}0\\0&\hspace{-.2 cm}0\\0&\hspace{-.2 cm}0 \end{matrix}}_{A_6} \quad\quad \underbrace{\begin{matrix} 0&\hspace{-.2 cm}0\\ 0&\hspace{-.2 cm}0\\0&\hspace{-.2 cm}0\\0&\hspace{-.2 cm}0\\1&\hspace{-.2 cm}0\\0&\hspace{-.2 cm}0 \\0&\hspace{-.2 cm}0\\0&\hspace{-.2 cm}1\end{matrix}}_{A_7} \quad\quad \underbrace{\begin{matrix} 0&\hspace{-.2 cm}0\\ 0&\hspace{-.2 cm}0\\0&\hspace{-.2 cm}0\\0&\hspace{-.2 cm}0\\0&\hspace{-.2 cm}0\\1&\hspace{-.2 cm}0\\0&\hspace{-.2 cm}1\\0&\hspace{-.2 cm}0 \end{matrix}}_{A_8}\quad\quad \underbrace{\begin{matrix} 0&\hspace{-.2 cm}0\\ 1&\hspace{-.2 cm}0\\0&\hspace{-.2 cm}0\\0&\hspace{-.2 cm}0\\0&\hspace{-.2 cm}1\\0&\hspace{-.2 cm}0\\0&\hspace{-.2 cm}0\\0&\hspace{-.2 cm}0 \end{matrix}}_{A_9}\quad\quad \underbrace{\begin{matrix} 0&\hspace{-.2 cm}0\\ 1&\hspace{-.2 cm}0\\0&\hspace{-.2 cm}0\\0&\hspace{-.2 cm}0\\0&\hspace{-.2 cm}0\\0&\hspace{-.2 cm}0\\0&\hspace{-.2 cm}0\\0&\hspace{-.2 cm}1 \end{matrix}}_{A_{10}}\quad\quad \underbrace{\begin{matrix} 0&\hspace{-.2 cm}0\\ 0&\hspace{-.2 cm}0\\1&\hspace{-.2 cm}0\\0&\hspace{-.2 cm}0\\0&\hspace{-.2 cm}1\\0&\hspace{-.2 cm}0\\0&\hspace{-.2 cm}0\\0&\hspace{-.2 cm}0 \end{matrix}}_{A_{11}}\quad\quad \underbrace{\left.\begin{matrix} 0&\hspace{-.2 cm}0\\ 0&\hspace{-.2 cm}0\\1&\hspace{-.2 cm}0\\0&\hspace{-.2 cm}0\\0&\hspace{-.2 cm}0\\0&\hspace{-.2 cm}0\\0&\hspace{-.2 cm}0\\0&\hspace{-.2 cm}1 \end{matrix}\right]}_{A_{12}}
\\
\nonumber
\\
 \hline
 \nonumber
 \\
\nonumber
&\underbrace{\hspace{-.01 cm}\left[\begin{matrix} 1&\hspace{-.2 cm}0\\ 0&\hspace{-.2 cm}1\\0&\hspace{-.2 cm}0\\0&\hspace{-.2 cm}0\\0&\hspace{-.2 cm}0\\0&\hspace{-.2 cm}0\\0&\hspace{-.2 cm}0 \\0&\hspace{-.2 cm}0\end{matrix}\right.}_{A'_1} \quad\quad \underbrace{\begin{matrix} 0&\hspace{-.2 cm}0\\ 0&\hspace{-.2 cm}0\\1&\hspace{-.2 cm}0\\0&\hspace{-.2 cm}1\\0&\hspace{-.2 cm}0\\0&\hspace{-.2 cm}0\\0&\hspace{-.2 cm}0\\0&\hspace{-.2 cm}0 \end{matrix}}_{A'_2} \quad\quad \underbrace{\begin{matrix} 0&\hspace{-.2 cm}0\\ 0&\hspace{-.2 cm}0\\0&\hspace{-.2 cm}0\\0&\hspace{-.2 cm}0\\1&\hspace{-.2 cm}0\\0&\hspace{-.2 cm}1\\0&\hspace{-.2 cm}0\\0&\hspace{-.2 cm}0 \end{matrix}}_{A'_3} \quad\quad \underbrace{\begin{matrix} 0&\hspace{-.2 cm}0\\ 0&\hspace{-.2 cm}0\\0&\hspace{-.2 cm}0\\0&\hspace{-.2 cm}0\\0&\hspace{-.2 cm}0\\0&\hspace{-.2 cm}0\\1&\hspace{-.2 cm}0\\0&\hspace{-.2 cm}1 \end{matrix}}_{A'_4} \quad\quad \underbrace{\begin{matrix} 1&\hspace{-.2 cm}0\\ 0&\hspace{-.2 cm}0\\0&\hspace{-.2 cm}0\\0&\hspace{-.2 cm}1\\0&\hspace{-.2 cm}0\\0&\hspace{-.2 cm}0\\0&\hspace{-.2 cm}0\\0&\hspace{-.2 cm}0 \end{matrix}}_{A'_5} \quad\quad \underbrace{\begin{matrix} 0&\hspace{-.2 cm}0\\ 1&\hspace{-.2 cm}0\\0&\hspace{-.2 cm}1\\0&\hspace{-.2 cm}0\\0&\hspace{-.2 cm}0\\0&\hspace{-.2 cm}0\\0&\hspace{-.2 cm}0\\0&\hspace{-.2 cm}0 \end{matrix}}_{A'_6} \quad\quad \underbrace{\begin{matrix} 0&\hspace{-.2 cm}0\\ 0&\hspace{-.2 cm}0\\0&\hspace{-.2 cm}0\\0&\hspace{-.2 cm}0\\1&\hspace{-.2 cm}0\\0&\hspace{-.2 cm}0 \\0&\hspace{-.2 cm}0\\0&\hspace{-.2 cm}1\end{matrix}}_{A'_7} \quad\quad \underbrace{\begin{matrix} 0&\hspace{-.2 cm}0\\ 0&\hspace{-.2 cm}0\\0&\hspace{-.2 cm}0\\0&\hspace{-.2 cm}0\\0&\hspace{-.2 cm}0\\1&\hspace{-.2 cm}0\\0&\hspace{-.2 cm}1\\0&\hspace{-.2 cm}0 \end{matrix}}_{A'_8}\quad\quad \underbrace{\begin{matrix} 0&\hspace{-.2 cm}0\\ 1&\hspace{-.2 cm}0\\0&\hspace{-.2 cm}0\\0&\hspace{-.2 cm}0\\0&\hspace{-.2 cm}1\\0&\hspace{-.2 cm}0\\0&\hspace{-.2 cm}0\\0&\hspace{-.2 cm}0 \end{matrix}}_{A'_9}\quad\quad \underbrace{\begin{matrix} 0&\hspace{-.2 cm}0\\ 1&\hspace{-.2 cm}0\\0&\hspace{-.2 cm}0\\0&\hspace{-.2 cm}0\\0&\hspace{-.2 cm}0\\0&\hspace{-.2 cm}0\\0&\hspace{-.2 cm}0\\0&\hspace{-.2 cm}1 \end{matrix}}_{A'_{10}}\quad\quad \underbrace{\begin{matrix} 0&\hspace{-.2 cm}0\\ 0&\hspace{-.2 cm}0\\1&\hspace{-.2 cm}0\\0&\hspace{-.2 cm}0\\0&\hspace{-.2 cm}1\\0&\hspace{-.2 cm}0\\0&\hspace{-.2 cm}0\\0&\hspace{-.2 cm}0 \end{matrix}}_{A'_{11}}\quad\quad \underbrace{\begin{matrix} 0&\hspace{-.2 cm}0\\ 0&\hspace{-.2 cm}0\\1&\hspace{-.2 cm}0\\0&\hspace{-.2 cm}0\\0&\hspace{-.2 cm}0\\0&\hspace{-.2 cm}0\\0&\hspace{-.2 cm}0\\0&\hspace{-.2 cm}1 \end{matrix}}_{A'_{12}}\quad\quad
 \underbrace{\begin{matrix} 1&\hspace{-.2 cm}0\\ 0&\hspace{-.2 cm}0\\0&\hspace{-.2 cm}0\\0&\hspace{-.2 cm}0\\0&\hspace{-.2 cm}0\\0&\hspace{-.2 cm}0\\0&\hspace{-.2 cm}0 \\0&\hspace{-.2 cm}0\end{matrix}}_{A'_{13}} \quad\quad \underbrace{\begin{matrix} 1&\hspace{-.2 cm}0\\ 0&\hspace{-.2 cm}0\\0&\hspace{-.2 cm}0\\0&\hspace{-.2 cm}0\\0&\hspace{-.2 cm}0\\0&\hspace{-.2 cm}0\\0&\hspace{-.2 cm}0\\0&\hspace{-.2 cm}0 \end{matrix}}_{A'_{14}} \quad\quad \underbrace{\begin{matrix} 0&\hspace{-.2 cm}0\\ 0&\hspace{-.2 cm}0\\0&\hspace{-.2 cm}0\\1&\hspace{-.2 cm}0\\0&\hspace{-.2 cm}0\\0&\hspace{-.2 cm}0\\0&\hspace{-.2 cm}0\\0&\hspace{-.2 cm}0 \end{matrix}}_{A'_{15}} \quad\quad \underbrace{\begin{matrix} 0&\hspace{-.2 cm}0\\ 0&\hspace{-.2 cm}0\\0&\hspace{-.2 cm}0\\1&\hspace{-.2 cm}0\\0&\hspace{-.2 cm}0\\0&\hspace{-.2 cm}0\\0&\hspace{-.2 cm}0\\0&\hspace{-.2 cm}0 \end{matrix}}_{A'_{16}} \quad\quad \underbrace{\begin{matrix} 0&\hspace{-.2 cm}0\\ 0&\hspace{-.2 cm}0\\0&\hspace{-.2 cm}0\\0&\hspace{-.2 cm}0\\0&\hspace{-.2 cm}0\\1&\hspace{-.2 cm}0\\0&\hspace{-.2 cm}0\\0&\hspace{-.2 cm}0 \end{matrix}}_{A'_{17}} \quad\quad \underbrace{\begin{matrix} 0&\hspace{-.2 cm}0\\ 0&\hspace{-.2 cm}0\\0&\hspace{-.2 cm}0\\0&\hspace{-.2 cm}0\\0&\hspace{-.2 cm}0\\0&\hspace{-.2 cm}0\\1&\hspace{-.2 cm}0\\0&\hspace{-.2 cm}0 \end{matrix}}_{A'_{18}}\\ 
 \label{matrix_M_network_2}
&\hspace{14 cm}\quad\quad \underbrace{\begin{matrix} 0&\hspace{-.2 cm}0\\ 0&\hspace{-.2 cm}0\\0&\hspace{-.2 cm}0\\0&\hspace{-.2 cm}0\\0&\hspace{-.2 cm}0\\1&\hspace{-.2 cm}0 \\0&\hspace{-.2 cm}0\\0&\hspace{-.2 cm}0\end{matrix}}_{A'_{19}} \quad\quad \underbrace{\left.\begin{matrix} 0&\hspace{-.2 cm}0\\ 0&\hspace{-.2 cm}0\\0&\hspace{-.2 cm}0\\0&\hspace{-.2 cm}0\\0&\hspace{-.2 cm}0\\0&\hspace{-.2 cm}0\\1&\hspace{-.2 cm}0\\0&\hspace{-.2 cm}0 \end{matrix}\right]}_{A'_{20}}
\end{align}
\hrule
\end{figure*}
\begin{figure}[htbp]
\centering
\subfigure[Solution 1]{
\includegraphics[totalheight=3in,width=3in]{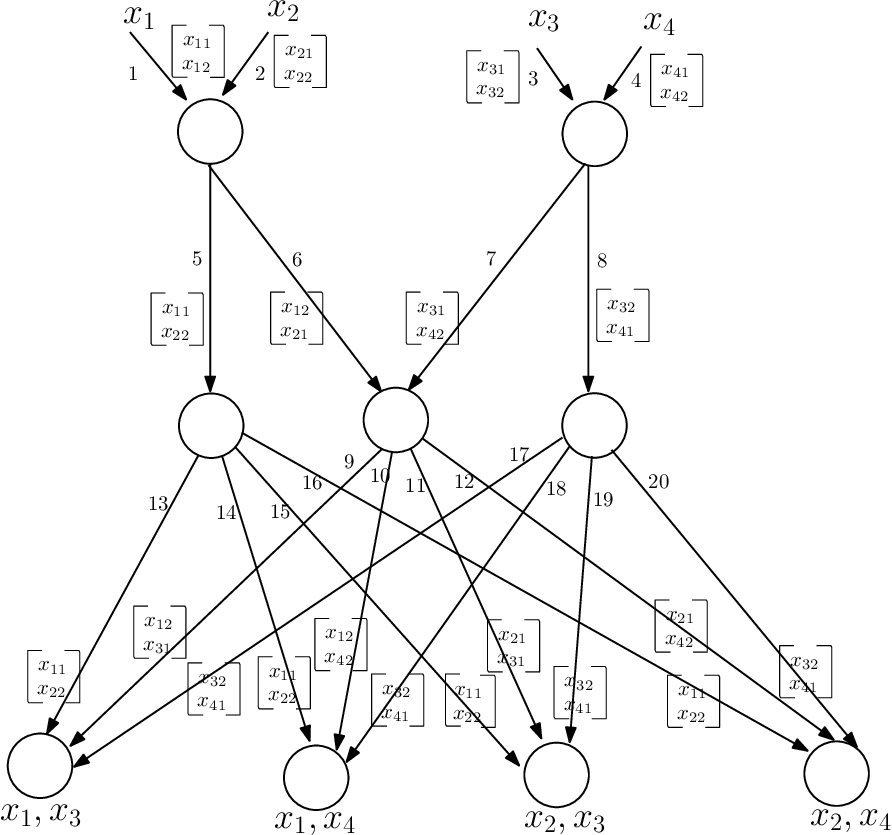}
\label{fig:sol1}
}
\subfigure[Solution 2]{
\includegraphics[totalheight=3in,width=3in]{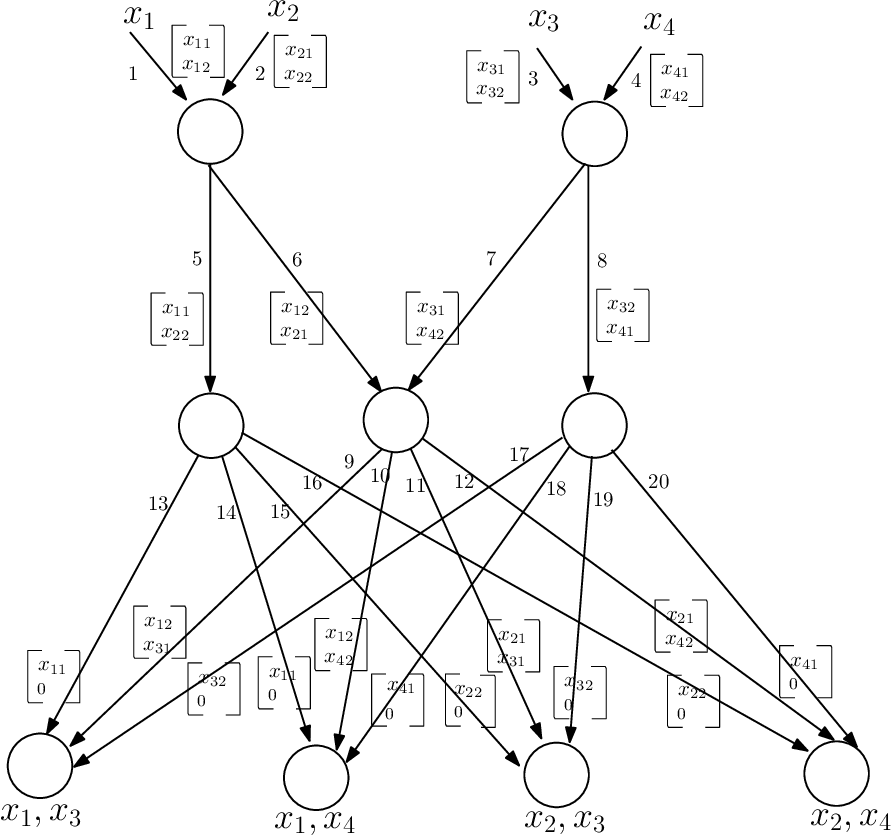}
\label{fig:sol2}
}
\caption{The M-network}
\label{fig:M_network}
\end{figure}
Consider the M-network shown in Fig. \ref{fig:M_network}. We consider two possible solutions for the M-network, from which it is possible to obtain two different discrete polymatroids with respect to which the M-network is $(2,2,2,2;2)$-discrete polymatroidal.\\ 
\underline{\textit{Solution 1:}}
 Assume the global encoding matrix of edge $i,i\in \lceil 12 \rfloor,$ to be the matrix $A_i$ given in \eqref{matrix_M_network_1} at the top of this page. Take $A_5$ to be the global encoding matrix of the  edges $13,14,15,16$ and  $A_8$ to be that of $17,18,19,20.$ The solution thus obtained for the M-network is as shown in Fig. \ref{fig:sol1}. Let the network-discrete polymatroid mapping $f_1$ be defined as follows: 
\begin{displaymath}
   f_1(i)= \left\{
     \begin{array}{ll}
       &i  :i \in \{1,2,\dotso, 12\}\\
       &5  :i \in \{\:13,14,15,16\}\\
       &8 : i \in \{\:17,18,19,20\}
     \end{array}.
   \right.
\end{displaymath}  
Define $V_i$ to be the column span of $A_i,i\in\lceil 12 \rfloor.$ It can be verified that the M-network is $(2,2,2,2;2)$-discrete polymatroidal with respect to $\mathbb{D}(V_1,V_2,\dotso V_{12}),$ with $f_1$ being the network-discrete polymatroid mapping. 

 From the definition of multi-linear representation, it follows that the vector subspaces (excluding the trivial zero vector subspaces) which form a multi-linear representation of dimension $k$ for a matroid should be $k$-dimensional. Note that the vector subspaces $V_i,i \in \lceil 12 \rfloor,$ have dimension 2 and they form a representation for the discrete polymatroid $\mathbb{D}(V_1,V_2,\dotso V_{12}).$ Despite having their dimensions to be equal, the vector subspaces $V_i,i \in \lceil 12 \rfloor,$ cannot form a multi-linear representation of dimension 2 for any matroid. The reason for this is that $dim(V_1+V_5)=3$ which is not a multiple of 2. \\
\underline{\textit{Solution 2:}}
 Assume the global encoding matrices of edge $i,i\in \lceil 20 \rfloor,$ to be the matrix $A'_i$ (defined in \eqref{matrix_M_network_2} at the top of the previous page). The solution thus obtained for the M-network is as shown in Fig. \ref{fig:sol2}. Let the network-discrete polymatroid mapping $f_2(i)=i, i \in \lceil 20 \rfloor.$
Define $V'_i$ to be the column span of $A'_i,i\in\lceil 20 \rfloor.$ It can be verified that the M-network is $(2,2,2,2;2)$-discrete polymatroidal with respect to $\mathbb{D}(V'_1,V'_2,\dotso V'_{20}),$ with $f_2$ being the network-discrete polymatroid mapping.

 Note that all the vector subspaces $V_i, i\in \lceil 12 \rfloor,$ in Solution 1 have the same dimension 2. In contrast, in Solution 2, the vector subspaces $V'_1,V'_2,\dotso, V'_{12}$ have dimension 2, while the vector subspaces $V'_{13},V'_{14},\dotso, V'_{20}$ have dimension 1. The M-network is $(2,2,2,2;2)$-discrete polymatroidal with respect to two different discrete polymatroids $\mathbb{D}(V_1,V_2,\dotso,V_{12})$ and $\mathbb{D}(V'_1,V'_2,\dotso,V'_{20}).$ 
 
 As shown in Theorem \ref{thm3} and illustrated in the previous example, there is a fundamental connection between linear fractional solvability of networks and representations of discrete polymatroids, whereas such a connection does not exist with multi-linear representations of matroids. 

\subsection{Construction of Linear Fractional Solvable Networks from Discrete Polymatroids}
\label{IV B}
In this section, an algorithm to construct networks from a class of discrete polymatroids is provided. The network constructed admits a linear FNC solution over $\mathbb{F}_q,$ if the discrete polymatroid from which it was constructed is representable over $\mathbb{F}_q.$

Note that a representable discrete polymatroid $ \mathbb{D}$ which arises in connection with linear FNC in Theorem 1 is not an arbitrary discrete polymatroid. It satisfies certain conditions which are obtained as follows: From (DN2), it follows that there exists a vector $\sum_{i \in f(\mathcal{S})}k_i \epsilon_{i,r}$ in $\mathbb{D}.$ From the proof of Theorem 1, it follows that $rank(\mathbb{D})=\sum_{i \in f(\mathcal{S})}k_i.$ Hence, $b=\sum_{i \in f(\mathcal{S})}k_i \epsilon_{i,r}$ is a basis vector for $\mathbb{D}.$ From (DN3) it follows that for this basis vector $b,$ $\forall i \in (b)_{>0},$ $\rho(\{i\})=k_i.$ 

Hence, every discrete polymatroid $\mathbb{D}$ which arises in connection with linear FNC in Theorem 1 satisfies the following condition: $\mathbb{D}$ contains a basis vector $b=\sum_{i \in (b)_{>0}}k_i \epsilon_{i,r} \in \mathcal{B}(\mathbb{D})$ for which  $\forall i \in (b)_{>0},$ $\rho(\{i\})=k_i.$ 

In this subsection, we restrict ourselves to only the class of discrete polymatroids which satisfy the above condition and provide an algorithm to construct networks from discrete polymatroids which belong to this class.

Before providing the construction algorithm, we provide some useful definitions.

Let $\mathcal{C}'_i(\mathbb{D})$ denote the set of minimal excluded vectors for $\mathbb{D}$ for which the $i^{th}$ component is one. The elements of $\mathcal{C}'_i(\mathbb{D})$ are referred to as \textit{$i$-unit minimal excluded vectors}. Let $\mathcal{C}_i(\mathbb{D})$ denote the set of vectors $u$ in  $\mathcal{C}'_i(\mathbb{D})$ which satisfy the condition that there does not exist $ v  \in \mathcal{C}'_i(\mathbb{D}),$ $v \neq u,$ for which  $(v)_{>0} \subset (u)_{>0}.$ The elements of $\mathcal{C}_i(\mathbb{D})$ are referred to as \textit{reduced $i$-unit minimal excluded vectors}.

\begin{example}
\label{ex19}
For the discrete polymatroid $2\mathbb{D}(U_{2,4})$ provided in Example \ref{ex13}, the set of minimal excluded vectors $\mathcal{C}(2\mathbb{D}(U_{2,4}))$ is given by 
{
\begin{align*}
&\{(0,1,2,2),(0,2,1,2),(0,2,2,1),(1,0,2,2),(1,1,1,2),\\
&(1,1,2,1),(1,2,0,2),(1,2,1,1),(1,2,2,0),(2,0,1,2),\\&(2,0,2,1),(2,1,0,2),(2,1,1,1),(2,1,2,0),(2,2,0,1),\\&\hspace{6.5 cm}(2,2,1,0)\}.
\end{align*}}

The set of $i$-unit minimal excluded vectors, $i \in \lceil 4 \rfloor$ is given by,
{
\begin{align*}
&\mathcal{C}'_{1}(2\mathbb{D}(U_{2,4}))=\{(1,0,2,2),(1,1,1,2),(1,1,2,1),(1,2,0,2),\\
&\hspace{5.5 cm}(1,2,1,1),(1,2,2,0)\},\\
&\mathcal{C}'_{2}(2\mathbb{D}(U_{2,4}))=\{(0,1,2,2),(1,1,1,2),(1,1,2,1),(2,1,0,2),\\
&\hspace{5.5 cm}(2,1,1,1),(2,1,2,0)\},\\
&\mathcal{C}'_{3}(2\mathbb{D}(U_{2,4}))=\{(0,2,1,2),(1,1,1,2),(1,2,1,1),(2,0,1,2),\\
&\hspace{5.5cm}(2,1,1,1),(2,2,1,0)\},\\
&\mathcal{C}'_{4}(2\mathbb{D}(U_{2,4}))=\{(0,2,2,1),(1,1,2,1),(1,2,1,1),(2,0,2,1),\\
&\hspace{5.5cm}(2,1,1,1),(2,2,1,0)\}.
\end{align*}} 

The set of reduced $i$-unit minimal excluded vectors, $i \in \lceil 4 \rfloor$ is given by,
{
\begin{align*}
&\mathcal{C}_{1}(2\mathbb{D}(U_{2,4}))=\{(1,0,2,2),(1,2,0,2),(1,2,2,0)\},\\
&\mathcal{C}_{2}(2\mathbb{D}(U_{2,4}))=\{(0,1,2,2),(2,1,0,2),(2,1,2,0)\},\\
&\mathcal{C}_{3}(2\mathbb{D}(U_{2,4}))=\{(0,2,1,2),(2,0,1,2),(2,2,1,0)\},\\
&\mathcal{C}_{4}(2\mathbb{D}(U_{2,4}))=\{(0,2,2,1),(2,0,2,1),(2,2,0,1)\}.
\end{align*}} 
\end{example}

Now we proceed to give the construction algorithm.\\

\underline{\textbf{{ALGORITHM 1}}}\\
\underline{\emph{Step 1:}}
 Choose a basis vector $b \in \mathcal{B}(\mathbb{D})$ given by $\sum_{i \in (b)_{>0}} k_i \epsilon_{i,r}$ which satisfies the condition that $\rho(\{i\})=k_i, \forall i \in (b)_{>0}.$ For every $i \in (b)_{>0},$ add a node $i$ to the network with an input edge $e_i$ which generates the message $x_i.$ Let $f(e_i)=i.$ Define $M=T=(b)_{>0}.$\\
\underline{\emph{Step 2:}}
For $i \in \lceil r \rfloor \notin T,$ find a vector $u \in \mathcal{C}_i(\mathbb{D}),$ for which \mbox{$(u-\epsilon_{i,r})_{>0} \subseteq T.$} Add a new node $i'$ to the network with incoming edges from all the nodes which belong to $(u-\epsilon_{i,r})_{>0}.$ Also, add a node $i$ with a single incoming edge from $i',$ denoted as $e_{i',i}.$ Define $f(e)=head(e), \forall e \in In(i)$ and $f(e_{i',i})=i.$ Let $T \leftarrow T \cup \lbrace i\rbrace.$
Repeat step 2 until it is no longer possible.\\
\underline{\emph{Step 3:}} 
For $i \in M,$ choose a vector $u$ from $\mathcal{C}_i(\mathbb{D})$ for which $(u)_{>0} \subseteq T.$ Add a new node $h$ to the network which demands message $x_i$ and which has connections from the nodes in $(u-\epsilon_{i,r})_{>0}.$ Define $f(e)=head(e), \forall e \in In(h).$  Repeat this step as many number of times as desired.\\
\underline{\emph{Step 4:}} 
For a basis vector $b \in \mathcal{B}(\mathbb{D}),$ add a node $j$ which has incoming edges from the nodes which belong to $(b)_{>0}$ and demands all the messages. Define $f(e)=head(e), \forall e \in In(j).$ Repeat this step as many number of times as desired.  
%

For a discrete polymatroid $\mathbb{D},$ let $\rho_{max}(\mathbb{D})=\max_{i \in \lceil r \rfloor} \rho(\lbrace i \rbrace).$ 

Theorem \ref{thm4} below establishes the connection between the network constructed using ALGORITHM 1 and the discrete polymatroid from which the network was constructed, for a discrete polymatroid representable over $\mathbb{F}_q.$

\begin{theorem}
\label{thm4}
A network constructed using ALGORITHM 1 from a discrete polymatroid $\mathbb{D}$  which is representable over $\mathbb{F}_q,$ with the basis vector $b$ given by $\sum_{i \in (b)_{>0}} k_i \epsilon_{i,r}$ chosen in Step 1,  admits a linear $(k_1,k_2,\dotso,k_m;n)$-FNC solution over $\mathbb{F}_q,$ where $n=\rho_{max}(\mathbb{D}).$
\begin{proof}
The proof of the theorem is given by showing that the constructed network is $(k_1,k_2,\dotso,k_m;n)$-discrete polymatroidal with respect to the representable discrete polymatroid $\mathbb{D}$ from which it is constructed. The satisfaction of (DN1) is ensured by step 1 of the construction procedure. Since the vector $b=\sum_{i\in \mathcal{S}} k_i \epsilon_{i,r}$ belongs to $\mathcal{B}(\mathbb{D}),$ it belongs to $\mathbb{D}$ as well and hence (DN2) is satisfied. Also, since $\rho(\{i\})=k_i, \forall i \in (b)_{>0}$ and $n=\rho_{max}(\mathbb{D}),$ (DN3) is satisfied.

The nodes in the network constructed using ALGORITHM 1 are of five kinds (i) node $i,$ $i \in M,$ which are added in Step 1, (ii) node $i', i \in \lceil r \rfloor \setminus M,$ added in Step 2, (iii) node $i, i \in \lceil r \rfloor \setminus M,$ added in Step 2, (iv) nodes added in Step 3 which demand messages and (v) nodes added in Step 4 which demand messages. For a node $x$ of kind (i) or of kind (iii), since the in-degree is one and all the outgoing edges are mapped by $f$ to the same element in $\lceil r \rfloor,$ $f(In(x))=f(In(x) \cup Out(x))$ and hence $\rho(f(In(x)))=\rho(f(In(x) \cup Out(x))).$ Hence (DN4) is satisfied for nodes of kind (i) and (iii).

Consider a node $i' \in \lceil r \rfloor$ of kind (ii). Let $e_{i',i}$ denote the edge connecting $i'$ and $i.$ Let $u^i \in \mathcal{C}_i(\mathbb{D})$ denote the vector which was used in Step 2 while adding the node $i$ and $i'$ to the network. Since $f(e_{i',i})=i,$ we need to show that $\rho(f(In(i')))= \rho(f(In(i'))\cup \{i\}).$ Since $f(In(i'))=(u^i-\epsilon_{i,r})_{>0}$ and $(u^i-\epsilon_{i,r})_{>0} \cup \{i\}=(u^i)_{>0},$ it needs to be shown that $\rho\left(\left(u^i-\epsilon_{i,r}\right)_{>0}\right)=\rho\left((u^i)_{>0}\right),$ i.e., $dim\left(\sum_{j \in (u^i)_{>0}}V_j\right)=dim\left(\sum_{j \in \left(u^i-\epsilon_{i,r}\right)_{>0}}V_j\right).$ Let $a^i=(u^i-\epsilon_{i,r}).$ Since $u^i$ is a minimal excluded vector, $a^i \in \mathbb{D}$ and hence for all $A \subseteq \lceil r \rfloor,$ we have, 
\begin{equation}
\label{eqn2_thm}
\vert a^i(A)\vert\leq dim\left(\sum_{j \in A} V_j \right).
\end{equation}
Since $u^i \notin \mathbb{D},$ we have,
\begin{equation}
\label{eqn3_thm}
dim \left(\sum_{j \in A'} V_j\right)<\vert u^i(A') \vert,
\end{equation}
for some $A' \subseteq \lceil r \rfloor.$ Clearly $A'$ should contain $i,$ otherwise $\vert a^i(A')\vert=\vert u^i(A') \vert$ and, \eqref{eqn2_thm} and \eqref{eqn3_thm} cannot be simultaneously satisfied. Since $A'$ contains $i,$ we have $\vert u^i(A') \vert=\vert a^i(A')\vert+1.$ Hence, from \eqref{eqn2_thm} and \eqref{eqn3_thm} we get $dim\left(\sum_{j \in A'} V_j \right)= \vert a^i(A')\vert.$

\begin{figure}[htbp]
\centering
\includegraphics[totalheight=1.8 in,width=2in]{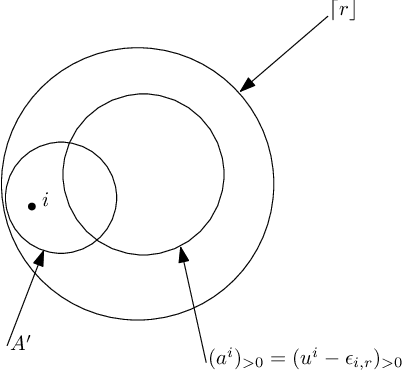}
\caption{Pictorial depiction of the sets $\lceil r \rfloor,$ $(a^i)_{>0}$ and $A'$ used in the proof of Theorem 2.}
\label{fig:figure_proof_theorem2}
\end{figure}
The sets $\lceil r \rfloor,$ $(a^i)_{>0}$ and the set $A'$ containing $i$ are pictorially depicted in Fig. \ref{fig:figure_proof_theorem2}. We have,
$\displaystyle{dim \left(\sum_{j \in (a^i)_{>0} \cap A'} V_j \right) \leq dim \left(\sum_{j \in A'} V_j \right)=\vert a^i(A')\vert.}$ Since \mbox{$a^i \in \mathbb{D},$} we have, $$\displaystyle{\vert a^i(A')\vert = \left\vert a^i\left(\left(a_i\right)_{>0} \cap A'\right)\right\vert \leq dim \left ( \sum_{j \in \left(a_i\right)_{>0}\cap A'} V_j \right).}$$ Hence, $dim \left ( \sum_{j \in \left(a_i\right)_{>0}\cap A'} V_j \right)=dim \left(\sum_{j \in A'} V_j \right).$ Since $i \in A',$ it follows that $$dim \left( \sum_{j \in \left(a_i\right)_{>0}\cap A'} V_j +V_i\right)=dim\left(\sum_{j \in \left(a_i\right)_{>0}\cap A'} V_j \right).$$ As a result, we have, 
{
\begin{align*}
&dim \left( \sum_{j \in \left(a_i\right)_{>0}\cap A'} V_j +V_i +\sum_{j \in \left(a_i\right)_{>0}\setminus A'} V_j\right)=\\
&\hspace{1.5cm}dim\left(\sum_{j \in \left(a_i\right)_{>0}\cap A'} V_j +\sum_{j \in \left(a_i\right)_{>0}\setminus A'} V_j\right),
\end{align*}} 

\noindent i.e., $dim\left(\sum_{j \in (u^i)_{>0}}V_j\right)=dim\left(\sum_{j \in \left(u^i-\epsilon_{i,r}\right)_{>0}}V_j\right).$

Following a procedure exactly similar to the one used for a node kind (ii), it can be shown that $\rho(f(In(x)))=\rho(f(In(x) \cup Out(x)))$ for a node $x$ of kind (iv).  

To show that (DN3) is satisfied for a node of kind (v), it needs to be shown that for $b \in \mathcal{B}(\mathbb{D}),$ $\rho\left(\left(b\right)_{>0} \cup f\left(\left\{i\right\}\right)\right)=\rho((b)_{>0}),$ $\forall i \in M.$ It needs to be shown that $dim(\sum_{j \in (b)_{>0}} V_j +V_{f(\{i\})})=dim(\sum_{j \in (b)_{>0}}V_j),$ $i \in M,$ which is true since $b$ is a basis vector for $\mathbb{D}.$  
This completes the proof of Theorem 2.
\end{proof}
\end{theorem}

The construction procedure provided in ALGORITHM 1 is illustrated using the following examples. 
\begin{example}
\label{ex20}
Continuing with Example \ref{ex19}, for simplicity, let $\mathbb{D}$ denote the discrete polymatroid $2\mathbb{D}(U_{2,4}).$
 The construction procedure for the discrete polymatroid considered in Example \ref{ex19} is summarized in Table \ref{table1}. The steps involved in the construction are illustrated in Fig. \ref{fig:network1}.
\end{example}
\begin{table}[h]
\centering
\includegraphics[totalheight=3 in,width=3.5 in]{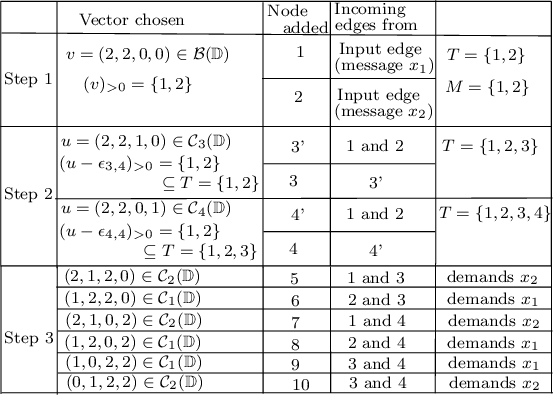}
\caption{Steps involved in the construction of a network from the discrete polymatroid in Example \ref{ex19}}
\label{table1}
\end{table} 

\begin{table*}[t]
\centering
\includegraphics[totalheight=6.25 in,width=5.55 in]{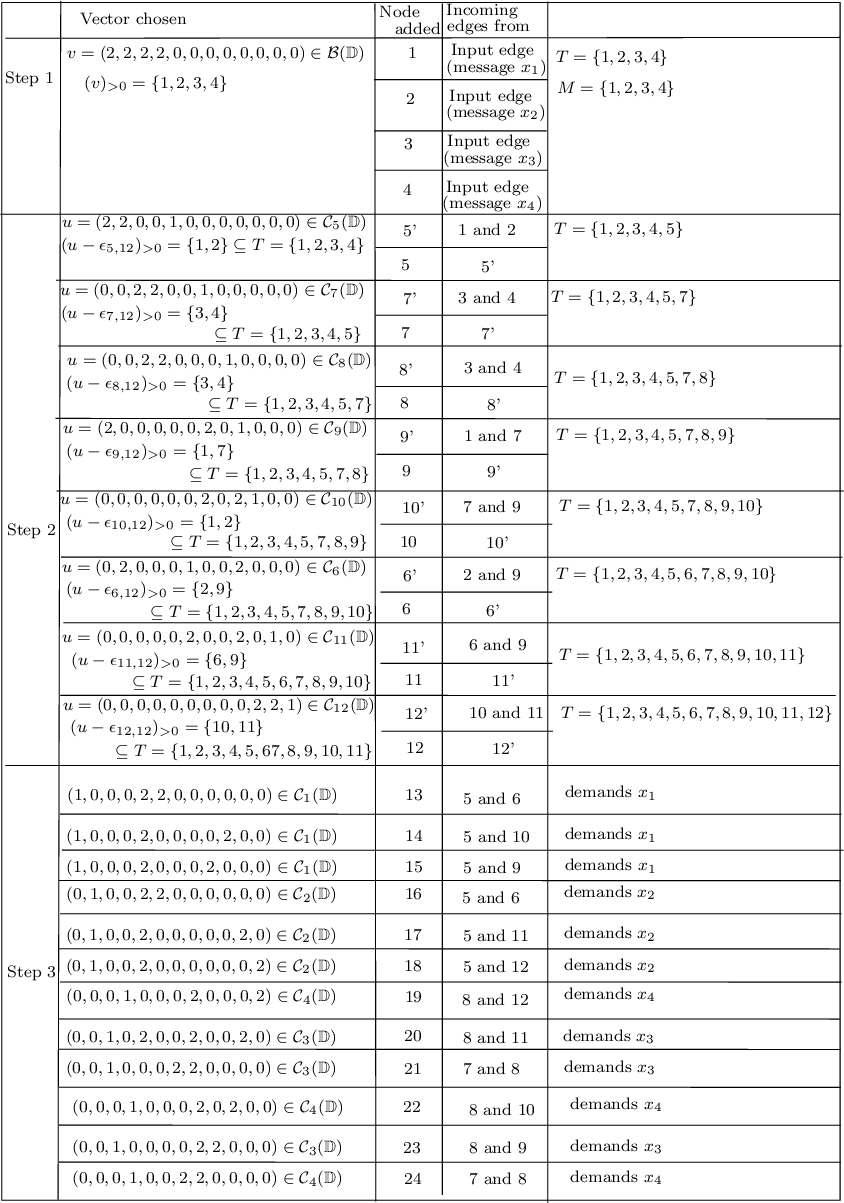}
\caption{Steps involved in the construction of the network in Example \ref{ex21}}
\label{table2}
\end{table*}
\begin{figure}[h]
\centering
\includegraphics[totalheight=3.75 in,width=3.5in]{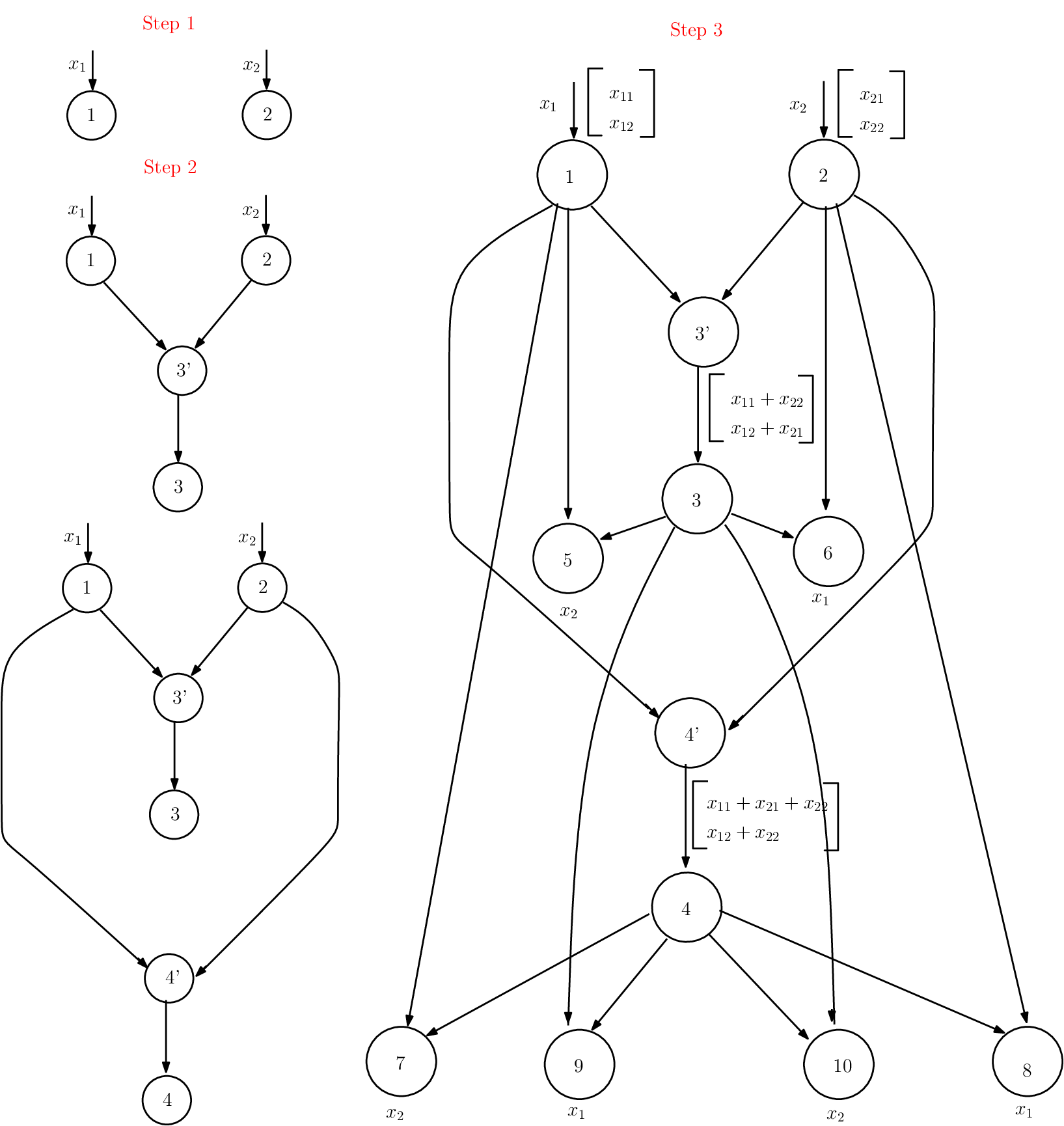}
\caption{The network constructed from the discrete polymatroid $2\mathbb{D}(U_{2,4})$}
\label{fig:network1}
\end{figure}

Let $A= \underbrace{\hspace{-.2 cm}\left[\begin{matrix} 1&0\\ 0&1\\0&0\\0&0 \end{matrix}\right.}_{A_1} \quad \underbrace{\begin{matrix} 0&0\\ 0&0\\1&0\\0&1 \end{matrix}}_{A_2} \quad \underbrace{\begin{matrix} 1&0\\ 0&1\\0&1\\1&0 \end{matrix}}_{A_3}  \quad \underbrace{\left.\begin{matrix} 1&0\\ 0&1\\1&0\\1&1 \end{matrix}\right]}_{A_4}$ be a matrix over $\mathbb{F}_2.$
Let $V_i$ denote the column span of $A_i, i \in \lceil 4 \rfloor.$ It can be verified that the vector subspaces $V_1,$ $V_2,$ $V_3$ and $V_4$ form a representation for $2\mathbb{D}(U_{2,4})$ over $\mathbb{F}_2.$
A vector linear solution of dimension 2 over $\mathbb{F}_2$ shown in Fig. \ref{fig:network1} is obtained by taking the global encoding matrices for the edges $3' \rightarrow 3$ and $4' \rightarrow 4$ to be the matrices $A_3$ and $A_4.$ All the outgoing edges of a node which has in-degree one carry the same vector as that of the incoming edge. The network in Fig. \ref{fig:network1} does not admit a scalar linear solution over $\mathbb{F}_2$ as shown in the following lemma.
\begin{lemma}
The network given in Fig. \ref{fig:network1} does not admit a scalar linear solution over $\mathbb{F}_2.$
\begin{proof}
Observe that node 5 demands $x_2$ and the only path from 2 to 5 is via the edge $3' \rightarrow 3.$ Also, node 6 demands $x_1$ and the only path from 1 to 6 is via the edge $3' \rightarrow 3.$ To satisfy these demands, the edge $3' \rightarrow 3$ needs to carry $x_1+x_2.$ By a similar reasoning, to satisfy the demands of nodes 7 and 8, the edge $4' \rightarrow 4$ needs to carry $x_1+x_2.$ But if the edges $3' \rightarrow 3$ and $4' \rightarrow 4$ carry $x_1+x_2,$ the demands of nodes 9 and 10 cannot be satisfied.
\end{proof}
\end{lemma}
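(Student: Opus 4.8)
The plan is to argue by contradiction: assume a scalar linear solution over $\mathbb{F}_2$ exists, and track, for each relevant edge, the coefficient pair $(\gamma_1,\gamma_2)\in\mathbb{F}_2^2$ describing the combination $\gamma_1 x_1+\gamma_2 x_2$ that the edge carries. The structure installed by ALGORITHM~1 is what makes this feasible: nodes $1,2$ are the only sources (of $x_1,x_2$); nodes $3',4'$ take inputs only from $\{1,2\}$; nodes $3,4$ have in-degree one, so they merely relay the outputs of $3',4'$; the receivers added in Step~3 are attached so that node $5$ (demanding $x_2$) and node $6$ (demanding $x_1$) see information about the demanded message only through edges descending from $3'\to 3$, and symmetrically nodes $7,8$ sit below $4'\to 4$; finally the Step~4 receivers $9,10$ receive only from nodes $3$ and $4$. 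I would read all of this connectivity off the construction (it is exactly the ``only path'' assertions in the statement I want to prove), and then the argument is about linear spans of received symbols.

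First I would note that every edge out of node $1$ carries $\gamma x_1$ and every edge out of node $2$ carries $\gamma x_2$ (each has a single in-edge, the message edge), so the edge $e\colon 3'\to 3$ carries $a_1x_1+a_2x_2$ for some $a_1,a_2\in\mathbb{F}_2$, and since node $3$ has in-degree one, every edge leaving node $3$ is a scalar multiple of $a_1x_1+a_2x_2$. Then at node $5$ the received symbols span a subspace of $\mathrm{span}(x_1,\ a_1x_1+a_2x_2)$, so recoverability of $x_2$ forces $a_2\neq 0$; dually, at node $6$ the received symbols lie in $\mathrm{span}(x_2,\ a_1x_1+a_2x_2)$, so recoverability of $x_1$ forces $a_1\neq 0$. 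Over $\mathbb{F}_2$ this gives $a_1=a_2=1$, i.e.\ the edge $3'\to 3$, hence every edge out of node $3$ used by a downstream receiver, carries $x_1+x_2$. Running the identical argument with nodes $7,8$ in place of $5,6$ shows the edge $4'\to 4$, and every used edge out of node $4$, carries $x_1+x_2$ as well.

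Finally I would examine receiver $9$ (or $10$): its only in-edges come from nodes $3$ and $4$, both of which carry multiples of $x_1+x_2$, so every symbol available to it lies in the one-dimensional space $\mathrm{span}(x_1+x_2)$, which cannot determine both $x_1$ and $x_2$ — the desired contradiction. The main obstacle is the middle step: one must carefully rule out the degenerate cases (an edge carrying the zero combination, or a receiver's direct source-edge carrying $0\cdot x_1$), and, more importantly, justify \emph{from the explicit connectivity produced by ALGORITHM~1} that no information about $x_2$ reaches node $5$ except through $3'\to 3$ — that is, convert the ``unique path'' statement into the span inclusion used above. This is precisely where the Step~3 choices of reduced minimal excluded vectors (yielding receivers attached to $\{1,3\},\{2,3\},\{1,4\},\{2,4\}$) and the Step~4 basis vector $(0,0,2,2)$ (yielding a receiver attached to $\{3,4\}$) do the work; an alternative, via Corollary~\ref{cor_scalar}, would be to derive a contradiction directly from the $\mathbb{F}_2$-representable matroid axioms, but the combinatorial span argument above is the cleaner route.
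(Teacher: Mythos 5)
Your proposal is correct and follows essentially the same route as the paper's proof: the receivers attached to $\{1,3\},\{2,3\}$ and $\{1,4\},\{2,4\}$ force the edges $3'\to 3$ and $4'\to 4$ to carry $x_1+x_2$ over $\mathbb{F}_2$, after which the Step~4 receivers fed only by nodes $3$ and $4$ cannot recover both messages. The only difference is that you make the paper's informal ``only path'' assertions explicit as span-inclusion arguments, which is a sound (and slightly more rigorous) rendering of the same idea.
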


While the network in Fig. \ref{fig:network1} does not admit a scalar linear solution over $\mathbb{F}_2,$ it has a scalar linear solution over all fields of size greater than two, as shown in the following lemma.
\begin{lemma}
The network in Fig. \ref{fig:network1} admits a scalar linear solution over all fields of size greater than two.
\begin{proof}
It can be verified that the network shown in Fig. \ref{fig:network1} is matroidal with respect to the uniform matroid $U_{2,4}$ with the mapping $f$ from the edge set to the ground set $\lceil 4 \rfloor$ of the matroid defined as follows: for $i \in \lceil 4 \rfloor,$ all the elements of $In(i')$ are mapped to $head(i'),$  the elements of $out(i)$ and the edge joining $i'$ and $i$ are mapped to $i.$ Since $U_{2,4}$ is representable over all fields of size greater than or equal to three (follows from Proposition 6.5.2, Page 203, \cite{Ox}), the network in Fig. \ref{fig:network1} admits a scalar linear solution over all fields of size greater than two.
\end{proof}
\end{lemma}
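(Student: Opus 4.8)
The plan is to exhibit a single matroid that is representable over every field of size greater than two and to show that the network of Fig.~\ref{fig:network1} is matroidal with respect to it; Corollary~\ref{cor_scalar} then at once yields a scalar linear solution over every such field. The natural candidate is the uniform matroid $U_{2,4}$: the network was produced by ALGORITHM~1 from $2\mathbb{D}(U_{2,4})$, and $U_{2,4}$ is representable over $\mathbb{F}_q$ exactly when $q\geq 3$, since a representation amounts to four pairwise non-proportional nonzero vectors in $\mathbb{F}_q^2$ and these exist iff $q+1\geq 4$ (cf.\ Proposition~6.5.2 of \cite{Ox}).

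The central point I would use is that a network which is discrete polymatroidal with respect to $2\mathbb{D}(\mathbb{M})$ is automatically matroidal with respect to $\mathbb{M}$ via the \emph{same} network--polymatroid map. By Theorem~\ref{thm2} the constructed network is discrete polymatroidal with respect to $2\mathbb{D}(U_{2,4})$ with the map $f$ built in ALGORITHM~1. Since the rank function of $2\mathbb{D}(U_{2,4})$ is $2\Upsilon$, where $\Upsilon$ is the rank function of $U_{2,4}$ (equivalently of $\mathbb{D}(U_{2,4})$), condition (DN3) for $2\mathbb{D}(U_{2,4})$ reads $2\Upsilon(f(In(x)))=2\Upsilon(f(In(x)\cup Out(x)))$ for all $x\in\mathcal{V}$, which after cancelling $2$ is precisely condition (DN3) for $\mathbb{D}(U_{2,4})$; condition (DN1) is the same requirement; and condition (DN2) for $2\mathbb{D}(U_{2,4})$, namely $\sum_{i\in f(\mathcal{S})}2\epsilon_{i,r}\in 2\mathbb{D}(U_{2,4})$ (note $\rho_{max}(2\mathbb{D}(U_{2,4}))=2$ since $U_{2,4}$ has no loops), forces $|f(\mathcal{S})\cap A|\leq\Upsilon(A)$ for all $A\subseteq\lceil 4\rfloor$ and hence, taking $A=f(\mathcal{S})$ and using $\Upsilon(f(\mathcal{S}))\leq|f(\mathcal{S})|$, that $f(\mathcal{S})$ is an independent set of $U_{2,4}$ — which is condition (DN2) for $\mathbb{D}(U_{2,4})$. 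By the equivalence stated right after the definition of a discrete polymatroidal network, the network is therefore matroidal with respect to $U_{2,4}$, and Corollary~\ref{cor_scalar} together with the representability of $U_{2,4}$ over all $\mathbb{F}_q$ with $q>2$ finishes the argument.

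If one prefers to check matroidality with respect to $U_{2,4}$ directly, the verification mirrors the proof of Theorem~\ref{thm2} with $\Upsilon(X)=\min\{|X|,2\}$: in-degree-one nodes are immediate; a gadget node $i'$ has $f(In(i'))$ of size $2$, so adjoining $\{i\}$ cannot raise its rank beyond $2$; and for every demand node $f(In(x))$ is $(u-\epsilon_{i,r})_{>0}$ for a reduced $i$-unit minimal excluded vector $u$ of $2\mathbb{D}(U_{2,4})$ (each having two nonzero positions besides the $i$-th, by the list in Example~\ref{ex19}) or $(b)_{>0}$ for a basis vector $b$ (support of size at least $2$, by Example~\ref{ex13}), so $f(In(x))$ already has $\Upsilon$-rank $2$ and appending the demanded-message labels changes nothing. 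I do not expect a genuine obstacle here; the only delicate part is the bookkeeping of which subsets of $\lceil 4\rfloor$ the map $f$ produces and the check that none of them is a singleton forced to have rank larger than one, and the rank-function-scaling argument of the previous paragraph sidesteps even that.
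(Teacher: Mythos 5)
Your proposal is correct and follows essentially the same route as the paper: both establish that the network of Fig.~\ref{fig:network1} is matroidal with respect to $U_{2,4}$ (via the edge-to-ground-set map coming from ALGORITHM~1) and then invoke the representability of $U_{2,4}$ over every $\mathbb{F}_q$ with $q\geq 3$ together with Corollary~\ref{cor_scalar}. The only difference is that the paper leaves the matroidality check as ``it can be verified,'' whereas you supply it explicitly through the rank-scaling observation that discrete polymatroidality with respect to $2\mathbb{D}(U_{2,4})$ (established in the proof of Theorem~\ref{thm2}) implies matroidality with respect to $U_{2,4}$ under the same map, which is a valid and slightly more complete justification.
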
 

The network constructed in the previous example turned out to be matroidal with respect to a matroid representable over all fields other than $\mathbb{F}_2$ and as a result it admitted scalar linear solutions over all $\mathbb{F}_q$ other than $\mathbb{F}_2.$ In the following example, the constructed network is discrete polymatroidal with respect to a representable discrete polymatroid whereas it cannot be matroidal with respect to any representable matroid. Hence it is not scalar linearly solvable over any field, but is vector linear solvable.

\begin{example}
\label{ex21}
Let $V_i, i \in \lceil 12 \rfloor,$ denote the column span of the matrix $A_i$ shown in \eqref{matrix_M_network_1}. Let $\mathbb{D}$ denote the discrete polymatroid $\mathbb{D}(V_1,V_2,\dotso,V_{12}).$
The steps involved in the construction of a network from this discrete polymatroid is summarized in Table \ref{table2}. 
The network thus constructed is shown in Fig. \ref{fig:network2}. The vector linear solution of dimension 2, which is in fact a vector routing solution, is obtained by choosing the global encoding matrix of the edge $i' \rightarrow i,i \in \lceil 12 \rfloor,$ to be $A_i$, as shown in Fig. \ref{fig:network2}. All the outgoing edges of a node which has in-degree one carry the same vector as that of the incoming edge.

 \begin{figure*}[t]
\centering
\includegraphics[totalheight=5 in,width=5in]{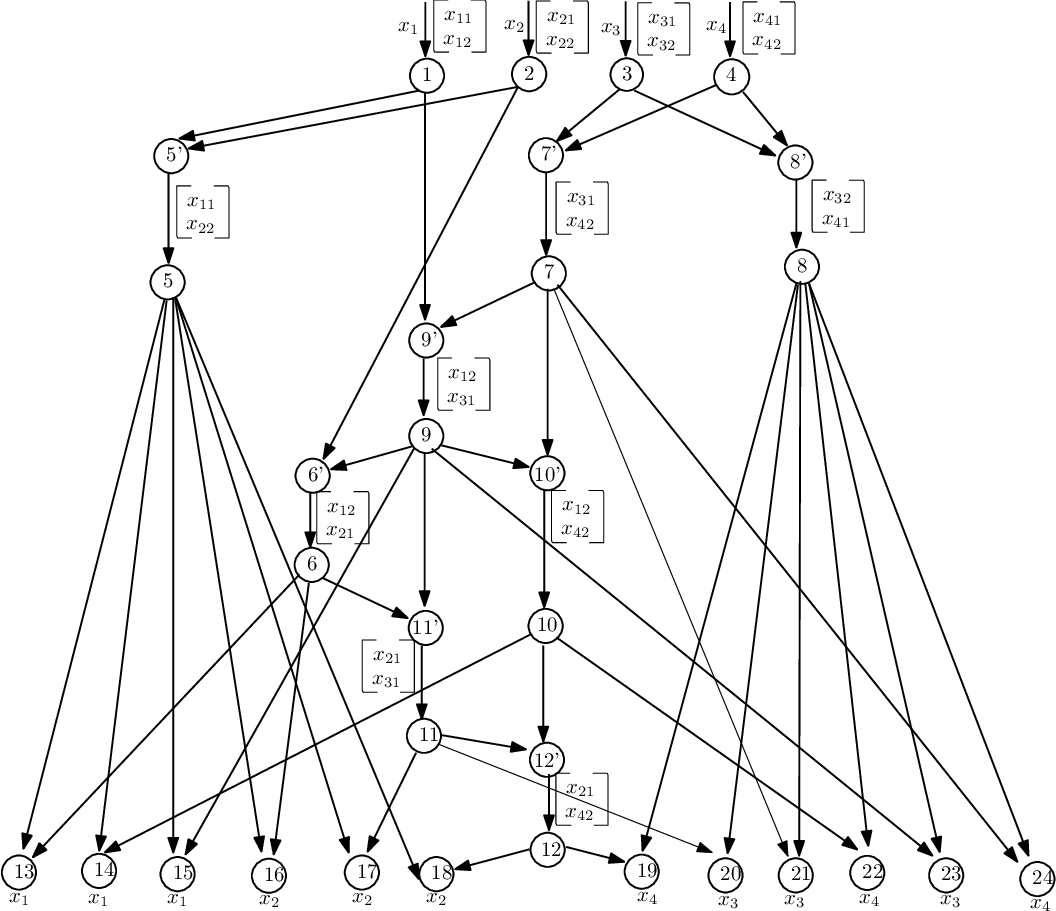}
\caption{A network which is vector linearly solvable but not scalar linearly solvable}
\label{fig:network2}
\end{figure*}
\end{example}

The following lemma shows that the network in Fig. \ref{fig:network2} is not scalar linearly solvable.
\begin{lemma}
The network in Fig. \ref{fig:network2} is not scalar linearly solvable.
\begin{proof}
To prove the lemma, it is shown that the network cannot be matroidal with respect to a representable matroid. The ideas used in the proof are similar to the ones used in the proof of Theorem V.8 in \cite{DoFrZe}.

On the contrary, assume that the network is matroidal with respect to a representable matroid $\mathbb{M}$ on the ground set $\lceil r \rfloor$ and let $f$ be the network-matroid mapping. Let the set of one dimensional vector spaces $V_i, i \in \lceil r \rfloor$ form a representation of $\mathbb{M}.$ All the outgoing edges of a node which has in-degree one carry the same vector as that of the incoming edge. For simplicity, let $i$ denote the incoming edge of node $i,$ where $i \in \lceil 12 \rfloor.$ Let $\rho$ denote the rank function of $\mathbb{D}(\mathbb{M}).$ Let $g(x)=\rho(f(x)), x \subseteq \lceil 12 \rfloor.$ 

We have $g(\{1,2\})\leq 2.$ From (DN2), it follows that $\sum_{i \in \lceil 4 \rfloor}\epsilon_{f(i),12} \in \mathbb{D}(\mathbb{M}).$ Hence we have $\sum_{i \in \lceil 2 \rfloor}\epsilon_{f(i),12} \in \mathbb{D}(\mathbb{M}),$ from which it follows that $2 \leq g(\{1,2\}).$ Hence, we have $g(\{1,2\})=2.$ Similarly, we also have $g(\{3,4\})=2.$

It is claimed that $g(\{5\})=1.$ Otherwise, $g(\{5\})$ has to be 0. In that case, since the nodes 13 and 16 demand $x_1$ and $x_2$ respectively, from (DN3) it follows that $dim(V_{f(1)}+V_{f(6)})=dim(V_{f(6)})$ and $dim(V_{f(2)}+V_{f(6)})=dim(V_{f(6)}).$ This will force $V_{f(1)}=V_{f(2)}$ which is not possible. Hence $g(\{5\})$ has to be 1. Similarly, it can be shown that $g(\{8\})=1.$

We have,
\begin{align}
\label{eqn_first}
g(\lbrace 3,8 \rbrace)+g(\{4,8\}) &\geq g(\lbrace 8 \}))+g(\{3,4,8\})\\
\label{eqn1}
 &\geq 1+ g(\{3,4\})=3,
\end{align}
where \eqref{eqn_first} holds since $g(\{3,4,8\})=g(\{4,8 \})$ (follows from (DN3)) and \eqref{eqn1} follows from the facts that $g(\{8\})=1$ and $g(\{3,4\})=2.$
Similarly, it can be shown that  
\begin{align}
\label{eqn4}
g(\lbrace 1,5 \rbrace)+g(\{2,5\})\geq 3.
\end{align}

Also, we have,
\begin{align}
\label{eqn_third}
g(\{2,5\})+g(\{3,8\})&=g(\{2,5,3,8\})\\
\nonumber
&\leq g(\{2,5,3,8,11 \})\\
\label{eqn_fourth}
&\hspace{-1 cm}\leq g(\{2,5,11\})+g(\{3,8,11\})-g(\{11\})\\
\label{eqn2}
&\hspace{-1 cm}=g(\{5,11\})+g(\{8,11\})-1\leq 3,
\end{align}
where \eqref{eqn_third} follows from the fact that $$dim\left(V_{f(2)}+V_{f(5)}\right)+dim\left(V_{f(3)}+V_{f(8)}\right)=$$ $$\hspace{2 cm}dim\left(V_{f(2)}+V_{f(5)}+V_{f(3)}+V_{f(8)}\right),$$ which in turn follows from the facts that $dim((V_{f(1)}+V_{f(2)}) \cap (V_{f(3)}+V_{f(4)}))=0$ and $V_{f(5)}$ and $V_{f(8)}$ are respectively vector subspaces of $V_{f(1)}+V_{f(2)}$ and $V_{f(3)}+V_{f(4)}.$ Equation \eqref{eqn_fourth} follows from (D2).
Similarly, it can be shown that 
\begin{align}
\label{eqn3}
g(\{2,5\})+g(\{4,8\})\leq 3.
\end{align}
From \eqref{eqn1}, \eqref{eqn2} and \eqref{eqn3}, we get $g(\{2,5\})\leq 1.5.$ Similarly, it can be shown that $g(\{1,5\}) \leq 1.5.$ Hence, from \eqref{eqn4}, we get $g(\{1,5\})=g(\{2,5\})=1.5$ which is not an integer, resulting in a contradiction. Hence, the network in Fig. \ref{fig:network2} cannot be matroidal with respect to any representable matroid.
\end{proof}
\end{lemma}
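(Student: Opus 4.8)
The plan is to argue by contradiction, exhibiting that the network cannot be matroidal with respect to \emph{any} representable matroid, and then invoking Corollary \ref{cor_scalar}. So I would assume the network admits a scalar linear solution over some $\mathbb{F}_q$. By Corollary \ref{cor_scalar}, the network is then matroidal with respect to a matroid $\mathbb{M}$ on a ground set $\lceil r \rfloor$ representable over $\mathbb{F}_q$; let $f$ be the network--matroid mapping and let one-dimensional subspaces $V_1,\dots,V_r$ form a representation of $\mathbb{M}$. Since every node of in-degree one merely forwards its incoming vector, I would write $i$ for the incoming edge of node $i$ ($i\in\lceil 12\rfloor$) and work with $g(X):=\rho(f(X))$ for $X\subseteq\lceil 12\rfloor$, where $\rho$ is the rank function of $\mathbb{D}(\mathbb{M})$; $g$ inherits monotonicity (D1) and submodularity (D2).

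First I would pin down the ranks of the source edges: since $f$ is one-to-one on $\mathcal{S}$ by (DN1) and $\sum_{i\in\lceil 4\rfloor}\epsilon_{f(i),r}\in\mathbb{D}(\mathbb{M})$ by (DN2), the set $\{f(1),\dots,f(4)\}$ is independent in $\mathbb{M}$, so $g(\{1,2\})=g(\{3,4\})=2$. Next, using the demand structure inherited from the M-network, I would show $g(\{5\})=g(\{8\})=1$: if $g(\{5\})$ were $0$, then the (DN3) decoding conditions at the two downstream nodes demanding $x_1$ and $x_2$ would force $V_{f(1)}=V_{f(2)}$, which is impossible, and symmetrically for node $8$. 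I would also record the ``cross-side'' facts coming from the topology and (DN3): $V_{f(5)}\subseteq V_{f(1)}+V_{f(2)}$, $V_{f(8)}\subseteq V_{f(3)}+V_{f(4)}$, and $(V_{f(1)}+V_{f(2)})\cap(V_{f(3)}+V_{f(4)})=\{0\}$, together with identities such as $g(\{3,4,8\})=g(\{4,8\})$ and $g(\{1,2,5\})=g(\{1,5\})$.

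The heart of the argument is a short chain of submodular inequalities. From (D2) and monotonicity, $g(\{3,8\})+g(\{4,8\})\ge g(\{8\})+g(\{3,4,8\})\ge 1+g(\{3,4\})=3$, and symmetrically $g(\{1,5\})+g(\{2,5\})\ge 3$. On the other hand, the cross-side independence gives $g(\{2,5\})+g(\{3,8\})=g(\{2,3,5,8\})$, and applying (D2) with an auxiliary edge (such as edge $11$, which straddles the two sides) bounds this by $3$; likewise $g(\{2,5\})+g(\{4,8\})\le 3$. Adding these two and using $g(\{3,8\})+g(\{4,8\})\ge 3$ yields $g(\{2,5\})\le 3/2$, and symmetrically $g(\{1,5\})\le 3/2$; combined with $g(\{1,5\})+g(\{2,5\})\ge 3$ this forces $g(\{1,5\})=g(\{2,5\})=3/2$, contradicting the integrality of $\rho$. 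Hence no representable matroid makes the network matroidal, and Corollary \ref{cor_scalar} finishes the proof.

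I expect the main obstacle to be the middle step: correctly reading off from Fig.~\ref{fig:network2} which edge-subspaces live inside $V_{f(1)}+V_{f(2)}$ and which inside $V_{f(3)}+V_{f(4)}$, deducing the zero-intersection of these two sums from the decoding requirements, and choosing exactly the right auxiliary element so that the submodular bounds close up to produce a half-integer rather than something consistent. Everything else is routine bookkeeping with (D1), (D2) and (DN3), following the template of the proof of Theorem~V.8 in \cite{DoFrZe}.
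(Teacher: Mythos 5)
Your proposal follows essentially the same argument as the paper: the same rank computations $g(\{1,2\})=g(\{3,4\})=2$, $g(\{5\})=g(\{8\})=1$, the same submodular chains using the cross-side zero intersection and the auxiliary edge $11$ to get $g(\{2,5\})+g(\{3,8\})\le 3$ and $g(\{2,5\})+g(\{4,8\})\le 3$, and the same half-integer contradiction $g(\{1,5\})=g(\{2,5\})=3/2$. It is correct and matches the paper's proof in all essentials.
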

In the following two examples, we provide examples of networks constructed using Algorithm 1 which admit FNC solutions.
\begin{example}
\label{ex22}
For the discrete polymatroid considered in Example \ref{ex9}, the set of reduced $i$-unit minimal excluded vectors \mbox{$\mathcal{C}_i(\mathbb{D}),$} \mbox{$i \in \lceil 4 \rfloor,$} is given by \mbox{$\mathcal{C}_1(\mathbb{D})=\lbrace (1,0,0,2)\rbrace,$}
\mbox{$\mathcal{C}_2(\mathbb{D})=\lbrace (0,1,1,2)\rbrace,$}
\mbox{$\mathcal{C}_3(\mathbb{D})=\lbrace (0,1,1,2)\rbrace$} and
\mbox{$\mathcal{C}_4(\mathbb{D})=\lbrace (1,1,1,1)\rbrace.$} The construction procedure for the discrete polymatroid considered in Example \ref{ex9} is summarized in Table \ref{table3}. The different steps involved in the construction are depicted in Fig. \ref{example1_construction}. Since, in Step 1, the basis vector $b=(1,1,1,0)$ is used and $\rho_{max}(\mathbb{D})=\rho(\{4\})=2,$ the constructed network admits a linear $(1,1,1;2)$-FNC solution. The linear $(1,1,1;2)$-FNC solution shown in Fig. \ref{example1_construction} is obtained by taking the global encoding matrix of the edge joining 4' and 4 to be the matrix $A_4$ given in Example \ref{ex9}. 
\begin{table}[h]
\centering
\includegraphics[totalheight=2.25 in,width=3.5 in]{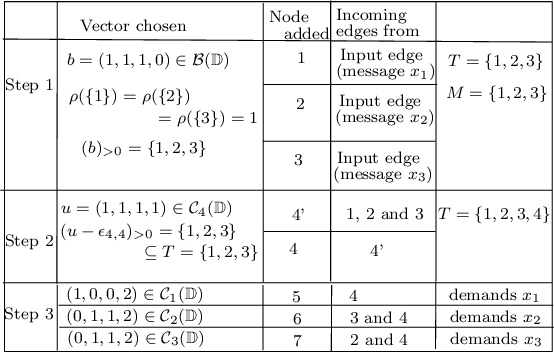}
\caption{Steps involved in the construction of a network from the discrete polymatroid in Example \ref{ex9}}
\label{table3}
\end{table} 
\begin{figure}[h]
\centering
\includegraphics[totalheight=3 in,width=3.5 in]{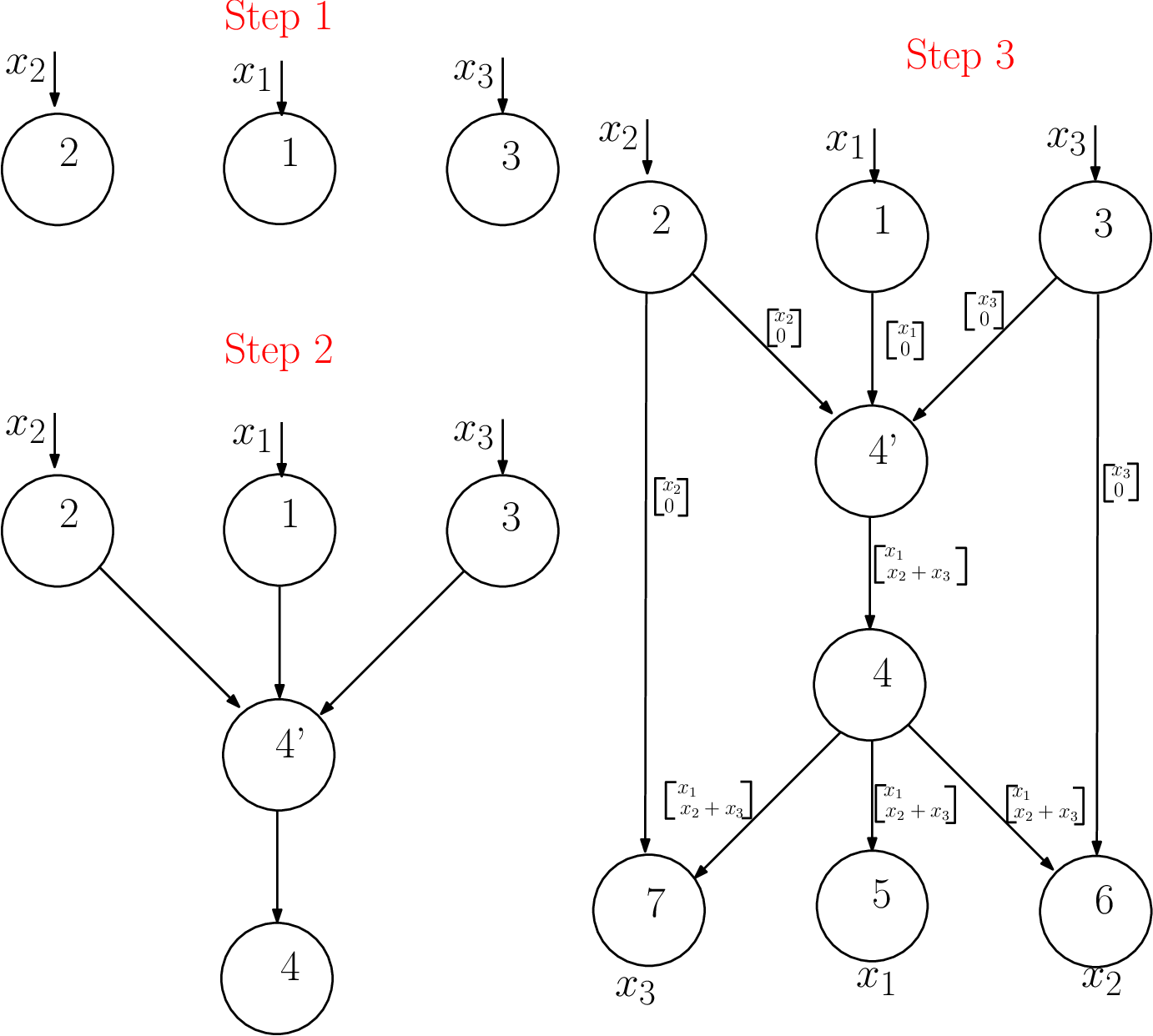}
\caption{Diagram showing the steps involved in the construction of a network from the discrete polymatroid in Example \ref{ex9}}
\label{example1_construction}
\end{figure} 
\end{example}

The network shown in Fig. \ref{example1_construction} has the properties listed in the following lemma.

\begin{lemma}
The network shown in Fig. \ref{example1_construction} has the following properties:
\begin{enumerate}
\item
The network shown in Fig. \ref{example1_construction} does not admit any scalar or vector solution.
\item
For the network in Fig. \ref{example1_construction}, there does not exist an achievable rate tuple $(k_1/n,k_2/n,k_3/n)$ for which $(k_1/n,k_2/n,k_3/n)>(1/2,1/2,1/2).$ Note that the rate tuple achieved by the (2,1,1;2)-FNC solution provided in Fig. \ref{example1_construction} is $(1/2,1/2,1/2).$
\item
The uniform coding capacity of the network shown in Fig. \ref{example1_construction} is equal to 1/2. Hence, the (1,1,1;2)-FNC solution provided in Fig. \ref{example1_construction}, which is a uniform FNC solution, achieves the uniform coding capacity.
\end{enumerate}
\end{lemma}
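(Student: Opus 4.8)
The plan is to derive all three parts from a single cut-type inequality: \emph{every $(k_1,k_2,k_3;n)$-FNC solution over $\mathbb{F}_q$, linear or not, satisfies $k_1+\max(k_2,k_3)\le n$}; call this $(\star)$. The feature of the network in Fig.~\ref{fnc_example} that I would exploit is that every path from the sources $1,2,3$ to the three sinks passes through the single edge $e$ from node $4'$ to node $4$. By (N1) and (N3) the vector carried on $e$ is $z=\psi_e(x)=g(x_1,x_2,x_3)$ for some function $g$ into $\mathbb{F}_q^{n}$. Since node $4$ has in-degree one, the data reaching sink $5$ is a function of $z$ alone, the data reaching sink $6$ is a function of $(z,x_3)$, and the data reaching sink $7$ is a function of $(z,x_2)$; and sinks $5,6,7$ recover $x_1,x_2,x_3$ respectively.

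To prove $(\star)$ I would count fibers of $g$. Fix $c\in\mathbb{F}_q^{n}$ and put $S_c=g^{-1}(c)\subseteq\mathbb{F}_q^{k_1}\times\mathbb{F}_q^{k_2}\times\mathbb{F}_q^{k_3}$. On $S_c$ the coordinate $x_1$ is constant (sink $5$ decodes it from $z=c$ only); the decoder at sink $6$ forces $x_2$ to be a function of $(c,x_3)$, and the decoder at sink $7$ forces $x_3$ to be a function of $(c,x_2)$. Hence the projection $S_c\to\mathbb{F}_q^{k_2}$, $x\mapsto x_2$, is injective (two points of $S_c$ with the same $x_2$ have the same $x_3$ and the same $x_1$), and likewise $x\mapsto x_3$ is injective, so $\lvert S_c\rvert\le q^{\min(k_2,k_3)}$. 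Summing over the at most $q^{n}$ fibers gives $q^{k_1+k_2+k_3}\le q^{\,n+\min(k_2,k_3)}$, i.e. $k_1+\max(k_2,k_3)\le n$. (Equivalently one can run the same argument with Shannon entropies: $H(x_1\mid z)=0$, $H(x_2\mid z,x_3)=0$, $H(x_3\mid z,x_2)=0$ give $H(x_2\mid z)=H(x_3\mid z)=H(x_2,x_3\mid z)\le\min(k_2,k_3)\log q$, and $H(x_1,x_2,x_3)=H(z)+H(x_2,x_3\mid z)$.)

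The three claims then follow. A scalar solution has $k_1=k_2=k_3=n=1$ and a $k$-dimensional vector solution has $k_1=k_2=k_3=n=k$; in either case $(\star)$ reads $2k\le k$, impossible, proving (1). For (2), suppose a rate tuple with $k_i/n\ge 1/2$ for $i=1,2,3$ and at least one inequality strict were achievable; dividing $(\star)$ by $n$ gives $k_1/n+\max(k_2/n,k_3/n)\le 1$, yet in each of the three cases (whichever coordinate is strict) the hypotheses give $k_1/n+\max(k_2/n,k_3/n)>1$ — a contradiction; since the constraint $(\star)$ is closed, this rules out every point of the rate region strictly above $(1/2,1/2,1/2)$. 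For (3), a uniform FNC solution has $k_1=k_2=k_3=k$, so $(\star)$ gives $k/n\le 1/2$, while the $(1,1,1;2)$-solution displayed in Fig.~\ref{fnc_example} attains $k/n=1/2$; hence the uniform coding capacity equals $1/2$ and is achieved.

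The one point I would take care with — and the only real obstacle — is making the reduction to the bottleneck edge $e$ fully rigorous for arbitrary, possibly non-linear, FNC solutions: one must check directly from (N1)--(N3) and the acyclic/in-degree-one structure that the inputs to sinks $5,6,7$ are indeed deterministic functions of $z$ (respectively $(z,x_3)$, $(z,x_2)$) and of nothing else. Once that is in place, everything reduces to the elementary fiber count above and a short case analysis, and no field-size or linearity hypothesis is needed.
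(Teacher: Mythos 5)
Your proposal is correct and takes essentially the same route as the paper: both rest on the bottleneck edge from node $4'$ to node $4$, with node 5 decoding $x_1$ from that edge alone and nodes 6, 7 decoding $x_2,x_3$ with side information $x_3,x_2$ respectively, which yields exactly the paper's inequalities $n-k_1\ge k_2$ and $n-k_1\ge k_3$ (your $(\star)$), from which all three parts follow as in the paper. Your fiber-counting/entropy step simply makes rigorous what the paper asserts informally as ``$k_1$ out of $n$ dimensions of the edge must carry $x_1$,'' so it applies verbatim to nonlinear solutions as well.
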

\begin{proof}
1) To satisfy the demand of node 5, the edge from 4' to 4 has to carry $x_1,$ which would mean that the demands of the nodes 6 and 7 cannot be met. Hence, the network shown in Fig. \ref{example1_construction} does not admit any scalar and vector solution. 

2) Consider a $(k_1,k_2,k_3;n)$-FNC solution for which $k_i/n\geq 1/2, \forall i \in \lceil 3 \rfloor.$ To satisfy the demand at node 5, $k_1$ out of $n$ dimensions of the edge joining 4' and 4 should carry $x_1.$ Hence, to satisfy the demands of node 6 and 7, the conditions $(n-k_1)\geq k_2$ and $(n-k_1)\geq k_3$ needs to be satisfied. Since, $k_i/n\geq 1/2, \forall i \in \lceil 3 \rfloor,$ we have $k_1+k_2=k_1+k_3=n,$ from which it follows that $k_i/n=1/2, \forall i \in \lceil 3 \rfloor.$

3) Every $(k,k,k;n)$-FNC solution for this network should satisfy the condition that $\frac{k}{n}\leq \frac{1}{2}.$ The reason for this is as follows: $k$ out of $n$ dimensions of the vector flowing in the edge joining 4' and 4 should carry $x_1$ to satisfy the demand of node 5. The demands of node 6 and node 7 should be met by what is carried in the remaining $n-k$ dimensions. Hence, $n-k$ should be at least $k$ to be able to satisfy the demands of nodes 6 and 7. 
\end{proof} 
\end{example}

In Example \ref{ex22}, a uniform FNC solution was provided. 
In Example \ref{ex23}, we provide a network with a non-uniform FNC solution and for which the average rate achieved by the FNC solution provided is greater than the uniform coding capacity.
\begin{example}
\label{ex23}
Let $A= \underbrace{\hspace{-.0 cm}\left[\begin{matrix} 1 & 0\\ 0 & 1\\0&0\\0&0 \end{matrix}\right.}_{A_1} \; \underbrace{\begin{matrix} 0\\ 0\\1\\0 \end{matrix}}_{A_2} \; \underbrace{\begin{matrix} 0\\ 0\\0\\1 \end{matrix}}_{A_3}  \; \underbrace{\left.\begin{matrix} 1&1\\ 1&0\\1&1\\1&0 \end{matrix}\right.}_{A_4}\; \underbrace{\left.\begin{matrix} 0&0\\ 0&1\\0&1\\1&0 \end{matrix}\right]}_{A_5}$ be a matrix over $\mathbb{F}_q.$ Let $V_i$ denote the column span of $A_i,$ \mbox{$i \in \lceil 5 \rfloor.$} The set of basis vectors for the discrete polymatroid $\mathbb{D}(V_1,V_2,V_3,V_4,V_5)$ is given by,
{
\begin{align*}
&\hspace{-.2 cm}\left\{(0,0,0,2,2),(0,0,1,2,1),(0,1,0,2,1),(0,1,1,1,1),\right.\\
&(0,1,1,2,0),(1,0,0,2,1),(1,0,1,1,1),(1,0,1,2,0),\\&(1,1,0,0,2),(1,1,0,1,1),(1,1,0,2,0),(1,1,1,0,1),\\&(1,1,1,1,0),(2,0,0,0,2),(2,0,0,1,1),(2,0,0,2,0),\\&(2,0,1,0,1),(2,0,1,1,0),(2,1,0,0,1),(2,1,0,1,0),\\
& \left.\hspace{5.75 cm}(2,1,1,0,0)\right\}.
\end{align*}
}
For this discrete polymatroid, it can be verified that the sets of reduced $i$-unit minimal excluded vectors $\mathcal{C}_i(\mathbb{D}),i \in \lceil 5 \rfloor$ are given by,
 \mbox{\small$\mathcal{C}_1(\mathbb{D})=\lbrace (1,0,0,2,2),(1,1,1,2,0)\rbrace,$}\\
\mbox{\small$\mathcal{C}_2(\mathbb{D})=\lbrace (0,1,0,2,2),(2,1,0,0,2),(2,1,0,2,0)\rbrace,$}
\mbox{\small$\mathcal{C}_3(\mathbb{D})=\lbrace (2,0,1,2,0),(0,0,1,0,2)\rbrace,$}\\
\mbox{\small$\mathcal{C}_4(\mathbb{D})=\lbrace (0,0,1,1,2),(2,1,1,1,0)\rbrace$ and}
\mbox{\small$\mathcal{C}_5(\mathbb{D})=\lbrace (0,0,1,2,1),(2,0,0,2,1),(2,1,1,0,1)\rbrace.$} \\The construction procedure for the discrete polymatroid considered is summarized in Table \ref{table4}. The different steps involved in the construction are depicted in Fig. \ref{example2_construction}. Since, in Step 1, the basis vector $b=(2,1,1,0,0)$ is used and $\rho_{max}(\mathbb{D})=\rho(\{5\})=2,$ the constructed network admits a linear $(2,1,1;2)$-FNC solution. The linear $(2,1,1;2)$-FNC solution shown in Fig. \ref{example2_construction} is obtained by taking the global encoding matrix of the edge joining 4' and 4 to be the matrix $A_4$ and that of the edge joining 5' and 5 to be the matrix $A_5.$ 
\begin{table}[h]
\centering
\includegraphics[totalheight=2.5 in,width=3.5 in]{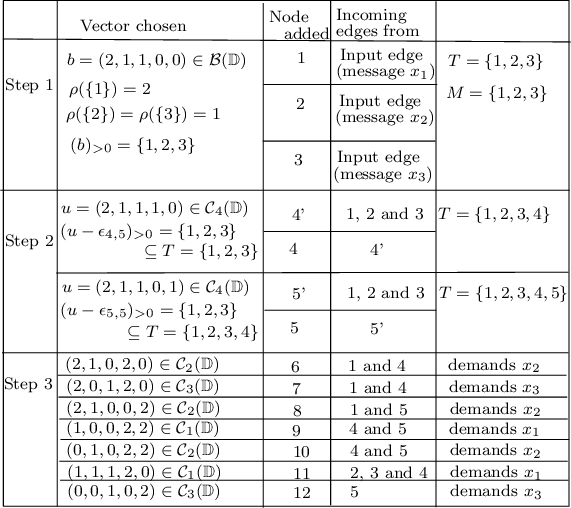}
\caption{Steps involved in the construction of the network in Example \ref{ex23}}
\label{table4}
\end{table} 
\begin{figure}[h]
\centering
\includegraphics[totalheight=3.5 in,width=3.5 in]{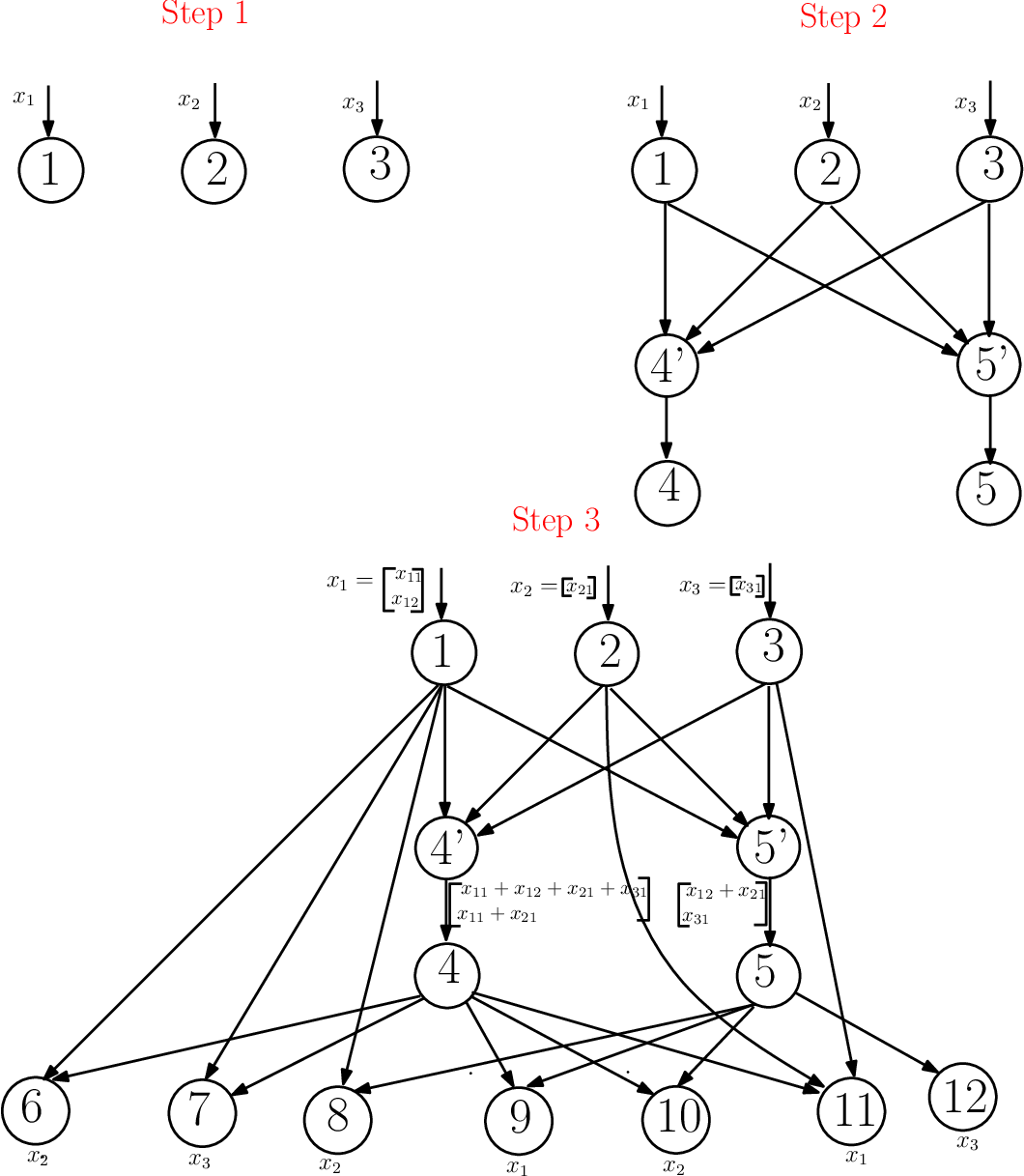}
\caption{Diagram showing the steps involved in the construction of the network in Example \ref{ex23}}
\label{example2_construction}
\end{figure} 
\end{example}


Lemma \ref{lemma5} below lists some of the properties of the network given in Fig. \ref{example2_construction}.

\begin{lemma}
\label{lemma5}
The network given in Fig. \ref{example2_construction} has the following properties:
\begin{enumerate}
\item
The network in Fig. \ref{example2_construction} does not admit any scalar or vector solution.

\item
The average coding capacity of the network in Fig. \ref{example2_construction} is 2/3. Hence, the solution provided in Fig. \ref{example2_construction} achieves the average coding capacity.
\item
The uniform coding capacity of the network in Fig. \ref{example2_construction} is 1/2. Hence the (2,1,1;2)-FNC solution provided in Fig. \ref{example2_construction} achieves an average rate of 2/3 which is strictly greater than the maximum average rate of 1/2 achievable using uniform FNC. 
\end{enumerate}
\begin{proof}
 1) To deliver message $x_3$ to node 12, the edge connecting nodes 5' and 5 needs to carry $x_3.$ In that case, message $x_2$ cannot be delivered to node 8, since the only path from node 2 which generates $x_2$ to node 8 contains the edge joining 5' and 5. Hence, the network in Fig. \ref{example2_construction} does not admit any scalar or vector solution.
  
2) To prove that the average coding capacity is 2/3, it needs to be shown that for all $(k_1,k_2,k_3;n)$-FNC solutions, $k_1+k_2+k_3 \leq 2n.$ First note that $k_1 \leq n.$ This follows from the fact that node 11 demands $x_1$ and there is only one path connecting the nodes 1 and 11. Hence, it can be assumed that the edges $1 \rightarrow 6$ and $1 \rightarrow 7$ carry $x_1.$ Since the nodes 6 and 7 demand $x_2$ and $x_3$ respectively, given the vectors carried in the edges $1 \rightarrow 7$ and $4' \rightarrow 4,$ one must be able to determine $x_1,x_2$ and $x_3.$ Hence, $k_1+k_2+k_3\leq 2n.$ Hence the average coding capacity is upper bounded by $2/3.$ Since the solution provided Fig. \ref{example2_construction} has an average achievable rate of 2/3, the average coding capacity is 2/3.
 
 3) For any $(k,k,k;n)$-FNC solution, $\frac{k}{n}$ cannot exceed $\frac{1}{2}.$ The reason is as follows: $k$ dimensions of the vector transmitted from 5' to 5 should carry $x_3$ and to ensure that node 8 gets $x_2,$ $n-k$ should be at least $k,$ i.e, $\frac{k}{n}\leq \frac{1}{2}.$ A uniform rate of $k/n=1/2$ can be achieved by choosing $x_{12}$ to be always zero in the FNC solution provided in Fig. \ref{example2_construction}.
\end{proof}
\end{lemma}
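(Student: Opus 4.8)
The plan is to derive all three parts from two structural bottlenecks of the network in Fig.~\ref{mfnc_example}, which I would read off its topology: (a) the edge $5'\to5$ is the unique route both for delivering $x_3$ to node~$12$ and for delivering $x_2$ to node~$8$; and (b) the edge pair $\{1\to7,\,4'\to4\}$ forms a cut after which the decoders at nodes~$6$ and~$7$, together with the fact that node~$1$ has access only to $x_1$, reconstruct all of $x_1,x_2,x_3$. Granting these, each statement reduces to a short dimension count, which for the linear case I would phrase as a rank inequality on the global encoding matrices, exactly in the spirit of the rank function of the discrete polymatroid associated with a linear FNC solution in Theorem~\ref{thm3}.

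\emph{Part (1).} Suppose a scalar or vector solution of dimension $k$ exists, so every edge and every message is a $k$-vector over $\mathbb{F}_q$. By~(a) the $k$-vector on $5'\to5$ must determine $x_3$; being itself $k$-dimensional it is then an invertible function of $x_3$ and carries no information about $x_2$. But by~(a) this same edge is the only carrier of $x_2$-information into node~$8$, so node~$8$ cannot recover its demand $x_2$, a contradiction. Hence no scalar or vector solution exists.

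\emph{Part (2).} For achievability, the displayed $(2,1,1;2)$-FNC solution has average rate $\tfrac13(\tfrac22+\tfrac12+\tfrac12)=\tfrac23$, and it is a genuine solution by Theorem~\ref{thm3} applied to $\mathbb{D}(V_1,\dots,V_5)$ of Example~\ref{ex10}. For the converse I would show $k_1+k_2+k_3\le 2n$ for every $(k_1,k_2,k_3;n)$-FNC solution: since node~$11$ demands $x_1$ over a unique path from its source, each (intermediate) edge on that path determines $x_1$, so $k_1\le n$, and because node~$1$ sees only $x_1$ we may assume $1\to6$ and $1\to7$ carry $x_1$ verbatim; then by~(b) the two $n$-symbol vectors on $1\to7$ and $4'\to4$ determine $x_1,x_2,x_3$, giving $k_1+k_2+k_3\le 2n$. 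Thus every achievable average rate is at most $\tfrac23$, and the displayed solution attains it, so the average coding capacity is $\tfrac23$.

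\emph{Part (3).} For a uniform $(k,k,k;n)$-FNC solution, by~(a) the $n$-symbol vector on $5'\to5$ must determine $x_3$ (which consumes $k$ symbols), and since it is the sole carrier of $x_2$-information into node~$8$ the remaining $n-k$ symbols must still determine $x_2$ ($k$ symbols), forcing $n-k\ge k$, i.e.\ $k/n\le\tfrac12$. Equality is realised by setting the second coordinate of $x_1$ to zero in the displayed $(2,1,1;2)$ solution, which yields a valid $(1,1,1;2)$-FNC solution of uniform rate $\tfrac12$; hence the uniform coding capacity equals $\tfrac12$, strictly below the average rate $\tfrac23$ of the non-uniform solution. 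The step I expect to be the real work is making the two bottleneck claims rigorous: one must verify from the figure that the named edges are genuinely the only cuts of the stated type, and, since FNC allows $k_i\neq n$, one cannot simply invoke single-commodity max-flow/min-cut but must track how many of the $n$ coordinates on a bottleneck edge are already spent on one message before another can be served.
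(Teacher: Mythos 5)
Your proposal follows the paper's own proof essentially step for step: part (1) uses the same $5'\rightarrow 5$ bottleneck shared by $x_3$ (to node 12) and $x_2$ (to node 8); part (2) uses the same two observations, $k_1\leq n$ from the unique path to node 11 and the cut $\{1\rightarrow 7,\,4'\rightarrow 4\}$ determining all three messages, giving $k_1+k_2+k_3\leq 2n$; and part (3) uses the same bottleneck count $n-k\geq k$ together with zeroing the second coordinate of $x_1$ (the paper's $x_{12}$) to achieve uniform rate $1/2$. The only differences are presentational (e.g., citing Theorem \ref{thm3} for achievability of the displayed solution), so the approach matches the paper's.
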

\section{Linear Index Coding and Discrete Polymatroid Representation}
In this section, we explore the connections between linear index coding and representable discrete polymatroids. In Section \ref{VI A}, it is shown that existence of a linear solution for an index coding problem is connected to the existence of a representable discrete polymatroid which satisfies certain conditions determined by the index coding problem considered. In Section \ref{VI B}, a construction of an index coding problem from a discrete polymatroid $\mathbb{D}$ is provided, which is a generalization of the construction from matroids provided in \cite{RoSpGe}. The constructed index coding problem admits a perfect linear index coding solution of dimension $n$ if and only if the discrete polymatroid $n\mathbb{D}$ is representable.
\subsection{Linear Index Coding and Discrete Polymatroid Representation}
\label{VI A}
The following theorem gives the necessary and sufficient condition in terms of discrete polymatroid representation, for the existence of a linear index code of length $c$ and dimension $n$ for an index coding problem $\mathcal{I}(X,\mathcal{R}).$
\begin{theorem}
\label{thm5}
A vector linear index code over $\mathbb{F}_q$ of length $c$ and dimension $n$ exists for an index coding problem $\mathcal{I}(X,\mathcal{R}),$ if and only if there exists a discrete polymatroid $\mathbb{D}$ representable over $\mathbb{F}_q$ on the ground set $\lceil m+1 \rfloor$ satisfying the following conditions:
\begin{itemize}
\item[(C1):]
$\rho(\{i\})=n, \forall i \in \lceil m \rfloor,$ $\rho(\lceil m \rfloor)=nm,$ $\rho(\{m+1\})=c$ and $rank(\mathbb{D})=nm.$
\item[(C2):]
$\forall (x_i,H) \in \mathcal{R},$ where $H=\{x_{j_1},x_{j_2},\dotso,x_{j_l}\},$\\ {\footnotesize$\rho(\{i\} \cup \{j_1,j_2,\dotso j_l\} \cup \{m+1\})=\rho(\{j_1,j_2,\dotso,j_l\}\cup \{m+1\}).$}
\end{itemize}
\begin{proof}
     
To prove the `if' part, assume that there exists a discrete polymatroid $\mathbb{D}$ of rank $nm$ representable over $\mathbb{F}_q$ which satisfies (C1) and (C2). Let $V_1,V_2,\dotso,V_m,V_{m+1}$ denote the vector subspaces over $\mathbb{F}_q$ which form a representation for $\mathbb{D}.$ From (C1), it follows that the vector subspaces $V_i, i \in \lceil m \rfloor,$ can be written as the column span of $mn \times n$ matrices $A_i$ over $\mathbb{F}_q,$ with $rank(A_i)=n$ and $rank([A_1 A_2 \dotso A_m])=mn.$ Also, the vector subspace $V_{m+1}$ can be written as the column span of a $mn \times c$ matrix $A_{m+1}$ over $\mathbb{F}_q$ which has a rank $c.$  Let $B$ denote the invertible $nm \times nm$ matrix given by $[A_1\; A_2\;\dotso\;A_m].$ Define $A'_i=B^{-1}A_i, i \in \lceil m+1\rfloor.$ The claim is that the map $f:\mathbb{F}^{nm}_q\rightarrow \mathbb{F}^{c}_{q}$ given by $f(y)=yA'_{m+1}$ forms an index code of length $c$ and dimension $n$ over $\mathbb{F}_q.$ Consider the receiver node $(x_i,H)\in \mathcal{R}$ where $H =\{x_{j_1},x_{j_2},\dotso,x_{j_l}\}.$ Let $b_H=[x_{j_1}\;x_{j_2}\;\dotso\;x_{j_l}].$ From (C2), it follows that the matrix $A_i$ can be written as $[A_{j_1}\;A_{j_2}\;\dotso  A_{j_l}\;A_{m+1}]M_i,$ where $M_i$ is of size $(n\vert H\vert+c) \times n.$ Hence, $A'_i=[A'_{j_1}\;A'_{j_2}\dotso A'_{j_l}\;A'_{m+1}]M_i.$ We have, $[b_H \; f(y)]=y [A'_{j_1}\;A'_{j_2}\dotso A'_{j_l}\;A'_{m+1}].$ The function $\psi_R$ defined as $\psi_R(H,f(y))=[b_H f(y)]M_i$ forms a valid decoding function at $R$ since $[b_H f(y)]M_i=y [A'_{j_1}\;A'_{j_2}\dotso A'_{j_l}\;A'_{m+1}]M_i=yA'_i=x_i.$

To prove the `only if' part, assume that there exists a vector linear index code $f$ of length $c$ and dimension $n$ for the index coding problem $\mathcal{I}(X,\mathcal{R}).$ Define $A_l$ to be the $nm \times n$ matrix with the $(i,j)^{th}$ entry being one for $i=(l-1)n+t, j= t,$ where $t \in \lceil n \rfloor$ and all other entries being zeros. The function $f$ can be written as $f(y)=y A_{m+1}$ where $ A_{m+1}$ is a matrix of size $nm \times c$ matrix over $\mathbb{F}_q.$ Define $V_i$ to be the column span of $A_i.$ It can be verified that the discrete polymatroid $\mathbb{D}(V_1,V_2,\dotso,V_{m+1})$ satisfies the condition (C1) and (C2).
\end{proof}
\end{theorem}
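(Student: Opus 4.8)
The plan is to prove the two implications separately, the common engine being the dictionary $\rho(A)=\dim\left(\sum_{i\in A}V_i\right)$ for a representation $V_1,\dots,V_{m+1}$ of $\mathbb{D}$, together with the translation of (C2) into a subspace containment. The first thing I would record is that, for a receiver $(x_i,H)$ with $H=\{x_{j_1},\dots,x_{j_l}\}$ and $S=\{j_1,\dots,j_l,m+1\}$, condition (C2), namely $\rho(\{i\}\cup S)=\rho(S)$, is equivalent to $V_i\subseteq V_{j_1}+\cdots+V_{j_l}+V_{m+1}$, since adjoining $V_i$ to $\sum_{s\in S}V_s$ fails to raise the dimension exactly when $V_i$ already lies in that sum. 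Here $V_{m+1}$ will play the role of the column space of the broadcast codeword and the $V_{j_t}$ the column spaces of the side-information messages, so that (C2) is precisely ``the demanded message is recoverable from side information plus codeword'' lifted to the level of subspaces.

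For the ``if'' direction I would start from a representation $V_1,\dots,V_m,V_{m+1}$ of a $\mathbb{D}$ satisfying (C1)--(C2) with $rank(\mathbb{D})=nm$. From (C1) and the rank condition the ambient space has dimension $nm$, each $V_i$ with $i\le m$ is $n$-dimensional, and $V_1+\cdots+V_m$ is the whole space; hence $V_i$ is the column span of an $nm\times n$ matrix $A_i$ of rank $n$ for $i\le m$, and $V_{m+1}$ is the column span of an $nm\times c$ matrix $A_{m+1}$ of rank $c$. I would then normalize: $B:=[A_1\ \cdots\ A_m]$ is invertible, and replacing each $A_i$ by $B^{-1}A_i$ leaves $\mathbb{D}$ unchanged while forcing $[A_1\ \cdots\ A_m]=I_{nm}$, so that $A_i$ is the $i$-th identity block and $yA_i=x_i$. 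Define the index code by $f(y)=yA_{m+1}$. For a receiver as above, the equivalence of the first paragraph yields a matrix $M_i$ of size $(nl+c)\times n$ with $A_i=[A_{j_1}\ \cdots\ A_{j_l}\ A_{m+1}]M_i$; since $[x_{j_1}\ \cdots\ x_{j_l}\ f(y)]=y[A_{j_1}\ \cdots\ A_{j_l}\ A_{m+1}]$, the linear map $(\cdot)\mapsto(\cdot)M_i$ is a valid decoder at that receiver, returning $yA_i=x_i$. Thus $f$ is a linear index code of length $c$ and dimension $n$.

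For the ``only if'' direction I would take a linear index code $f$ of length $c$ and dimension $n$, write $f(y)=yA_{m+1}$ for an $nm\times c$ matrix $A_{m+1}$ (first normalizing $A_{m+1}$ so that it has full column rank $c$: if its rank were smaller, extend its column span to rank $c$; the modified code still decodes, the original codeword being recoverable from the new one), let $A_i$ for $i\le m$ be the $i$-th identity block so that $x_i=yA_i$, let $V_i$ be the column span of $A_i$ for $i\in\lceil m+1\rfloor$, and set $\mathbb{D}:=\mathbb{D}(V_1,\dots,V_{m+1})$. Then $\rho(\{i\})=\dim V_i=n$ for $i\le m$, $\rho(\lceil m\rfloor)=\dim(V_1+\cdots+V_m)=nm$, $\rho(\{m+1\})=c$, and $rank(\mathbb{D})=\rho(\lceil m+1\rfloor)=nm$, which gives (C1) and the rank. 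For (C2), linearity of the decoder at $(x_i,H)$ provides an $M_i$ with $[x_{j_1}\ \cdots\ x_{j_l}\ f(y)]M_i=x_i$ for all $y$; rewriting the left side as $y[A_{j_1}\ \cdots\ A_{j_l}\ A_{m+1}]M_i$ and the right side as $yA_i$ and letting $y$ range over $\mathbb{F}_q^{nm}$ forces $A_i=[A_{j_1}\ \cdots\ A_{j_l}\ A_{m+1}]M_i$, i.e.\ $V_i\subseteq V_{j_1}+\cdots+V_{j_l}+V_{m+1}$, which is (C2) by the first paragraph.

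I expect the main obstacle to be bookkeeping rather than substance: getting the two normalizations right --- $[A_1\ \cdots\ A_m]=I_{nm}$ on the representation side and full column rank of $A_{m+1}$ on the code side --- and using carefully that after the first normalization $x_{j_t}=yA_{j_t}$, which is exactly what permits concatenating the side-information blocks and the codeword block into the single matrix $[A_{j_1}\ \cdots\ A_{j_l}\ A_{m+1}]$ that appears in the decoding identity. One should also note the mild non-redundancy caveat on $c$ hidden in the full-rank normalization. Beyond this, once the dictionary $\rho(A)=\dim(\sum_{i\in A}V_i)$ and the equivalence of (C2) with the subspace containment are in place, both directions reduce to writing linear maps as matrices over $\mathbb{F}_q$ and tracking column spans.
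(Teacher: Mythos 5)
Your proposal is correct and follows essentially the same route as the paper's proof: translate (C2) into the subspace containment $V_i\subseteq\sum_{t}V_{j_t}+V_{m+1}$, normalize via $B^{-1}$ so that $[A_1\,\cdots\,A_m]$ is the identity, take $f(y)=yA_{m+1}$ with decoders built from the matrices $M_i$, and reverse the construction for the converse. The only difference is your explicit full-column-rank normalization of $A_{m+1}$ in the converse, a detail the paper leaves implicit in its ``it can be verified'' step, and which your patch handles correctly.
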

\subsection{Construction of an Index Coding Problem from a Discrete Polymatroid}
\label{VI B}
In \cite{RoSpGe}, a construction of an index coding problem $\mathcal{I}_{\mathbb{M}}(Z,\mathcal{R})$ from a matroid $\mathbb{M}$ was provided and it was shown that a perfect linear index coding solution of dimension $n$ over $\mathbb{F}_q$ exists for the index coding problem $\mathcal{I}_{\mathbb{M}}(Z,\mathcal{R})$ if and only if the matroid $\mathbb{M}$ has a multi-linear representation of dimension $n$ over $\mathbb{F}_q.$ This result implies a reduction from the problem of finding a multi-linear representation of dimension $n$ over $\mathbb{F}_q$ for a matroid $\mathbb{M}$ to the problem of finding a perfect linear solution of dimension $n$ over $\mathbb{F}_q$ for the index coding problem $\mathcal{I}_{\mathbb{M}}(Z,\mathcal{R}).$ 

In this subsection, we provide a construction of an index coding problem $\mathcal{I}_{\mathbb{D}}(Z,\mathcal{R})$ from a discrete polymatroid $\mathbb{D},$ which when specialized to the discrete polymatroid $\mathbb{D}(\mathbb{M}),$ $\mathbb{M}$ being a matroid, reduces to the construction given in \cite{RoSpGe}. We establish the connection between the existence of a perfect linear solution of dimension $n$ for $\mathcal{I}_{\mathbb{D}}(X,\mathcal{R})$ and the representability of the discrete polymatroid $n\mathbb{D}.$ Note that unlike the construction provided in Section \ref{IV B}, the construction provided in this subsection is applicable for any arbitrary discrete polymatroid.

The construction of the index coding problem $\mathcal{I}_{\mathbb{D}}(Z,\mathcal{R})$ from the discrete polymatroid $\mathbb{D}$ with  $rank(\mathbb{D})=k$ is provided below:
\begin{itemize}
\item[(i)]
The set of messages $Z=X \cup Y,$ where
\begin{itemize}
\item
 $X=\{x_1,x_2,\dotso,x_k\}$ and 
\item 
 $Y=\{y^1_1,y^2_1,\dotso,y^{\rho(\{1\})}_{1,}y^1_2,y^2_2,\dotso,y^{\rho(\{2\})}_{2,}$\\$~~~~~~~~~~~~~~~~~~~~~~~~~~~~~~\dotso,y^1_r,y^2_r,\dotso,y^{\rho(\{r\})}_r\}.$\\
 Let $\zeta_i=\{y^1_i,y^2_i,\dotso y^{\rho(\{i\})}_{i}\}.$
 \end{itemize}
 \item[(ii)]
 The set of receivers $\mathcal{R}=R_1 \cup R_2 \cup R_3,$ where the sets $R_1,R_2$ and $R_3$ are as defined below.
 \begin{itemize}
 \item[(a)]
 For a basis vector $b =\sum_{i \in \lceil r \rfloor}b_i\epsilon_{i,r} \in \mathcal{B}(\mathbb{D}),$ the set $S_1(b)$ is defined as 
{\small $$S_1(b)=\left\{\left(x_j,\bigcup_{l \in (b)_{>0}} \eta_l\right):j \in \lceil k \rfloor, \eta_l \subseteq \zeta_l, \vert \eta_l \vert =b_l\right\}.$$} 
Define 
$\displaystyle{R_1=\bigcup_{{b}\in \mathcal{B}(\mathbb{D})} S_1(b).}$ 
\item[(b)]
For a minimal excluded vector $c =\sum_{i \in \lceil r \rfloor}c_i\epsilon_{i,r} \in \mathcal{C}(\mathbb{D}),$ $j \in (c)_{>0},$ and $p \in \lceil \rho(\{j\})\rfloor,$ the set $S_2(c,j,p)$ is defined as 
{
\begin{align*}
&S_2(c,j,p)=\\
&\{(y_j^p,\Gamma_1 \cup \Gamma_2):\Gamma_1=\hspace{-.4 cm}\bigcup_{l \in (c)_{>0}\setminus \{j\}} \hspace{-.4 cm}\eta_l,\;\\& \eta_l \subseteq \zeta_l, \vert \eta_l \vert=c_l, \Gamma_2\subseteq \zeta_{j} \setminus \{y^p_j\}, \vert \Gamma_2\vert=c_j-1\}.
\end{align*}}
Define $\displaystyle{R_2=\bigcup_{c \in \mathcal{C}(\mathbb{D})}\bigcup_{j \in (c)_{>0}}\bigcup_{p \in \lceil \rho(\{j\})\rfloor}S_2(c,j,p)}.$
\item[(c)]
Define $R_3=\left\{\left(y_i^j,X\right):i \in \lceil r \rfloor, j \in \lceil \rho(\{i\})\rfloor\right\}.$ 
 \end{itemize}
 \end{itemize}

For the index coding problem $\mathcal{I}_{\mathbb{D}}(Z,\mathcal{R})$ defined above, $\mathcal{M}(\mathcal{I}_{\mathbb{D}}(Z,\mathcal{R}))=\sum_{i \in \lceil r \rfloor}\rho(\{i\}).$
 
 For a matroid $\mathbb{M},$ the index coding problem $\mathcal{I}_{\mathbb{D}(\mathbb{M})}(Z, \mathcal{R})$ reduces to the index coding problem $\mathcal{I}_{\mathbb{M}}(Z,\mathcal{R})$ provided in Section IV-B in \cite{RoSpGe}.
 
\begin{example}
\label{ex31}
Consider the discrete polymatroid provided in Example \ref{ex7}. We have $rank(\mathbb{D})=k=3.$ The index coding problem $\mathcal{I}_{\mathbb{D}}(Z,\mathcal{R})$ is as follows:
\begin{itemize}
\item[(i)]
The set of messages $Z=X \cup Y,$ where $X=\{x_1,x_2,x_3\}$ and $Y=\{y^1_1,y^2_1,y^1_2,y^2_2,y^1_3\}.$ 
\item[(ii)]
The set of receivers $\mathcal{R}=R_1 \cup R_2 \cup R_3$ where $R_1,R_2$ and $R_3$ are as given below:
\begin{itemize}
\item[(a)]
As mentioned in Example \ref{ex8}, the set of basis vectors\\
\mbox{$\mathcal{B}(\mathbb{D})=\{(1,1,1),(1,2,0),(2,0,1),(2,1,0))\}.$}
We have, 
{
\begin{align*}
&S_1\left(\left(1,1,1\right)\right)=\left\{\left(x_i,\left\{y_1^j,y_2^k,y_3^1\right\}\right):i\in \lceil 3 \rfloor\right.,\\
&\hspace{5.3 cm} \left. j,k \in \lceil 2 \rfloor\right\},\\
&S_1\left(\left(1,2,0\right)\right)=\left\{\left(x_i,\left\{y_1^j,y_2^1,y_2^2\right\}\right):i\in \lceil 3 \rfloor,\right.\\ 
&\left.\hspace{5.75 cm}j\in \lceil 2 \rfloor\right\},\\
 &S_1\left(\left(2,0,1\right)\right)=\left\{\left(x_i,\left\{y_1^1,y_1^2,y_3^1\right\}\right):i\in \lceil 3 \rfloor \right\},\\
 &S_1\left(\left(2,1,0\right)\right)=\left\{\left(x_i,\left\{y_1^1,y_1^2,y_2^j\right\}\right):i\in \lceil 3 \rfloor, \right.\\
 &\left.\hspace{5.75 cm}j\in \lceil 2 \rfloor\right\},\\
&R_1=S_1\left(\left(1,1,1\right)\right)\cup S_1\left(\left(1,2,0\right)\right) \cup S_1\left(\left(2,0,1\right)\right) \\
&\hspace{5.3cm}\cup S_1\left(\left(2,1,0\right)\right).
\end{align*}}
\item[(b)]
From Example \ref{ex19}, it follows that the minimal excluded vectors for $\mathbb{D}$ are \mbox{$c_1=(0,2,1),$} \mbox{$c_2=(2,1,1)$} and \mbox{$c_3=(2,2,0).$} We have,
{
\begin{align*}
&S_2(c_1,2,1)=\{(y^1_2,\{y^2_2,y^1_3\})\},\\
&S_2(c_1,2,2)=\{(y^2_2,\{y^1_2,y^1_3\})\},\\
&S_2(c_1,3,1)=\{(y^1_3,\{y^1_2,y^2_2\})\},\\
&S_2(c_2,1,1)=\{(y^1_1,\{y^2_1,y^i_2,y^1_3\}):i \in \lceil 2 \rfloor\},\\
&S_2(c_2,1,2)=\{(y^2_1,\{y^1_1,y^i_2,y^1_3\}):i \in \lceil 2 \rfloor\},\\
&S_2(c_2,2,1)=\{(y^1_2,\{y^1_1,y^2_1,y^1_3\})\},\\
&S_2(c_2,2,2)=\{(y^2_2,\{y^1_1,y^2_1,y^1_3\})\},\\
&S_2(c_2,3,1)=\{(y^1_3,\{y^1_1,y^2_1,y^i_2\}):i \in \lceil 2 \rfloor\},\\
&S_2(c_3,1,1)=\{(y^1_1,\{y^2_1,y^1_2,y^2_2\})\},\\
&S_2(c_3,1,2)=\{(y^2_1,\{y^1_1,y^1_2,y^2_2\})\},\\
&S_2(c_3,1,2)=\{(y^2_1,\{y^1_1,y^1_2,y^2_2\})\},\\
&S_2(c_3,2,1)=\{(y^1_2,\{y^1_1,y^2_1,y^2_2\})\},\\
&S_2(c_3,2,2)=\{(y^2_2,\{y^1_1,y^2_1,y^1_2\})\} \text{\;and}
\end{align*}
}
$\displaystyle{R_2=\bigcup_{c \in \{c_1,c_2,c_3\}}\bigcup_{j \in (c)_{>0}}\bigcup_{p \in \lceil \rho(\{j\})\rfloor}S_2(c,j,p)}.$
\item[(c)]
The set $R_3$ is given by,
\mbox{$\displaystyle R_3=\{(y^1_1,X),(y^2_1,X),(y^1_2,X),(y^2_2,X),(y^1_3,X)\}.$}
\end{itemize}
\end{itemize}
For the index coding problem constructed in this example, we have $\mathcal{M}(\mathcal{I}_{\mathbb{D}}(Z,\mathcal{R}))=5.$
\end{example} 
 
 In the following theorem, it is shown that existence of a perfect linear index coding solution of dimension $n$ over $\mathbb{F}_q$ for $\mathcal{I}_{\mathbb{D}}(Z, \mathcal{R})$ implies the existence of a representation for the discrete polymatroid $n\mathbb{D}$ over $\mathbb{F}_q.$ 
 \begin{theorem}
 \label{Thm_only_if}
 If a perfect linear index coding solution of dimension $n$ over $\mathbb{F}_q$ exists for the index coding problem  $\mathcal{I}_{\mathbb{D}}(Z, \mathcal{R}),$ then the discrete polymatroid $n\mathbb{D}$ is representable over $\mathbb{F}_q.$
\begin{proof} 
 Let $t=(k+\sum_{i =1}^{r}\rho(\{i\}))$ denote the number of messages in the index coding problem $\mathcal{I}_{\mathbb{D}}(Z, \mathcal{R}).$ If  a perfect linear index coding solution of dimension $n$ over $\mathbb{F}_q$ exists for the index coding problem  $\mathcal{I}_{\mathbb{D}}(Z, \mathcal{R})$ over $\mathbb{F}_q,$ then from Theorem \ref{thm5}, there exists a discrete polymatroid $\mathbb{D}'$ representable over $\mathbb{F}_q$ of rank $nt$ on the ground set $\lceil t+1\rfloor$ satisfying conditions (C1) and (C2). Let $V_1,V_2,\dotso,V_{t},V_{t+1}$ denote the vector subspaces over $\mathbb{F}_q$ which form a representation for $\mathbb{D}'.$  From (C1), it follows that $dim(V_i)=n, i \in \lceil t\rfloor$ and $dim(V_{t+1})=n\sum_{i=1}^r \rho(\{i\}).$ Let $A_i,i \in \lceil t\rfloor,$ denote an $nt \times n$ matrix whose columns span $V_i$ and let $A_{t+1}$ denote an $nt \times n(\sum_{i=1}^r \rho(\{i\}))$ matrix whose columns span $V_{t+1}.$ From (C1), it follows that  $rank([A_1 \; A_2 \; \dotso A_{t+1}])=nt.$ Since the matrix $B=[A_1 \; A_2 \; \dotso A_{t}]$ is invertible, it can be taken to be the identity matrix of size $ nt.$ Otherwise, define $A'_i=B^{-1}A_i, i \in \lceil t+1 \rfloor$ and vector subspaces given by the column spans of $A'_i$ will also form a representation for $\mathbb{D}'.$ 
 
Let $A_{t+1}=[C^T D^T]^T,$ where $C$ and $D$ are matrices of size $nk \times n \sum_{i=1}^{r}\rho(\{i\})$ and \\$n \sum_{i=1}^{r}\rho(\{i\})\times n \sum_{i=1}^{r}\rho(\{i\})$ respectively. The matrix $D$ has to be full rank, since (C2) needs to be satisfied for receivers $R \in R_3.$  We can assume $D$ to be identity matrix, otherwise we can define $A'_{t+1}=A_t D^{-1},$ so that the column spans of $A_{t+1}$ and $A'_{t+1}$ are the same. Let $C_i, i \in \lceil r \rfloor,$ denote the matrix obtained by taking only the $({n\sum_{j=1}^{i-1}\rho(\{j\})+1})^{th}$ to $({n \sum_{j=1}^{i}\rho(\{i\}}))^{th}$ columns of $C.$ Let $C_{i,j},j \in \lceil \rho(\{i\})\rfloor$ denote the $n k \times n$ matrix obtained by taking the ${(j-1)n+1}^{th}$ to ${jn}^{th}$ columns of $C_i.$

Let $V'_i$ denote the column span of $C_i$ and $V'_{i,j}$ denote the column span of $C_{i,j}.$ It is claimed that the vector subspaces $V'_i, i \in \lceil r \rfloor,$ form a representation for the discrete polymatroid $n\mathbb{D}.$ To prove the claim, it needs to be shown that for all $S \subseteq \lceil r \rfloor,$ $dim(\sum_{i \in S}V'_i)=n\rho(S).$

We have $\displaystyle{\rho(S)=\max_{b \in \mathbb{D}}\vert b(S)\vert}.$ For $S \subseteq \lceil r \rfloor,$ let $\displaystyle{b^S=\arg \max_{b \in \mathbb{D}}\vert b(S)\vert}.$ 
Let $b_i^S$ denote the $i^{th}$ component of $b^S.$ The vector $b^S$ should be a basis vector for $\mathbb{D},$ otherwise there should exist a basis vector ${\tilde{b}}^S$ of $\mathbb{D}$ for which $b^S<{\tilde{b}}^S$ and $\vert b^S(S) \vert \leq \vert {\tilde{b}}^S(S)\vert.$  
Choose $b^S_i$ vector subspaces from the set $\mathcal{V}_i=\{V'_{i,j}:j \in \lceil \rho(\{i\})\rfloor\},$ denoted as ${V'}_{i,o_1},{V'}_{i,o_2},\dotso,{V'}_{i,o_{b^S_i}}$ for every $i\in \lceil r \rfloor.$  
Let $\tilde{V}_i=\sum_{j \in \lceil {b^S_i} \rfloor}V'_{i,o_j}.$ From the fact that (C2) needs to be satisfied for the receivers which belong to $S_1(b^S),$ it follows that  
$dim(\sum_{ i \in \lceil r \rfloor} \tilde{V}_i)=n \vert b^{S}\vert=n\:rank(\mathbb{D}).$  As a result, we have $dim(\sum_{ i \in S}\tilde{V}_i)=n \vert b^{S}(S)\vert.$ Since the vector subspace $\tilde{V}_i$ is a subspace of $V'_i,$ we have $dim(\sum_{i \in S}V'_i)\geq n\vert b^S(S)\vert.$ To complete the proof, it needs to be shown that $dim(\sum_{i \in S}V'_i)\leq n\vert b^S(S)\vert.$

Let $S=\{s_1,s_2\dotso,s_m\}\cup \{s_{m+1},s_{m+2},\dotso,s_l\},$ where $b^S_{s_i}<\rho(\{s_i\}),$ for $i \in \lceil m \rfloor$ and $b^S_{s_i}=\rho(s_i),$ for $i \in \{m+1,m+2,\dotso,l\}.$ Consider the vector $u=(b^S_{s_1}+1)\epsilon_{s_1,r}+\sum_{i \in S \setminus \{s_1\}} b^S_i \epsilon_{i,r}.$ The vector $u$ is an excluded vector. Otherwise, the choice of $b^S,$ $\displaystyle{b^S=\arg \max_{b \in \mathbb{D}}\vert b(S)\vert}$ is contradicted, since $\vert u(S) \vert=\vert b^S(S)\vert +1.$ Let $u_m$ be a minimum excluded vector for which $u_m\leq u.$ The ${s_1}^{th}$ component of $u_m$ has to be $b^S_{s_1}+1,$ otherwise $u_m$ satisfies $u_m <b^S$ and hence cannot be an excluded vector. The vector $u_m$ can be written as $(b^S_{s_1}+1)\epsilon_{s_1,r}+\sum_{i \in S \setminus \{s_1\}}c^S_i \epsilon_{i,r},$ where $c^S_i \leq b^S_i.$ From the fact that (C2) needs to be satisfied for the receivers which belong to the set $S_2(u_m,s_1,p),$ $\forall p \in \lceil \rho(\{s_1\})\rfloor \setminus \{o_1,o_2,\dotso, o_{b^{S}_{s_1}}\},$ it follows that, 
\begin{align*} 
 \sum_{p \in \lceil \rho(\{s_1\})\rfloor \setminus \{o_1,o_2,\dotso, o_{b^{S}_{s_1}}\}} V'_{s_1,p} &\subseteq\\
 &\hspace{-1 cm}\sum_{i \in (u_m)_{>0}\setminus \{s_1\}}\tilde{V}_i +\sum_{j \in \lceil b^{S}_{s_1}\rfloor} V'_{s_1,o_j}.
 \end{align*} 

From the above equation it follows that $\sum_{p \in \lceil \rho(\{s_1\})\rfloor}V'_{s_1,p} \subseteq \sum_{i \in (u_m)_{>0}} \tilde{V}_i \subseteq \sum_{i \in S} \tilde{V}_i.$ By a similar reasoning, $V'_{s_j} \subseteq \sum_{i \in S} \tilde{V}_i, \forall j \in \lceil m \rfloor.$ Since $b^S_{s_j}=\rho(\{s_j\}),$ for $j \in \{m+1,m+2,\dotso,l\},$ we have $V'_{s_j}=\tilde{V}_{s_j},$ for $j \in \{m+1,m+2,\dotso,l\}.$ From the above facts, we have $\sum_{i \in S}V'_i=\sum_{i \in \lceil l \rfloor} V'_{s_i} \subseteq \sum_{i \in S}\tilde{V}_i.$ Hence, $dim(\sum_{i \in S}V'_i)\leq dim(\sum_{ i \in S}\tilde{V}_i)=n \vert b^{S}(S)\vert.$ This completes the proof.
 
 \end{proof}
 \end{theorem}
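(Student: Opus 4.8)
The plan is to feed a perfect linear index code for $\mathcal{I}^{\mathbb{D}}(Z,\mathcal{R})$ into Theorem~\ref{thm5} and then read a representation of $n\mathbb{D}$ off a single block of the resulting code matrix. Write $t=k+\sum_{i=1}^{r}\rho(\{i\})$ for the number of messages of $\mathcal{I}^{\mathbb{D}}(Z,\mathcal{R})$; since $\mathcal{M}(\mathcal{I}^{\mathbb{D}}(Z,\mathcal{R}))=\sum_{i=1}^{r}\rho(\{i\})$, a perfect solution of dimension $n$ has length $c=n\sum_{i=1}^{r}\rho(\{i\})$. By Theorem~\ref{thm5} this solution yields a discrete polymatroid $\mathbb{D}'$ representable over $\mathbb{F}_q$, on ground set $\lceil t+1\rfloor$, of rank $nt$, satisfying (C1)--(C2) with $\rho'(\{t+1\})=c$. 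I would fix subspaces $V_1,\dots,V_{t+1}$ realizing $\mathbb{D}'$, written as column spans of matrices $A_i$, with $x_1,\dots,x_k$ taken as the first $k$ messages. Condition (C1) forces $\dim(\sum_{i\in A}V_i)=n\vert A\vert$ for every $A\subseteq\lceil t\rfloor$ (same argument as in the proof of Theorem~\ref{thm1}), so $[A_1\cdots A_t]$ is invertible and may be normalized to $I_{nt}$; then (C2) for the receivers in $R_3$ forces the lower $c\times c$ block $D$ of $A_{t+1}$ to be invertible, and after replacing $A_{t+1}$ by $A_{t+1}D^{-1}$ (which does not change $V_{t+1}$) we may assume $A_{t+1}=[C^{T}\,I_c^{T}]^{T}$ with $C$ of size $nk\times c$.

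I would then partition $C$ columnwise into blocks $C_i$ ($i\in\lceil r\rfloor$, with the $n\rho(\{i\})$ columns indexed by $\zeta_i$) and further into blocks $C_{i,j}$ of $n$ columns each, and set $V'_i$, $V'_{i,j}$ to be their column spans inside $\mathbb{F}_q^{nk}$. The goal is to show $\dim(\sum_{i\in S}V'_i)=n\rho(S)$ for every $S\subseteq\lceil r\rfloor$. Two observations drive the argument. First, $\rho(S)=\vert b^S(S)\vert$, where $b^S$ is a basis vector of $\mathbb{D}$ maximizing $\vert b(S)\vert$ over $b\in\mathbb{D}$; such a $b^S$ exists because a maximizer can be enlarged to a basis vector without decreasing $\vert b(S)\vert$. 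Second, and this is the technical heart of the proof, in the normalized coordinates every message subspace is a coordinate subspace lying entirely in the top $\mathbb{F}_q^{nk}$ factor (for $x_q$) or entirely in the bottom $\mathbb{F}_q^{c}$ factor (for $y_i^j$), while $V_{t+1}$ is the column span of $[C^{T}\,I_c^{T}]^{T}$; consequently every containment ``demand subspace $\subseteq V_{t+1}+\sum_{\text{side info}}(\text{message subspace})$'' supplied by (C2) collapses, after matching the bottom block, to a containment among column spans of $C$-blocks only, the $V_{t+1}$-part being forced to cancel its own bottom contribution. This is what makes the code subspace disappear from all the subspace relations.

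With these in hand the lower bound is quick: choosing, for each $i$, a set of $b^S_i$ indices $o_1,\dots,o_{b^S_i}$ and putting $\tilde V_i=\sum_{l}V'_{i,o_l}$, the collapsing applied to the receivers in $S_1(b^S)\subseteq R_1$ (which demand all of $x_1,\dots,x_k$ using exactly the $y$-messages $\{y_i^{o_l}\}$ as side information) yields $\sum_{i\in\lceil r\rfloor}\tilde V_i=\mathbb{F}_q^{nk}$; since $\dim\tilde V_i\le nb^S_i$ and $\sum_i nb^S_i=n\,rank(\mathbb{D})=nk$, the $\tilde V_i$ are forced into direct sum with $\dim\tilde V_i=nb^S_i$, whence $\dim(\sum_{i\in S}V'_i)\ge\dim(\sum_{i\in S}\tilde V_i)=n\vert b^S(S)\vert=n\rho(S)$.

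The reverse inequality is where I expect the real work to lie. I would split $S$ into the coordinates $s$ with $b^S_s=\rho(\{s\})$, for which $V'_s=\tilde V_s$ already, and those with $b^S_s<\rho(\{s\})$. For the latter, form $u=(b^S_s+1)\epsilon_{s,r}+\sum_{i\in S\setminus\{s\}}b^S_i\epsilon_{i,r}$, which is an excluded vector since $\vert u(S)\vert=\vert b^S(S)\vert+1>\rho(S)$; take a minimal excluded vector $u_m\le u$ and check that its $s$-component must equal $b^S_s+1$ (otherwise $u_m\le b^S\in\mathbb{D}$). Then, for each $p\in\lceil\rho(\{s\})\rfloor\setminus\{o_1,\dots,o_{b^S_s}\}$, the receiver in $S_2(u_m,s,p)\subseteq R_2$ --- whose side information can be taken among the already-selected $y$-messages --- together with the collapsing gives $V'_{s,p}\subseteq\sum_{i\in S}\tilde V_i$, hence $V'_s\subseteq\sum_{i\in S}\tilde V_i$. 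Summing over all $s\in S$ yields $\sum_{i\in S}V'_i\subseteq\sum_{i\in S}\tilde V_i$, so $\dim(\sum_{i\in S}V'_i)\le n\vert b^S(S)\vert=n\rho(S)$, and combined with the lower bound this shows that $\{V'_i\}_{i\in\lceil r\rfloor}$ represents $n\mathbb{D}$. Beyond the collapsing observation, the main obstacle is bookkeeping: the selected index sets $\{o_\bullet\}$ must be kept coherent across the $R_1$ and $R_2$ receivers, and one must verify that the side-information sets prescribed by the definitions of $S_1(b)$ and $S_2(c,j,p)$ can indeed be realized using exactly those selected messages.
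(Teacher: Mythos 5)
Your proposal follows essentially the same route as the paper's own proof: invoke Theorem \ref{thm5} to obtain a representable discrete polymatroid satisfying (C1)--(C2), normalize $[A_1\cdots A_t]$ and the bottom block of $A_{t+1}$ to identities, read the candidate subspaces $V'_i$ off the blocks $C_{i,j}$ of $C$, prove the lower bound $dim(\sum_{i\in S}V'_i)\geq n\vert b^S(S)\vert$ from the $S_1(b^S)$ receivers via a dimension/direct-sum count, and prove the upper bound by passing to a minimal excluded vector $u_m\leq u$ and using the $S_2(u_m,s,p)$ receivers. The ``collapsing'' observation you single out is exactly what the paper uses implicitly when it works with the $C$-blocks after setting $D=I$, and the bookkeeping you flag is unproblematic because $S_2(c,j,p)$ contains every cardinality-consistent choice of side information, so the receivers you need are present.
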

 

For a basis vector $b \in \mathcal{B}(\mathbb{D}),$ let $b_i$ denote the $i^{th}$ component of $b.$ 
Define {\small $${N(\mathbb{D})=1+\max_{i \in \lceil r \rfloor} \sum_{b \in \mathcal{B}(\mathbb{D}):b_i>0}\left({\rho(\{i\})\choose {b_i-1}}\prod_{j \in (b)_{>0}\setminus\{i\}}{\rho(\{j\})\choose b_j}\right)}.$$}
 The following theorem shows that the converse of Theorem \ref{Thm_only_if} holds for fields of sufficiently large size.
 \begin{theorem}
 \label{Thm_if}
 If the discrete polymatroid $n\mathbb{D}$ is representable over $\mathbb{F}_q,$ then a perfect linear solution of dimension $n$ exists for the index coding problem $\mathcal{I}_{\mathbb{D}}(Z, \mathcal{R})$ over $\mathbb{F}_{q'},$ where $\mathbb{F}_q'$ is an extension field of $F_q$ with size $q'>N(\mathbb{D}).$
\begin{proof}
See Appendix \ref{App_thm_if}.
\end{proof}
 \end{theorem}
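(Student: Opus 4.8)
The plan is to run the proof of Theorem \ref{Thm_only_if} in reverse: by Theorem \ref{thm5} it suffices to exhibit a linear index code of dimension $n$ and length $c=n\,\mathcal{M}(\mathcal{I}^{\mathbb{D}}(Z,\mathcal{R}))=n\sum_{i\in\lceil r\rfloor}\rho(\{i\})$ over $\mathbb{F}_{q'}$, or equivalently a discrete polymatroid on $\lceil t+1\rfloor$ (with $t=k+\sum_i\rho(\{i\})$ the number of messages, $k=rank(\mathbb{D})$) that is representable over $\mathbb{F}_{q'}$ and meets (C1)--(C2) for $\mathcal{I}^{\mathbb{D}}(Z,\mathcal{R})$. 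Since $c$ equals the total dimension of the $Y$-messages, I look for a code of the form $f(x,y)=y+xA$, where $x$ is the concatenation of the $X$-messages (a row vector of length $nk$), $y$ that of the $Y$-messages (length $c$), and $A=[A_1\ \cdots\ A_r]$ with $A_i$ of size $nk\times n\rho(\{i\})$, itself split into width-$n$ blocks $A_i^{(1)},\dots,A_i^{(\rho(\{i\}))}$ attached to $y_i^1,\dots,y_i^{\rho(\{i\})}$. (Taking the column space of $[A\mid I_c]$ as the ground-set-$(t+1)$ subspace and coordinate blocks for the messages realizes the discrete polymatroid required by Theorem \ref{thm5}, so producing such an $A$ is enough.) After the cancellation $y_i=f_i-xA_i$, a decoder at $(z,H)$ succeeds iff the block attached to $z$ lies in the span of the blocks attached to $H$; this yields: receivers in $R_3$ always decode; a receiver in $R_1$ from $b\in\mathcal{B}(\mathbb{D})$ and a selection $\eta_l\subseteq\zeta_l$, $|\eta_l|=b_l$, recovers $x$ (in particular its demanded $x_j$) iff the $nk$ columns $\{A_l^{(p)}:l\in(b)_{>0},\ y_l^p\in\eta_l\}$ are independent (the count $n\sum_l b_l=n|b|=nk$ is exact because $\rho(\lceil r\rfloor)=k$); and a receiver in $R_2$ from $c\in\mathcal{C}(\mathbb{D})$, $j_0\in(c)_{>0}$, $p_0$, with selections $\eta_l$ $(l\ne j_0)$ and $\Gamma_2\subseteq\zeta_{j_0}\setminus\{y_{j_0}^{p_0}\}$, decodes $y_{j_0}^{p_0}$ iff $\mathrm{col}(A_{j_0}^{(p_0)})\subseteq\sum_{l\ne j_0,\ y_l^p\in\eta_l}\mathrm{col}(A_l^{(p)})+\sum_{y_{j_0}^p\in\Gamma_2}\mathrm{col}(A_{j_0}^{(p)})$.

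Next I would build $A$ from the hypothesis. Base-change the representation of $n\mathbb{D}$ to $\mathbb{F}_{q'}$, getting subspaces $W_1,\dots,W_r$ of $\mathbb{F}_{q'}^{nk}$ with $\dim(\sum_{i\in S}W_i)=n\rho(S)$ for all $S$, and in particular $\sum_i W_i=\mathbb{F}_{q'}^{nk}$. Require $\mathrm{col}(A_i)=W_i$; the only remaining freedom is the choice, inside each $W_i$, of an \emph{ordered direct-sum decomposition} $W_i=W_i^{(1)}\oplus\cdots\oplus W_i^{(\rho(\{i\}))}$ into $n$-dimensional pieces, with $A_i^{(p)}$ a basis matrix of $W_i^{(p)}$. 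The claim is that a \emph{generic} such decomposition makes every condition above hold, and that a field with more than $N(\mathbb{D})$ elements has enough points to realize it. For $R_1$: the selected pieces span a sum of generic subspaces of the $W_l$ of dimensions $nb_l$, whose largest attainable dimension, $n\min_{T\subseteq(b)_{>0}}\big(\sum_{l\notin T}b_l+\rho(T)\big)$, is reached generically and equals $nk$ because $b\in\mathbb{D}$ forces $\rho(T)\ge\sum_{l\in T}b_l$. For $R_2$: I would first prove the structural fact that $\rho((c)_{>0})=|c|-1$ for every minimal excluded vector $c$ (reduce the violating set to $(c)_{>0}$ using minimality, then use $c\notin\mathbb{D}$ together with $c-\epsilon_{j_0,r}\in\mathbb{D}$), so that $\dim\sum_{l\in(c)_{>0}}W_l=n(|c|-1)$; the same generic-dimension formula, now invoking $|(c-\epsilon_{j_0,r})(T)|\le\rho(T)$, shows that $\sum_{l\ne j_0,\ y_l^p\in\eta_l}\mathrm{col}(A_l^{(p)})+\sum_{y_{j_0}^p\in\Gamma_2}\mathrm{col}(A_{j_0}^{(p)})$ already has dimension $n(|c|-1)$, hence equals $\sum_{l\in(c)_{>0}}W_l$, which contains $W_{j_0}^{(p_0)}$. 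Finally, each condition fails only on a proper Zariski-closed subset of the space of decompositions, and it suffices that one vector of each $W_i^{(p)}$ avoid a bounded union of proper subspaces; counting the "bad" configurations attached to a single ground-set element $i$ gives the bound $N(\mathbb{D})=1+\max_i\sum_{b:b_i>0}\binom{\rho(\{i\})}{b_i-1}\prod_{j\in(b)_{>0}\setminus\{i\}}\binom{\rho(\{j\})}{b_j}$, so $q'>N(\mathbb{D})$ lets us fix the decomposition one piece at a time.

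The main obstacle is the combined genericity argument in the second step: making rigorous that a single generic ordered decomposition of the $W_i$ is simultaneously "independent enough" for every $R_1$ requirement (certain sums must be \emph{full}) and "consistent with the polymatroid-forced dependencies" for every $R_2$ requirement (certain sums must remain inside $\sum_{l\in(c)_{>0}}W_l$, forcing the needed containments), so that the submodular inequalities $|b(T)|\le\rho(T)$ and $|(c-\epsilon_{j_0,r})(T)|\le\rho(T)$ are exactly what reconcile the two demands. The delicate sub-case is $c_{j_0}>1$ (non-empty $\Gamma_2$), where the splitting of $W_{j_0}$ into width-$n$ sub-blocks genuinely matters; I expect the appendix treats it by peeling off one unit of $\epsilon_{j_0,r}$ at a time, mirroring the manipulation of $u=(b^S_{s_1}+1)\epsilon_{s_1,r}+\sum_{i\in S\setminus\{s_1\}}b^S_i\epsilon_{i,r}$ used in the proof of Theorem \ref{Thm_only_if}, and then carefully tracking the count of excluded "bad" positions to land precisely on the threshold $N(\mathbb{D})$.
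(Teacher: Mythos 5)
Your overall route is the same as the paper's: encode as $f(x,y)=y+xA$ with the column space of $A_i$ equal to the representing subspace of ground-set element $i$, reduce decodability to (a) full rank of every basis-vector block selection ($R_1$) and (b) containment of $\mathrm{col}(A_{j_0}^{(p_0)})$ in the span of the selected blocks ($R_2$), prove $\rho((c)_{>0})=|c|-1$ for minimal excluded vectors exactly as you indicate, and then choose the ordered decomposition of each $W_i$ by a genericity/field-size argument with threshold $N(\mathbb{D})$. The paper implements the genericity step by multiplying fixed basis matrices $A_i$ by indeterminate matrices $\Gamma_i$, forming the product of $\det(\Gamma_i)$ and all basis-vector determinants, and applying the sparse-zeros lemma, which is your "avoid a bounded union of bad sets" step in different clothing.

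The one genuine gap is your assertion that the largest attainable dimension $n\min_{T}\bigl(\sum_{l\notin T}b_l+\rho(T)\bigr)$ "is reached generically." Genericity is vacuous unless you first show that, for every basis vector $b$, there exists at least one choice of $nb_l$-dimensional subspaces $U_l\subseteq W_l$ whose sum is direct and spans $\mathbb{F}_{q'}^{nk}$ — i.e., that the determinantal polynomials you want to avoid vanishing are not identically zero. This attainability statement is not a formal consequence of the submodular upper bound; it is the substantive input, and the paper supplies it through its Lemma 6 (if $b\in\mathcal{B}(\mathbb{D})$ then $nb\in\mathcal{B}(n\mathbb{D})$) together with Lemma 7, which is Lemma 6.3 of \cite{FaMaPa} (existence of subspaces $V_i'\subseteq V_i$ with $\dim V_i'=b_i$ summing directly to full rank for a representable discrete polymatroid). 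Relatedly, the paper does not impose the $R_2$ conditions as separate generic requirements: it extends $c-\epsilon_{j_0,r}\in\mathbb{D}$ to a basis vector $b$ with $b_{j_0}=c_{j_0}-1$ and deduces that the selected $R_2$ blocks sit inside an already-imposed full-rank basis-vector configuration, so only the basis-vector determinants enter the degree count — this is what makes the threshold come out exactly at $N(\mathbb{D})$. If you instead count $R_2$ conditions separately, or run a greedy one-block-at-a-time avoidance argument, you must check that the bound does not inflate beyond $N(\mathbb{D})$; with the reduction to basis-vector configurations your counting matches the paper's degree-in-one-indeterminate computation.
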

 
From Theorem \ref{Thm_if}, it follows that for $q> N(\mathbb{D}),$ if the discrete polymatroid $n\mathbb{D}$ is representable over $\mathbb{F}_q,$ then there exits a perfect linear solution of dimension $n$ for the index coding problem $\mathcal{I}(Z,\mathcal{R})$ over $\mathbb{F}_q.$
 Combining the results in Theorem \ref{Thm_only_if} and Theorem \ref{Thm_if}, we have the following theorem.
 \begin{theorem}
 \label{thm_if_only_if}
 For $q > N(\mathbb{D}),$ a perfect linear solution of dimension $n$ over $\mathbb{F}_q$ exists for the index coding problem $\mathcal{I}_{\mathbb{D}}(Z,\mathcal{R}),$ if and only if the discrete polymatroid $n\mathbb{D}$ is representable over $\mathbb{F}_q.$
 \end{theorem}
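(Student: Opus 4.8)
The plan is to derive this statement as an immediate consequence of the two preceding theorems, with the only genuine point being that the field‑extension caveat of Theorem~\ref{Thm_if} disappears once $q>N(\mathbb{D})$. First I would dispose of the ``only if'' direction: Theorem~\ref{Thm_only_if} already says that whenever a perfect linear index coding solution of dimension $n$ over $\mathbb{F}_q$ exists for $\mathcal{I}^{\mathbb{D}}(Z,\mathcal{R})$, the discrete polymatroid $n\mathbb{D}$ is representable over $\mathbb{F}_q$. That result holds over an arbitrary finite field, hence in particular when $q>N(\mathbb{D})$, so there is nothing to add.

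For the ``if'' direction, suppose $n\mathbb{D}$ is representable over $\mathbb{F}_q$ with $q>N(\mathbb{D})$. Theorem~\ref{Thm_if} produces a perfect linear solution of dimension $n$ for $\mathcal{I}^{\mathbb{D}}(Z,\mathcal{R})$ over \emph{some} extension field $\mathbb{F}_{q'}$ of $\mathbb{F}_q$ with $q'>N(\mathbb{D})$. The key observation is that $\mathbb{F}_q$ is itself an extension field of $\mathbb{F}_q$ (of degree one) and, by hypothesis, $q>N(\mathbb{D})$; so the hypotheses of Theorem~\ref{Thm_if} are already met with the choice $q'=q$, and the solution lives over $\mathbb{F}_q$ itself. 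Combining the two directions gives the claimed equivalence over $\mathbb{F}_q$.

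Thus at the level of Theorem~\ref{thm_if_only_if} there is essentially no obstacle beyond this bookkeeping; the substance lies in Theorem~\ref{Thm_if}, whose proof is deferred to the appendix. There one must convert a representation $V'_1,\dots,V'_r$ of $n\mathbb{D}$ — a subspace family with $\dim(\sum_{i\in S}V'_i)=n\rho(S)$ — into encoding matrices for $\mathcal{I}^{\mathbb{D}}(Z,\mathcal{R})$, i.e.\ into a choice of complements/coefficients that simultaneously satisfies the independence conditions (C2) of Theorem~\ref{thm5} for \emph{every} receiver in $R_1$, $R_2$, $R_3$. The quantity $N(\mathbb{D})$ is precisely the count of such generic nonvanishing conditions, one family per basis vector (through $S_1(b)$) and per minimal excluded vector (through the $S_2(c,j,p)$), so a Schwartz--Zippel / union‑bound argument shows a random generic assignment works whenever the field has more than $N(\mathbb{D})$ elements — which is exactly why the field extension in Theorem~\ref{Thm_if} becomes unnecessary as soon as $q>N(\mathbb{D})$, and why I would expect that genericity count, not the combination step, to be the hard part.
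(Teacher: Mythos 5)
Your proposal is correct and follows exactly the paper's route: the paper likewise obtains Theorem~\ref{thm_if_only_if} by citing Theorem~\ref{Thm_only_if} for the ``only if'' direction and observing that, since $q>N(\mathbb{D})$, the extension field in Theorem~\ref{Thm_if} may be taken to be $\mathbb{F}_q$ itself. Your closing remarks about the Schwartz--Zippel-style genericity count behind $N(\mathbb{D})$ accurately reflect the appendix proof of Theorem~\ref{Thm_if}, which is indeed where the real work lies.
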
 
 
When specialized to the discrete polymatroid $\mathbb{D}(\mathbb{M}),$ where $\mathbb{M}$ is a matroid, the statement of Theorem \ref{thm_if_only_if} reduces to the following statement:
For $q > N(\mathbb{D}(\mathbb{M})),$ a perfect linear solution of dimension $n$ over $\mathbb{F}_q$ exists for the index coding problem $\mathcal{I}_{\mathbb{D}(\mathbb{M})}(Z,\mathcal{R}),$ if and only if the matroid $\mathbb{M}$ has a multi-linear  representation of dimension $n$ over $\mathbb{F}_q.$
 Note that this is the same as the statement of Theorem 12 in \cite{RoSpGe}, with the additional restriction on the field size $q.$ As remarked in Remark 1 in the proof of Theorem \ref{Thm_if}, for the discrete polymatroid $\mathbb{D}(\mathbb{M}),$ this restriction on the field size is unnecessary and the converse of Theorem \ref{Thm_only_if} holds for all $\mathbb{F}_q.$ 
 
 It follows from Theorem \ref{thm_if_only_if} that a perfect linear solution of dimension $n$ exists over a sufficiently large field for the index coding problem $\mathcal{I}_{\mathbb{D}}(Z,\mathcal{R}),$ if and only if the discrete polymatroid $n\mathbb{D}$ is representable and it is stated as the following corollary.
 \begin{corollary}
 \label{cor1}
 A perfect linear solution of dimension $n$ exists for the index coding problem $\mathcal{I}_{\mathbb{D}}(Z,\mathcal{R}),$ if and only if the discrete polymatroid $n\mathbb{D}$ is representable.
 \end{corollary}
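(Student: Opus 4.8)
The plan is to obtain the corollary by stripping the field-size hypothesis from Theorem~\ref{thm_if_only_if}, using the two facts that ``representable'' means ``representable over \emph{some} field'' and that both representability of $n\mathbb{D}$ and the existence of a perfect linear index code of dimension $n$ are properties preserved under passing to a field extension. So the whole argument reduces to Theorem~\ref{Thm_only_if}, Theorem~\ref{Thm_if} (or Theorem~\ref{thm_if_only_if}), plus one routine invariance statement.

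For the ``only if'' direction I would argue directly: if a perfect linear solution of dimension $n$ exists for $\mathcal{I}^{\mathbb{D}}(Z,\mathcal{R})$, then by definition it is a solution over some finite field $\mathbb{F}_q$, and Theorem~\ref{Thm_only_if} applies verbatim to give that $n\mathbb{D}$ is representable over $\mathbb{F}_q$, hence representable. No bound on $q$ is used, so this half is immediate.

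For the ``if'' direction, suppose $n\mathbb{D}$ is representable, say over $\mathbb{F}_q$, by vector subspaces $V_1,\dots,V_r$. I would choose an extension field $\mathbb{F}_{q'}$ of $\mathbb{F}_q$ with $q' > N(\mathbb{D})$ (for instance $\mathbb{F}_{q^t}$ with $t$ large enough). The one technical point is that the $\mathbb{F}_{q'}$-column spans of matrices representing $V_1,\dots,V_r$ again form a representation of $n\mathbb{D}$, now over $\mathbb{F}_{q'}$: this holds because $\dim$ of the span of a set of vectors equals the rank of the matrix formed from them, and matrix rank is invariant under field extension, so all the equalities $\dim(\sum_{i\in X}V_i)=n\rho(X)$ are preserved. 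Having arranged $q' > N(\mathbb{D})$ and representability of $n\mathbb{D}$ over $\mathbb{F}_{q'}$, Theorem~\ref{Thm_if} (equivalently Theorem~\ref{thm_if_only_if}) produces a perfect linear solution of dimension $n$ for $\mathcal{I}^{\mathbb{D}}(Z,\mathcal{R})$ over $\mathbb{F}_{q'}$, which in particular is such a solution over some field, as required for Corollary~\ref{cor1}.

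The only substantive step beyond invoking the earlier theorems is the field-extension invariance of representability, and I expect that to be the sole (and only mildly technical) obstacle; it is dispatched by the matrix-rank remark above. The remaining work is purely bookkeeping about the scope of the quantifiers hidden in the words ``representable'' and ``a solution exists''.
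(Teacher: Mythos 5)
Your proposal is correct and follows essentially the same route as the paper, which obtains the corollary directly from Theorems \ref{Thm_only_if} and \ref{Thm_if} (combined in Theorem \ref{thm_if_only_if}) by noting that a sufficiently large field always exists. The only minor remark is that Theorem \ref{Thm_if} as stated already takes a representation of $n\mathbb{D}$ over $\mathbb{F}_q$ and produces a solution over an extension $\mathbb{F}_{q'}$ with $q'>N(\mathbb{D})$, so your explicit rank-invariance-under-field-extension step, while correct, is not even needed if you invoke that theorem directly rather than Theorem \ref{thm_if_only_if}.
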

 
  Specializing Corollary \ref{cor1} for the case $n=1,$ we have the following corollary.
\begin{corollary}
A scalar perfect linear solution  exists for the index coding problem  $\mathcal{I}_{\mathbb{D}}(Z, \mathcal{R}),$ if and only if the discrete polymatroid $\mathbb{D}$ is representable over $\mathbb{F}_q.$
\end{corollary}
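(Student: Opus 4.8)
The plan is to derive this corollary as the special case $n=1$ of Corollary \ref{cor1}. The first observation is that the discrete polymatroid $n\mathbb{D}$ is defined through the rank function $\rho'(X)=n\rho(X)$, so for $n=1$ one has $\rho'=\rho$ and hence $1\mathbb{D}=\mathbb{D}$; thus the hypothesis ``$n\mathbb{D}$ is representable'' degenerates to ``$\mathbb{D}$ is representable''. The second observation is purely terminological: by the definitions in Section \ref{III B}, an index coding solution with $n=1$ is a scalar solution, so a perfect linear solution of dimension $1$ for $\mathcal{I}^{\mathbb{D}}(Z,\mathcal{R})$ is precisely what the corollary calls a scalar perfect linear solution. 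Substituting $n=1$ into Corollary \ref{cor1} then immediately gives the stated equivalence.

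If one wants to track the field explicitly rather than quote Corollary \ref{cor1} as a black box, the two implications are handled separately. For the implication ``scalar perfect linear solution over $\mathbb{F}_q$ $\Rightarrow$ $\mathbb{D}$ representable over $\mathbb{F}_q$'', I would apply Theorem \ref{Thm_only_if} with $n=1$: the solution forces $1\mathbb{D}=\mathbb{D}$ to be representable over the same field $\mathbb{F}_q$. For the converse, I would apply Theorem \ref{Thm_if} with $n=1$: representability of $\mathbb{D}$ over $\mathbb{F}_q$ (equivalently, over any extension of $\mathbb{F}_q$, since representability persists under field extension) yields a perfect linear solution of dimension $1$ over some finite field $\mathbb{F}_{q'}$ with $q'>N(\mathbb{D})$. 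Putting the two together gives the ``if and only if''.

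There is no genuinely hard step here; the entire content has already been established in Theorems \ref{Thm_only_if} and \ref{Thm_if} (equivalently, in Corollary \ref{cor1}), and the role of this corollary is only to record the matroid-flavoured case in which ``representation of $n\mathbb{D}$'' collapses to the ordinary notion of ``representation of $\mathbb{D}$''. The one point that deserves care is the field-size threshold $N(\mathbb{D})$ coming from Theorem \ref{Thm_if}: the forward implication is faithful to the prescribed field $\mathbb{F}_q$, but the backward implication, in the form supplied by Theorem \ref{Thm_if}, only guarantees a solution over a possibly larger field. Hence the clean ``fixed-field'' reading of the corollary is literally correct when $q>N(\mathbb{D})$ (this is the $n=1$ instance of Theorem \ref{thm_if_only_if}), and for arbitrary $q$ it should be understood, exactly as Corollary \ref{cor1} is, as an equivalence between the existence of a scalar perfect linear solution over some field and representability of $\mathbb{D}$ over some field.
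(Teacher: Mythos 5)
Your proposal is correct and follows exactly the paper's route: the paper obtains this corollary simply by specializing Corollary \ref{cor1} (itself a consequence of Theorems \ref{Thm_only_if} and \ref{Thm_if}) to $n=1$, using $1\mathbb{D}=\mathbb{D}$ and the fact that a dimension-one solution is a scalar solution. Your closing remark about the field-size threshold $N(\mathbb{D})$ and how the fixed-field reading should be interpreted is a fair and careful gloss on the paper's somewhat loosely stated use of $\mathbb{F}_q$, but it does not change the argument.
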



Note that in Theorem \ref{Thm_if}, the condition that the field size $q'$ should be greater than $N(\mathbb{D})$ is only a sufficient condition. Even for a field size less than or equal to $N(\mathbb{D}),$  a perfect linear solution of dimension $n$ might exist for the index coding problem $\mathcal{I}_{\mathbb{D}}(Z, \mathcal{R}).$ This is illustrated in the following example.
\begin{example}
Consider the index coding problem $\mathcal{I}_{\mathbb{D}}(Z,\mathcal{R}),$ provided in Example \ref{ex31}. For this case, we have $N(\mathbb{D})=9.$  Even though the discrete polymatroid $\mathbb{D}$ has a representation over $\mathbb{F}_2,$ given in Example \ref{ex8}, it is shown in Lemma \ref{lemma_no_f2} below that the index coding problem $\mathcal{I}_{\mathbb{D}}(Z,\mathcal{R})$ does not admit a scalar perfect linear index code over $\mathbb{F}_2.$ This illustrates the fact that the converse of Theorem \ref{Thm_only_if} needs not hold when the field size is not sufficiently large. For a field of size greater than 9, a perfect linear solution of dimension $n$ is guaranteed to exist for $\mathcal{I}_{\mathbb{D}}(Z,\mathcal{R}),$ provided the discrete polymatroid $n\mathbb{D}$ is representable over that field. In this example, we provide a perfect linear solution of dimension 1 for $\mathcal{I}_{\mathbb{D}}(Z,\mathcal{R})$ over the finite field $\mathbb{F}_4=\{0,1,\alpha,1+\alpha\}$ of size 4, where $\alpha$ is a root of the irreducible polynomial $x^2+x+1=0$ over $\mathbb{F}_2.$ It can be verified that the function $f$ given by,
{
\begin{align*}
f(Z)&=\begin{bmatrix}y^1_1 & y^2_1&y^1_2 &y^2_2& y^1_3\end{bmatrix}\\
&\hspace{1.5 cm}+\begin{bmatrix}x_1&x_2&x_3\end{bmatrix}\underbrace{\begin{bmatrix}1 & 0 & 1 & 1 & 0\\0 & 1 & 1 & 1&0\\0 & 0 & 1+ \alpha &1&1\end{bmatrix}}_{A}
\end{align*}
}forms a scalar perfect linear index code for $\mathcal{I}_{\mathbb{D}}(Z,\mathcal{R})$ over $\mathbb{F}_4.$ Let $V_1$ denote the span of the first two columns of $A$ over $\mathbb{F}_4.$ Also, let $V_2$ denote the span of the third and fourth columns of $A,$ and let $V_3$ denote the span of  the last column of $A$ over $\mathbb{F}_4.$ The vector subspaces $V_1,$ $V_2$ and $V_3$ form a representation over $\mathbb{F}_4$ for the discrete polymatroid $\mathbb{D}.$

\begin{lemma}
\label{lemma_no_f2}
The index coding problem $\mathcal{I}_{\mathbb{D}}(Z,\mathcal{R})$ provided in Example \ref{ex31} does not admit a scalar perfect linear solution over $\mathbb{F}_2.$
\begin{proof}
On the contrary, assume that there exists a scalar perfect linear solution over $\mathbb{F}_2$ for $\mathcal{I}_{\mathbb{D}}(Z,\mathcal{R}).$ A scalar perfect linear solution exists for $\mathcal{I}_{\mathbb{D}}(Z,\mathcal{R})$ only if $\mathbb{D}$ is representable over $\mathbb{F}_2.$ Note that $\mathbb{D}$ is indeed representable over $\mathbb{F}_2$ and a representation for $\mathbb{D}$ over $\mathbb{F}_2$ has been provided in Example \ref{ex8}. Every scalar perfect linear solution for $\mathcal{I}_{\mathbb{D}}(Z,\mathcal{R})$ can be written as $f(Z)=[y^1_1\;y^2_1\;y^1_2\;y^2_2\;y^1_3]A+[x_1\;x_1\;x_3][G_1\;G_2\;G_3],$ where $A$ is a $5 \times 5$ over $\mathbb{F}_2,$  $G_1$ and $G_2$ are $3 \times 2$ matrices over $\mathbb{F}_2,$ and $G_3$ is a $3 \times 1$ matrix over $\mathbb{F}_2.$ In order to ensure the existence of decoding matrices for the receivers which belong to the set $R_3,$ $A$ needs to be full rank. Hence, without loss of generality, we can assume $A$ to be the identity matrix. Also, without loss of generality, the matrix $G_1$ can be assumed to be $\begin{bmatrix}1&0\\0&1\\0&0 \end{bmatrix}$ and the first column of $G_2$ can be assumed to be $ \begin{bmatrix}0\\0\\1\end{bmatrix}.$ The reason for this is that if the matrix $G$ obtained by the concatenation of $G_1$ and the first column of $G_2$ is not the identity matrix, taking $G^{-1}G_i$ to be $G'_i, i \in \lceil 3 \rfloor,$ the function $f'(Z)=[y^1_1\;y^2_1\;y^1_2\;y^2_2\;y^1_3]+[x_1\;x_2\;x_3][G'_1 G'_2 G'_3]$ forms a valid scalar perfect linear index code.  The second column of $G_2$ and the only column of $G_3$ need to be chosen. It is claimed that the only possibility for $G_3$ is $G_3=[1\;1\;1]^T.$ $G_3$ cannot be $[1\;0\;0],[0\;1\;0]$ and $[1\;1\;0],$ since $dim(V_1+V_3)=3.$ The only other possibilities for $G_3$ are $[0 \; 1 \; 1]^T,[1\;0\;1]^T,[0\;0\;1]^T$ and $[1\;1\;1]^T.$ If $G_3=[0 \; 1 \; 1]^T,$ it will not be possible to find a decoding function for the receiver nodes $(x_i,\{y^2_1,y^1_2,y^1_3\}), i \in \lceil 3 \rfloor.$ Similarly, if $G_3=[1\;0\;1]^T$ ($G_3=[0\;0\;1]^T$), it will not be possible to find decoding functions for the receiver nodes $(x_i,\{y^1_1,y^1_2,y^1_3\})$ ($(x_i,\{y^1_1,y^1_2,y^1_3\})$), where $i \in \lceil 3 \rfloor.$ Since $dim(V_2+V_3)=2$ and $dim(V_2)=2,$ the only possibilities for the second column of $G_2$ are $[1\;1\;0]^T$ and $[1\;1\;1]^T.$ If the second column of $G_2$ is equal to $[1\;1\;0]^T$ ($[1\;1\;1]^T),$ then it will not be possible to find decoding functions for the receiver nodes $(x_i,\{y^1_1,y^2_1,y^2_2\})$ ($(x_i,\{y^1_1,y^2_2,y^1_3\})$). This shows that there cannot exist a scalar perfect linear solution over $\mathbb{F}_2$ for $\mathcal{I}_{\mathbb{D}}(Z,\mathcal{R}).$
\end{proof}
\end{lemma}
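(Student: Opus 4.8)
The plan is a proof by contradiction that reduces the existence of a scalar perfect linear index code over $\mathbb{F}_2$ for $\mathcal{I}^{\mathbb{D}}(Z,\mathcal{R})$ to a finite search over $3\times 3$ binary matrices which turns out to be infeasible. Since $\mathcal{M}(\mathcal{I}^{\mathbb{D}}(Z,\mathcal{R}))=5$, a perfect scalar code has length $5$, so it is a linear map $\mathbb{F}_2^{8}\rightarrow\mathbb{F}_2^{5}$ (there are $k=3$ messages $x_i$ and five messages $y_i^j$); I would write it as $f(Z)=YA+XG$ with $A\in\mathbb{F}_2^{5\times5}$, $G\in\mathbb{F}_2^{3\times5}$, and index the five output coordinates by $y_1^1,y_1^2,y_2^1,y_2^2,y_3^1$. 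The receivers in $R_3$, which demand each $y_i^j$ from all of $X$, force $A$ to be invertible; replacing $f$ by $fA^{-1}$ lets me assume $A=I$, so that the coordinate of $f(Z)$ labelled $y_i^j$ equals $y_i^j+x\,g_i^{(j)}$, where $g_i^{(j)}$ is the corresponding column of $G$. Moreover, because for a fixed choice of side information the $R_1$-receivers demand \emph{every} $x_1,\dots,x_k$ while all $R_2$- and $R_3$-receivers demand only $y$-symbols, I may reparametrise $X\mapsto XP$ for any invertible $P$, i.e. replace $G$ by $P^{-1}G$; this acts within each column block and is the freedom used to normalise three columns of $G$.

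Next I would turn decodability into linear algebra. A receiver knowing the $y$-symbols in a set $W$ can recover from $f(Z)$ exactly the functionals $\{x\,g_i^{(j)}:y_i^j\in W\}$; hence an $R_1$-receiver for $x_j$ is met iff $e_j$ lies in the span of $\{g_i^{(j)}:y_i^j\in W\}$, and since every $x_j$ is demanded with this $W$ the span must be all of $\mathbb{F}_2^{3}$, while an $R_2$-receiver for $y_j^p$ is met iff $g_j^{(p)}$ lies in that span. Writing this out over the four basis vectors $(1,1,1),(1,2,0),(2,0,1),(2,1,0)$ and the three minimal excluded vectors $(0,2,1),(2,1,1),(2,2,0)$ of $\mathbb{D}$ yields: the two columns of each block $G_l$ are independent and span a $\rho(\{l\})$-dimensional space $V_l$; $\dim(V_1+V_2)=\dim(V_1+V_3)=3$ and $\dim(V_2+V_3)=2$, so $V_3\subseteq V_2$; and, crucially, for every basis vector $b$ and every way of picking $b_l$ columns out of block $l$, the chosen $|b|=3$ columns form a basis of $\mathbb{F}_2^{3}$.

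Finally I would run the finite search. Using the reparametrisation freedom I normalise $g_1^{(1)}=e_1$, $g_1^{(2)}=e_2$, $g_2^{(1)}=e_3$ (legitimate since these three columns are independent, e.g. by the basis vector $(2,1,0)$), so that only $g_3^{(1)}$ and $g_2^{(2)}$ remain to be chosen among the seven nonzero vectors of $\mathbb{F}_2^{3}$. The conditions from the basis vectors $(2,0,1)$ and $(1,1,1)$ put $g_3^{(1)}$ outside $\mathrm{span}(e_1,e_2)$, $\mathrm{span}(e_1,e_3)$ and $\mathrm{span}(e_2,e_3)$, which leaves only $g_3^{(1)}=e_1+e_2+e_3$; for each of the other candidates I would simply display one side-information set from $S_1((1,1,1))$ for which the three relevant columns are dependent. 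With $g_3^{(1)}=e_1+e_2+e_3$ fixed, the basis conditions from $(2,1,0)$, $(1,2,0)$ and $(1,1,1)$ put $g_2^{(2)}$ outside $\mathrm{span}(e_1,e_2)$, $\mathrm{span}(e_1,e_3)$, $\mathrm{span}(e_2,e_3)$, $\mathrm{span}(e_1,e_1+e_2+e_3)$ and $\mathrm{span}(e_2,e_1+e_2+e_3)$, and the nonzero elements of these five subspaces already exhaust all seven nonzero vectors of $\mathbb{F}_2^{3}$ — a contradiction, which proves the lemma. I expect the only real obstacle to be bookkeeping: keeping track of exactly which receivers survive the normalisation and matching each forbidden subspace to a concrete side-information set in $S_1(\cdot)$; conceptually the statement just says that the $\mathbb{F}_2$-representation of $\mathbb{D}$ from Example \ref{ex8} cannot be brought into the rigid normal form imposed by the $R_3$- and $R_1$-receivers, and over $\mathbb{F}_2$ there are simply too few vectors to do so.
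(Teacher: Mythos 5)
Your proposal is correct and takes essentially the same route as the paper's proof: use the $R_3$ receivers to normalize $A$ to the identity, normalize the first three columns of $G$ to $e_1,e_2,e_3$ (legitimate because the basis-vector receivers force them to be independent), deduce $g_3^{(1)}=[1\;1\;1]^T$ from the $S_1$ spanning conditions, and then eliminate every choice of the remaining column by a finite check. The only (harmless) difference is in the endgame: the paper first narrows the last column of $G_2$ to two candidates via the $R_2$-derived condition $V_3\subseteq V_2$ and kills each with a specific receiver, whereas you exclude all seven nonzero candidates directly from the basis-vector spanning conditions.
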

\end{example}

\section{Other Possible Connections among Network Coding, Index Coding and Discrete Polymatroids}
In Section \ref{IV A}, a connection between existence of linear network coding solution for a network and representable discrete polymatroids was established. A similar connection between linear index coding and representable discrete polymatroids was established in Section \ref{VI A}. In this section, we explore  other possible connections among network coding, index coding and discrete polymatroids.

In \cite{RoSpGe}, a construction of index coding problem from a network coding problem was provided and it was shown that a linear solution to the network coding problem exists if and only if there exists a perfect linear solution for the index coding problem. This result was extended to non-linear network/index coding solutions in \cite{EfRoLa}. To establish connections between linear network coding and discrete polymatroid representability, one can translate the problem of linear solvability of a network to the problem of finding a perfect linear solution to an associated index coding problem using the results in \cite{RoSpGe,EfRoLa} and then use Theorem \ref{thm5} to find a connection with discrete polymatroids. Such a connection between linear network coding and discrete polymatroids is obtained in Section \ref{VII A}. Similarly, in Section \ref{VII B}, we obtain a connection between linear index coding and respresentable discrete polymatroids, using Theorem \ref{thm3} and the fact that index coding problem can be viewed as a special case of network coding problem. Also, it is shown that the results in Section \ref{VII A} and Section \ref{VII B} are equivalent to the ones in Theorem \ref{thm3} and Theorem \ref{thm5} respectively.

\subsection{Network coding to Discrete Polymatroids via Index Coding}
\label{VII A}
In this subsection, we restrict to vector linear network coding solutions, i.e., we do not consider solutions for which message vector lengths are different from the edge vector length\footnote{The reason for this restriction is that by definition in Section \ref{III B}, index coding problem assumes message vectors of equal length. In this subsection, connection between network coding and discrete polymatroid is obtained via index coding and a result from \cite{RoSpGe,EfRoLa}.}. First some notations are introduced and a result from \cite{RoSpGe,EfRoLa} is stated.

For a network coding problem with notations and terminologies as defined in Section \ref{III A}, let $\mathcal{V}_{>0}$ denote the set of vertices which demand at least one message, i.e., $\mathcal{V}_{>0}=\{v\in \mathcal{V}:\vert \delta(v)\vert>0\}.$ Also, let the set of edges be given by $\mathcal{E}=\{1,2,\dotso,\vert\mathcal{S}\vert,\vert\mathcal{S}\vert+1,\dotso,\vert\mathcal{E}\vert\},$ with $\mathcal{S}=\lceil \vert \mathcal{S}\vert \rfloor$ being the set of input edges. 

Consider the following index coding problem $\mathcal{I}(X,\mathcal{R})$ constructed from a network coding problem  using the procedure in \cite{EfRoLa}: 
\begin{itemize}
\item
The set of messages $X=\{x_1,x_2,\dotso,x_{\vert \mathcal{S} \vert}, \dotso, x_{\vert \mathcal{E}\vert}\}.$
\item
The set of receiver nodes $\mathcal{R}=\mathcal{R}_1\cup \mathcal{R}_2 \cup \mathcal{R}_3,$ 
\begin{itemize}
\item
$\mathcal{R}_1=\left\{\left(x_e,H_e\right);e\in \lceil \vert \mathcal{E}\setminus \mathcal{S}\vert\rfloor\right\},$ where $ H_e=\left\lbrace x_i : i\in In(head(e))\right\rbrace.$
\item
$\mathcal{R}_2=\bigcup_{v \in \mathcal{V}_{>0}}\{(x_i,H_v); x_i \in \delta(v)\},$ where $H_v=\{x_e:e\in In(v)\}.$
\item
$\mathcal{R}_3=\{(x_e,H);e \in \lceil \vert \mathcal{E}\setminus \mathcal{S}\vert\rfloor\},$ where $H=\{x_1,x_2,\dotso,x_{\vert \mathcal{S}\vert}\}.$ 
\end{itemize}
\end{itemize}

For the index coding problem defined above, we have $\mathcal{M}(\mathcal{I}(X,\mathcal{R}))=\vert \mathcal{E}\setminus \mathcal{S}\vert.$ From \cite{EfRoLa}, a perfect linear index coding solution of length $c=n\vert \mathcal{E}\setminus \mathcal{S}\vert$ exists for this index coding problem if and only if the network from which this was constructed admits a vector linear solution. Combining this result with the result in Theorem \ref{thm5}, we obtain the result in the following theorem.
 
\begin{theorem}
\label{thm9}
A network has a vector linear solution of dimension $n$ over $\mathbb{F}_q,$ if and only if there exists a  discrete polymatroid $\mathbb{D}$ on ground set $\lceil \vert \mathcal{E}\vert+1 \rfloor$ representable over $\mathbb{F}_q$ satisfying the following conditions: 
\begin{itemize}
\item[{\small (NID1)}]
{\footnotesize$\rho(\{i\})=n, \forall i \in \mathcal{E},$ $\rho(\{\vert \mathcal{E}\vert+1\})=n \vert \mathcal{E}\setminus \mathcal{S}\vert$,} {\footnotesize$\rho(\mathcal{E})=n\vert \mathcal{E}\vert$} and $rank(\mathbb{D})=n\vert \mathcal{E}\vert.$ 
\item[{\small (NID2)}]
For {\footnotesize$e \in \{\vert\mathcal{S}\vert+1,\vert\mathcal{S}\vert+2,\dotso,\vert \mathcal{E}\vert\},$}
{\footnotesize $$\rho(\{e\}\cup In(head(e))\cup\{\vert \mathcal{E}\vert+1\}) =\rho(In(head(e))\cup\{\vert \mathcal{E}\vert+1\}).$$}
 \item[{\small (NID3)}]
 For {\footnotesize$v \in \mathcal{V},$ $\rho(\delta(v)\cup In(v)\cup \{\vert \mathcal{E}\vert+1\})=\rho(In(v)\cup \{\vert \mathcal{E}\vert+1\}).$}
 \item[{\small (NID4)}]
For {\footnotesize$e \in \{\vert\mathcal{S}\vert+1,\vert\mathcal{S}\vert+2,\dotso,\vert \mathcal{E}\vert\},$\\ $\rho(\{e\}\cup \mathcal{S}\cup\{\vert\mathcal{E}\vert+1\})=\rho(\mathcal{S}\cup\{\vert\mathcal{E}\vert+1\}).$}
\end{itemize}
\end{theorem}

Note that the the discrete polymatroids which arise in Theorem \ref{thm3} and in Theorem \ref{thm9} are not the same. In Appendix \ref{appB}, it is shown that the results in Theorem \ref{thm3} and \ref{thm9} are equivalent, i.e., there exists a representable discrete polymatroid with respect to which a network is discrete polymatroidal if and only if there exists a representable discrete polymatroid satisfying the conditions in Theorem \ref{thm9}. 
\subsection{Index coding to Discrete Polymatroids via Network Coding}
\label{VII B}
One can obtain a connection between index coding and discrete polymatroids by posing the index coding problem as an equivalent network coding problem and then using the result in Theorem \ref{thm3}.

For the index coding problem defined in Section \ref{III B}, let $(x_i,H)\in \mathcal{R}, H=\{x_{j_1},x_{j_2},\dotso,x_{j_l}\}$ denote a receiver node.  
The problem of finding a linear solution to an index coding problem  of length $c$ and dimension $n$ is equivalent to finding a linear $(n,n,\dotso,n;c)$-FNC solution for the following network: The set of vertices is given by $\mathcal{V}=\{v_1,v_2,\dotso,v_m,v_{m+1},v_{m+2},v_{m+3},\dotso,v_{m+2+\vert \mathcal{R}\vert}\}.$ The first $m$ vertices $v_i, i \in \lceil m \rfloor$ are those vertices at which the $m$ messages are generated. The vertex $v_{m+1}$ has one incoming edge each from the vertices in the set $\{v_1,v_2,\dotso v_m\}$ and $v_{m+2}$ has a single incoming edge from $v_{m+1}.$ For $j \in \{m+3,\dotso,m+2+\vert \mathcal{R}\vert\},$ the node $v_j$ has incoming edges from vertices in the set $\{v_k:x_{k}\in H\}$ and demands $x_i.$

Let $\mathcal{S}=\lceil m \rfloor$ denote the $m$ source edges and let $e_{i,i'}$ denote an edge connecting vertices $v_i$ and $v_{i'}.$
From Theorem \ref{thm3}, the network thus defined above admits a linear $(n,n,\dotso,n;c)$-FNC solution if and only if it is $(n,n,\dotso,n;c)$-discrete polymatroidal with respect to a representable discrete polymatroid $\mathbb{D},$ i.e., there exists a function $f$ from the set of edges to the ground set $\lceil r \rfloor$ of $\mathbb{D}$ satisfying (DN1)--(DN4). From (DN1), since $f$ is one-to-one on the elements of $\mathcal{S},$ let $f(i)=i,$ for $i \in \lceil m \rfloor.$ From (DN2)--(DN4), it follows that the discrete polymatroid $\mathbb{D}$ should satisfy certain conditions which are stated in the following theorem:

\begin{theorem}
\label{thm_IC_NC_DPMD}
A vector linear index code over $\mathbb{F}_q$ of length $c$ and dimension $n$ exists for an index coding problem $\mathcal{I}(X,\mathcal{R}),$ if and only if there exists a discrete polymatroid $\mathbb{D}$ representable over $\mathbb{F}_q$ on the ground set $\lceil r \rfloor$ satisfying the following conditions:
\begin{itemize}
\item[{\small(IND1)}]
$\sum_{i \in \lceil m \rfloor} n\epsilon_{n,r} \in \mathbb{D}.$
\item[{\small(IND2)}]
$\rho(\{i\})=n, \forall i \in \lceil m \rfloor$ and $\max_{i\in \mathcal{E}\setminus \mathcal{S}} \rho(f(\{i\}))=c.$
\item[{\small(IND3)}]
$\forall (x_i,H) \in \mathcal{R},$ where $H=\{x_{j_1},x_{j_2},\dotso,x_{j_l}\},$\\ {\footnotesize$\rho(\{i\} \cup \{j_1,j_2,\dotso j_l\} \cup \{m+1\})=\rho(\{j_1,j_2,\dotso,j_l\}\cup \{m+1\}).$}
\item[{\small(IND4)}]
$\rho(f(\{e_{m+1,m+2}\})\cup \lceil m \rfloor)=\rho(\lceil m \rfloor).$
\end{itemize}  
\end{theorem}

Note that the discrete polymatroid which satisfies the conditions in Theorem \ref{thm_IC_NC_DPMD} need not be the same as the one which arises in Theorem \ref{thm5}. For example, the ground set of the discrete polymatroid  in Theorem \ref{thm5} has $m+1$ elements, whereas there is no such restriction on the one in Theorem \ref{thm_IC_NC_DPMD}. In Appendix \ref{appC}, it is shown that the results in Theorem \ref{thm5} and \ref{thm_IC_NC_DPMD} are equivalent, i.e., there exists a representable discrete polymatroid satisfying the conditions in Theorem \ref{thm5} if and only if there exists a representable discrete polymatroid satisfying the conditions in Theorem \ref{thm_IC_NC_DPMD}.  
\section{Discussion}
In this paper, the connections between linear network coding, linear index coding and representable discrete polymatroids were explored. The notion of a discrete polymatroidal network was introduced and it was shown that the existence of a linear solution for a network is connected to the network being discrete polymatroidal. Also, it was shown that a linear solution exists for an index coding problem if and only if there exists a representable discrete polymatroid satisfying certain conditions which are determined by the index coding problem considered. Also, constructions of networks and index coding problems from discrete polymatroids were provided, for which the existence of linear solutions depends on discrete polymatroid representability. This paper considers only representable discrete polymatroids. An interesting problem for future research is to investigate whether any connections exist between non-representable discrete polymatroids and non-linear network/index coding solutions.  

\begin{appendices}
\section{Proof of Theorem \ref{Thm_if}}
\label{App_thm_if}
Before proving Theorem \ref{Thm_if}, some useful lemmas are stated.

\begin{lemma}
\label{lemma_6}
If $b$ is a basis vector of a discrete polymatroid $\mathbb{D},$ then $nb$ is a basis vector of the discrete polymatroid $n \mathbb{D}.$
\begin{proof}
Since $b \in \mathcal{B}(\mathbb{D}),$ we have $\vert b(X)\vert \leq \rho(X), \forall X \subseteq \lceil r \rfloor.$ Hence, we have $\vert(nb)(X)\vert=n \vert b(X)\vert\leq n \rho(X)=\rho^{n \mathbb{D}}(X), \forall X \subseteq \lceil r \rfloor.$ Hence, it follows that $nb \in n\mathbb{D}.$ To complete the proof, it needs to be shown that there does not exist $u \in  n\mathbb{D}$ for which $u>b.$ On the contrary, assume that such a $u$ exists. Then, we have, $\vert u \vert > \vert n b\vert =n \vert b \vert =n \:rank(\mathbb{D})=rank(n\mathbb{D}),$ which means that $u \notin n \mathbb{D},$ a contradiction.
\end{proof}
\end{lemma}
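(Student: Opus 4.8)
The plan is to verify directly the two defining properties of a basis vector of $n\mathbb{D}$ from the rank-function description of a discrete polymatroid, rather than trying to realise $n\mathbb{D}$ as the naive dilation $\{nx : x \in \mathbb{D}\}$, which in general fails (compare Example \ref{ex13}, where $2\mathbb{D}(U_{2,4})$ contains vectors such as $(0,1,1,2)$ that are not twice any vector of $\mathbb{D}(U_{2,4})$). Recall that $n\mathbb{D}$ is by definition the discrete polymatroid whose rank function is $\rho'(X)=n\rho(X)$, so that $n\mathbb{D}=\{y \in \mathbb{Z}_{\geq 0}^r : \vert y(A)\vert \leq n\rho(A) \text{ for all } A \subseteq \lceil r \rfloor\}$. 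As a preliminary I would record that $rank(n\mathbb{D})=n\,rank(\mathbb{D})$: since a discrete polymatroid is exactly the set of integral subvectors of its basis vectors and all basis vectors share the same component sum $rank(\mathbb{D})$, we have $\rho(\lceil r \rfloor)=\max_{u \in \mathbb{D}}\vert u\vert=rank(\mathbb{D})$, hence $rank(n\mathbb{D})=\rho'(\lceil r \rfloor)=n\rho(\lceil r \rfloor)=n\,rank(\mathbb{D})$.

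First I would show $nb \in n\mathbb{D}$: because $b \in \mathbb{D}$, the rank-function description of $\mathbb{D}$ gives $\vert b(A)\vert \leq \rho(A)$ for every $A \subseteq \lceil r \rfloor$, and multiplying through by $n$ yields $\vert (nb)(A)\vert = n\vert b(A)\vert \leq n\rho(A) = \rho'(A)$ for every such $A$, which is precisely the membership condition for $n\mathbb{D}$. Next I would establish maximality, i.e.\ that there is no $u \in n\mathbb{D}$ with $nb < u$. If such a $u$ existed we would have $\vert u\vert > \vert nb\vert = n\vert b\vert = n\,rank(\mathbb{D}) = rank(n\mathbb{D})$; on the other hand, specialising the membership condition for $n\mathbb{D}$ to $A = \lceil r \rfloor$ forces $\vert u\vert = \vert u(\lceil r \rfloor)\vert \leq n\rho(\lceil r \rfloor) = rank(n\mathbb{D})$, a contradiction. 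Therefore $nb$ is a maximal element of $n\mathbb{D}$, i.e.\ a basis vector of $n\mathbb{D}$.

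This lemma is elementary and I do not expect a genuine obstacle; the only points requiring a little care are using the correct (rank-function) description of $n\mathbb{D}$ rather than the dilation, and justifying $rank(n\mathbb{D})=n\,rank(\mathbb{D})$ together with the bound $\vert y\vert \leq \rho'(\lceil r \rfloor)$ for all $y \in n\mathbb{D}$ --- both of which are immediate from the equivalence between the vector description and the rank-function description of a discrete polymatroid recorded in Section \ref{II B}.
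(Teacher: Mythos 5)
Your argument is correct and follows essentially the same route as the paper's own proof: membership of $nb$ in $n\mathbb{D}$ via the rank-function characterization, and maximality by noting that any $u > nb$ would have $\vert u\vert > n\,rank(\mathbb{D}) = rank(n\mathbb{D})$, contradicting $\vert u(\lceil r \rfloor)\vert \leq n\rho(\lceil r \rfloor)$. Your version merely spells out the identity $rank(n\mathbb{D}) = n\,rank(\mathbb{D})$ and the final membership bound a bit more explicitly than the paper does.
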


\begin{lemma}
\label{lemma_7}
Consider a representable discrete polymatroid $\mathbb{D},$ with vector subspaces $V_1,V_2,\dotso V_r$ forming a representation for $\mathbb{D}.$ Let $b$ be a basis vector vector of $\mathbb{D}$ and let $b_i$ denote the $i^{th}$ component of $b.$ There exists vector subspaces $V'_i$ of $V_i, i \in (b)_{>0},$ such that $dim(V'_i)=b_i$ and $dim(\sum_{i \in (b)_{>0}} V'_i)=rank(\mathbb{D}).$ 
\begin{proof}
Follows from Lemma 6.3 in \cite{FaMaPa}.
\end{proof}
\end{lemma}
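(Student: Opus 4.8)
The plan is to deduce the lemma from a single application of Rado's theorem on independent transversals in a linear matroid. Write $R=rank(\mathbb{D})=|b|$. Since the $V_i$ represent $\mathbb{D}$, we have $\dim V_i=\rho(\{i\})\ge |b(\{i\})|=b_i$ for every $i\in(b)_{>0}$, and more generally $\dim\big(\sum_{i\in A}V_i\big)=\rho(A)\ge |b(A)|$ for all $A\subseteq\lceil r\rfloor$, simply because $b\in\mathbb{D}$. What we must exhibit is, for each $i\in(b)_{>0}$, a $b_i$-dimensional subspace $V'_i\subseteq V_i$ such that the $R$-dimensional total $\sum_{i\in(b)_{>0}}V'_i$ has dimension $R$; note that once we find $b_i$ linearly independent vectors inside each $V_i$ whose union over all $i$ is still linearly independent, taking $V'_i$ to be the span of the $i$-th batch does the job, and the sum is automatically direct since the dimensions add to $\sum_i b_i=R$.

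For each $i\in(b)_{>0}$ fix a basis $\mathcal{B}_i$ of $V_i$. Form the finite ground set $S=\bigsqcup_{i\in(b)_{>0}}\big(\mathcal{B}_i\times\lceil b_i\rfloor\big)$ consisting of $b_i$ labelled copies of $\mathcal{B}_i$, and let $M$ be the linear matroid on $S$ in which a subset is independent exactly when the underlying family of vectors (forgetting the copy labels) is linearly independent over $\mathbb{F}_q$; in particular two copies of the same vector form a circuit. Consider the family $\mathcal{A}=\big(A_{(i,t)}\big)$ indexed by $\{(i,t):i\in(b)_{>0},\,t\in\lceil b_i\rfloor\}$ with $A_{(i,t)}=\mathcal{B}_i\times\{t\}$. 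An independent transversal of $\mathcal{A}$ selects one element from each $A_{(i,t)}$; because two copies of one vector are dependent, for each $i$ it selects $b_i$ distinct vectors of $\mathcal{B}_i$, and the whole selected set of $\sum_i b_i=R$ vectors is linearly independent. Setting $V'_i$ to be the span of the $b_i$ vectors chosen from $\mathcal{B}_i$ then yields $\dim V'_i=b_i$ and $\dim\big(\sum_{i\in(b)_{>0}}V'_i\big)=R=rank(\mathbb{D})$, as required.

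It remains to check Rado's hypothesis: $r_M\big(\bigcup_{(i,t)\in J}A_{(i,t)}\big)\ge |J|$ for every index set $J$. Let $A=\{i:(i,t)\in J\text{ for some }t\}\subseteq(b)_{>0}$. The set $\bigcup_{(i,t)\in J}A_{(i,t)}$ consists of copies of exactly the vectors in $\bigcup_{i\in A}\mathcal{B}_i$, so its rank in $M$ equals $\dim\big(\sum_{i\in A}V_i\big)=\rho(A)$. On the other hand $|J|\le\sum_{i\in A}b_i=|b(A)|$, and $|b(A)|\le\rho(A)$ because $b\in\mathbb{D}$. Hence $r_M\big(\bigcup_{(i,t)\in J}A_{(i,t)}\big)=\rho(A)\ge|b(A)|\ge|J|$, Rado's theorem applies, and the desired independent transversal exists.

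The only real content of the argument is the translation in the last paragraph: after blowing each $V_i$ up into $b_i$ copies, the combinatorial condition guaranteeing an independent transversal becomes literally the inequality $|b(A)|\le\rho(A)$ expressing $b\in\mathbb{D}$, so I expect the bookkeeping there to be the only place one can slip. I want to flag that a naive greedy construction of the $V'_i$ — enlarging them one dimension at a time — can get stuck, which is precisely why the augmenting structure behind Rado's theorem (equivalently, the matroid union theorem of Nash-Williams) is needed; one could also bypass the citation to \cite{FaMaPa} entirely by writing out the standard augmenting-path proof of matroid union, but invoking Rado is the most economical route.
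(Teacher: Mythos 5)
Your argument is correct, but it is a genuinely different route from the paper's: the paper does not prove the lemma at all, it simply invokes Lemma 6.3 of Farr\`as--Mart\'{\i}-Farr\'e--Padr\'o \cite{FaMaPa} as a black box, whereas you give a self-contained derivation from Rado's theorem on independent transversals. Your reduction is sound in the places where it could go wrong: the labelled copies $\mathcal{B}_i\times\lceil b_i\rfloor$ make the slot sets disjoint (so distinctness across slots is automatic), declaring two copies of one vector a circuit forces the $b_i$ vectors chosen inside $V_i$ to be distinct and hence to span a $b_i$-dimensional $V'_i$, the rank of $\bigcup_{(i,t)\in J}A_{(i,t)}$ in the parallel-extension matroid is exactly $\dim\bigl(\sum_{i\in A}V_i\bigr)=\rho(A)$ because every $i$ touched by $J$ contributes all of $\mathcal{B}_i$, and Rado's condition then collapses to $|J|\le|b(A)|\le\rho(A)$, which is precisely $b\in\mathbb{D}$; finally $|b|=rank(\mathbb{D})$ since $b$ is a basis vector, so the $R$ selected vectors span the required dimension. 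What each approach buys: the citation is shorter and defers to a published result, while your proof makes the mechanism explicit, stays entirely over the given field $\mathbb{F}_q$ (no enlargement of the field is needed, in contrast to the field-size issues the paper must handle later in Theorem \ref{Thm_if}), and correctly identifies why a naive greedy choice of the $V'_i$ can stall --- the augmenting structure behind Rado/matroid union is the real engine, both here and, in essence, behind the cited lemma.
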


Now we proceed to give the proof of Theorem \ref{Thm_if}.
\subsubsection*{{PROOF OF THEOREM \ref{Thm_if}}}
 Assume that the vector subspaces $V_i, i \in \lceil r \rfloor,$ form a representation for the discrete polymatroid $n\mathbb{D}$ over $\mathbb{F}_q.$ Let $A_i, i \in \lceil r \rfloor,$ denote a matrix over $\mathbb{F}_q$ of size $n k \times n \rho(\{i\})$ whose columns span $V_i.$ Let $A'_i=A_i \Gamma_i,$ where $\Gamma_i$ is a matrix of size $n \rho(\{i\}) \times n \rho(\{i\}),$ whose entries are indeterminates. Let $A'_i(j), j \in \lceil \rho(\{i\} \rfloor),$ denote the submatrix of $A'_i$ of size $nk \times n$ obtained by taking only the ${(j-1)n+1}^{th}$ to ${jn}^{th}$ columns of $A'_i.$ 
 
Let $b$ be a basis vector of $\mathbb{D}.$ Let $b_i$ denote the $i^{th}$ element of $b.$ Let us define a set of polynomials with the entries of the matrices $\Gamma_i, i \in \lceil r \rfloor$ as the indeterminates as follows: Choose $b_i$ integers from the set $\lceil \rho(\{i\})\rfloor,\forall i \in \lceil r \rfloor,$ denoted by $l_1^i,l_2^i,\dotso,l_{b_i}^i.$ Consider the polynomial which is the determinant of the $nk \times nk$ matrix obtained by the concatenation of all the matrices $A'_i(l_j^i),$ where $j \in \lceil b_i \rfloor$ and $i \in \lceil r \rfloor.$ Let $\mathcal{P}(b)$ denote the set of all polynomials obtainable using the procedure mentioned above, for a fixed basis vector $b.$

Suppose we want to find an assignment for the indeterminates in the matrices $\Gamma_i, i \in \lceil r \rfloor,$ from a field $\mathbb{F}_{q'},$ such that the following conditions are satisfied:\\
(i) the determinant of all the matrices $\Gamma_i$ evaluate to non-zero values and\\
(ii) for all the basis vectors $b \in \mathcal{B}(\mathbb{D}),$ all the polynomials which belong to the set $\mathcal{P}(b)$ evaluate to non-zero values. 

The claim is that from a extension field $\mathbb{F}_{q'}$ of $\mathbb{F}_q$ of size greater than $N(\mathbb{D}),$  it is possible to find an assignment for the indeterminates such that the above two conditions are satisfied.

\begin{remark}
If the discrete polymatroid $\mathbb{D}$ is of the form $\mathbb{D}(\mathbb{M}),$ where $\mathbb{M}$ is a matroid,  assigning $\Gamma_i$'s to be identity matrices, the two conditions given above are satisfied. There is no need to look for an extension field whose size is greater than $q$ for this case.  
\end{remark}

Towards proving the claim, we first show that all the polynomials which belong to the set $\mathcal{P}(b)$ are non-zero polynomials, for all $ b \in \mathcal{B}(\mathbb{D}).$ To show this, it is enough to show that there exists an assignment of values for the indeterminates for each one of the polynomials which belong to $\mathcal{P}(b),$ possibly different for different polynomials, such that the polynomials evaluate to non-zero values in $\mathbb{F}_{q'}.$  

From Lemma \ref{lemma_6}, it follows that for $b \in \mathbb{D},$ $nb \in \mathcal{B}(n \mathbb{D}).$   Since $nb \in \mathcal{B}(n\mathbb{D}),$ from Lemma \ref{lemma_7} it follows that there exists vector subspaces $V'_i$ of $V_i,i \in (b)_{>0},$ of dimension $n b_i$ such that $dim(\sum_{i \in (b)_{>0}}V'_i)=n k.$  Let $B_i$ denote a matrix whose columns span $V'_i.$ Since the columns of the matrix $A_i$ form a basis for $V_i$ and $V'_i$ is a subspace of $V_i,$ $B_i$ can be written as $A_i \Lambda_i,$ where $\Lambda_i$ is an $n \rho(\{i\})\times n b_i$ matrix over $\mathbb{F}_q.$ The determinant of the $nk \times nk $ matrix obtained by the concatenation of the matrices $B_i, i \in (b)_{>0}$ is non-zero. A polynomial which belongs to $\mathcal{P}(b)$ is nothing but the determinant of a $n k \times nk$ matrix obtained by the concatenation of matrices obtained multiplying the matrix $A_i$ by $n b_i$ columns of $\Gamma_i,$ for every $i \in (b)_{>0}.$ Assigning the $n b_i$ columns of $\Gamma_i$ to be the columns of $B_i,$ the polynomials which belong to $\mathcal{P}(b)$ evaluate to non-zero values and hence they are non-zero polynomials. 

To find an assignment for for the indeterminates in $\Gamma_i, i \in \lceil r \rfloor$ such that the two conditions (i) and (ii) are satisfied, it suffices to find an assignment for the indeterminates such that the following polynomial evaluates to a non-zero value:
$$P(\Gamma_1,\Gamma_2,\dotso,\Gamma_r)=\left(\prod_{i \in \lceil r \rfloor} det(\Gamma_i)\right)\left( \prod_{b \in \mathcal{B}(\mathbb{D})} \prod_{p \in \mathcal{P}(b)}p\right).$$
If the field size $q'$ is greater than the degree of the above the polynomial in every indeterminate, then an assignment for the indeterminates form $\mathbb{F}_{q'}$ for which the above polynomial evaluates to a non-zero value is guaranteed to exist (follows from Lemma 19.27 in Chapter 19, \cite{Ye}). 

Consider an indeterminate which is an entry of the matrix $\Gamma_i,$ which is denoted by $\gamma_i.$ For a basis vector $b$ for which $b_i>0,$ there are $\left({\rho(\{i\})\choose {b_i-1}}\prod_{j \in (b)_{>0}\setminus\{i\}}{\rho(\{j\})\choose b_j}\right)$ polynomials in $\mathcal{P}(b)$ which involve $\gamma_i$ and in each one of these polynomials, the degree of $\gamma_i$ is one. Also, the degree of the polynomial $det(\Gamma_i)$ in $\gamma_i$ is one. Hence, the degree of the polynomial $P(\Gamma_1,\Gamma_2,\dotso,\Gamma_r)$ in $\gamma_i$ is $1+\left({\rho(\{i\})\choose {b_i-1}}\prod_{j \in (b)_{>0}\setminus\{i\}}{\rho(\{j\})\choose b_j}\right).$ Maximizing over all $i \in \lceil r \rfloor,$ it follows that for $q' > N(\mathbb{D})$ there exists an assignment for $\Gamma_i, i \in \lceil r \rfloor$ for which the polynomial $P(\Gamma_1,\Gamma_2,\dotso,\Gamma_r)$ evaluates to a non-zero value. Let $\Omega_i,i\in \lceil r \rfloor,$ denote one such assignment. Let $G_i=A_i \Omega_i.$ Note that $G_i$ has a rank $n \rho(\{i\})$ and the columns of $G_i$ span $V_i.$ Let $\theta=[x_1 \; x_2 \dotso x_k].$ 
Define the function $f$ as,

{\small
\begin{align*}
f(Z)&\triangleq[\tau^1_1\;\tau^2_1\dotso \tau^{\rho(\{1\})}_1\; \tau^1_2\;\tau^2_2\dotso \tau^{\rho(\{2\})}_1\dotso \tau^1_r\;\tau^1_r\dotso \tau^{\rho(\{r\})}_r]\\
&\hspace{0 cm}=[y^1_1\;y^2_1\dotso y^{\rho(\{1\})}_1\; y^1_2\;y^2_2\dotso y^{\rho(\{2\})}_1\dotso y^1_r\;y^1_r\dotso y^{\rho(\{r\})}_r]\\
&\hspace{5.5 cm}+\theta[G_1 \; G_2\dotso G_r].
\end{align*}
}

Let $G_i=[G_i(1)\;G_i(2)\dotso\;G_i(\rho(\{i\}))],$ where $G_i(j), j \in \lceil \rho(\{i\})\rfloor$ are $nk \times n$ matrices.
It is shown below that $f$ forms a perfect linear index coding solution of dimension $n$ over $\mathbb{F}_q$ for the index coding problem $\mathcal{I}_{\mathbb{D}}(Z, \mathcal{R}).$

For a receiver node $R=(y^j_i,X)$ which belongs to $R_3,$ the function $\Psi_R(f(Z),X)=\tau^j_i-\theta G_i(j)$ forms a valid decoding function. 

Consider a receiver node $\left(x_j,\bigcup_{l \in (b)_{>0}} \eta_l\right)$  which belongs to the set $S_1(b),$ where $j \in \lceil k \rfloor, \eta_l \subseteq \zeta_l, \vert \eta_l \vert =b_l,$ and $\zeta_l=\{y^1_l,y^2_l,\dotso y^{\rho(\{l\})}_{l}\}.$ Consider the matrix $M$ of size $nk \times nk$ obtained by the concatenation of the matrices $G_l(t),$ where $l \in (b)_{>0}$ and $t$ is such that $y^t_l \in \eta_l.$ By virtue of the choice of $G_i$'s, the matrix $M$ is full rank. Let $\chi$ denote the vector obtained by the concatenation of the vectors which belong to $\bigcup_{l \in (b)_{>0}} \eta_l.$ Let $\omega$ denote the concatenation of $\tau_l^t$'s for which $l \in (b)_{>0}$ and $t$ is such that $y_l^t \in \eta_l.$ The vector $\theta$ is given by $(\omega-\chi)M^{-1}.$ Hence, decoding functions exist for receivers which belong to $R_1.$

Let $c$ be a minimal excluded vector for $\mathbb{D}$ and let $c_l$ denote the $l^{th}$ component of $c.$ Consider a receiver node $ (y_j^p,\Gamma_1 \cup \Gamma_2)$ which belongs to $S_2(c,j,p),$ where $j \in (c)_{>0},$ $p \in \lceil \rho(\{j\})\rfloor,$ $\Gamma_1=\bigcup_{l \in (c)_{>0}\setminus \{j\}} \eta_l,\; \eta_l \subseteq \zeta_l, \vert \eta_l \vert=c_l, \Gamma_2\subseteq \zeta_{j} \setminus \{y^p_j\}, \vert \Gamma_2\vert=c_j-1.$ Let $M'$ denote the concatenation of the matrices $G_l(t),$ where $l \in (c)_{>0}$ and $t$ is such that $y^t_l \in \Gamma_1 \cup \Gamma_2.$ It is claimed that $rank([M'\;G_j(p)])=rank(M').$ Since $c$ is a minimal excluded vector, the vector $u=\sum_{i \in (c)_{>0}} c_i \epsilon_{i,r} + (c_j-1)\epsilon_{i,r}$ belongs to $\mathbb{D}.$ Hence, there exists a basis vector $b \in \mathcal{B}(\mathbb{D})$ for which $u\leq  b.$ Note that $b_j=c_j-1,$ since, if $b_j >c_j-1,$ then $c<b$ and hence $b$ and $c$ respectively cannot be simultaneously basis and excluded vectors. Define the set of matrices $\mathcal{G}_j=\{G_j(o) :y ^o_j \in \Gamma_2\}$ and for $i \in (c)_{>0}\setminus\{j\},$ define $\mathcal{G}_i=\{G_i(o) :y ^o_i \in \Gamma_1\}.$ Note that the matrix $M'$ is the concatenation of the matrices which belong to the sets  $\mathcal{G}_i, i \in (c)_{>0}.$ For $i \in (b)_{>0},$ define $\mathcal{G}'_i$ to be a set of matrices which is a subset of size $(b)_{>0}-(c)_{>0}$ of the set $ \{G_i(o) : o \in \lceil \rho(\{i\})\rfloor\}\setminus \mathcal{G}_i.$ Note that $\mathcal{G}'_j$ is the null set. Let $M''$ denote the $nk \times nk$ matrix obtained by the concatenation of matrices which belong to $\mathcal{G}_i$ and $\mathcal{G}'_i, i \in (b)_{>0}.$ The choice of the matrices $G_i$'s ensures that $M''$ is of full rank equal to $nk.$ Since $M'$ is a submatrix of $M''$ of size $nk \times n (\vert c \vert -1),$ $M'$ should be of rank $n (\vert c \vert -1).$ Note the the vector subspace $V_i$ is the column span of $G_i.$ Since, the vector subspaces $V_i, i \in \lceil r \rfloor,$ form a representation of $n\mathbb{D},$ the rank of the matrix $\tilde{M}$ obtained by the concatenation of the matrices $G_l,l \in (c)_{>0}$ should equal $n\rho((c)_{>0}),$ which is equal to $n (\vert c\vert -1).$ Since $[M'\;\;G_j(p)]$ is a submatrix of $\tilde{M},$ we have,
 $$n (\vert c \vert -1)=rank(M')\leq rank([M'\;G_j(p)])\leq rank(\tilde{M}).$$

Since $rank(\tilde{M})=n (\vert c \vert -1),$ we have $rank([M'\;G_j(p)])=rank(M')$ and the matrix $G_j(p)$ can be written as $M'W,$ where $W$ is of size $n(\vert c \vert-1)\times n.$ Let $\tau'$ denote the concatenation of the vectors which belong to the set $\{\tau_i^o: y_i^o \in \Gamma_1\cup \Gamma_2\}$ and let $y'$ denote the concatenation of the vectors which belong to the set $\Gamma_1\cup \Gamma_2.$ We have, $y^p_j=\tau^p_j-\theta G_j(p)=\tau^p_j-(\theta M')W=\tau^p_j-(\tau'-y')W.$ Hence, decoding functions exist for the receivers which belong to $R_2.$ This completes the proof of Theorem \ref{Thm_if}.
\section{Equivalence of Theorem \ref{thm3} and Theorem \ref{thm9}}
\label{appB}
\subsection*{From Theorem \ref{thm9} to Theorem \ref{thm3}}
Let $V_i,i\in \lceil \vert \mathcal{E}\vert+1\rfloor,$  denote the vector spaces which form a representation of the discrete polymatroid $\mathbb{D}$ in Theorem \ref{thm9}, with $V_i$ being the column span of a matrix $A_i.$ For $i \in  \mathcal{E},$ 
since $\rho(\{i\})=n$ and $rank(\mathbb{D})=n\vert \mathcal{E}\vert,$ $A_i$ is of size $n \vert \mathcal{E}\vert \rfloor \times n.$ Since $\rho(\{\vert \mathcal{E}\vert +1\})=n\vert \mathcal{E}\setminus \mathcal{S}\vert,$ $A_{\vert \mathcal{E}\vert+1}$ is of size $n \vert \mathcal{E}\vert \times n \vert \mathcal{E}\setminus \mathcal{S}\vert.$ Let $B$ denote the $n \vert \mathcal{E}\vert \times n \vert \mathcal{E}\vert$ matrix $[A_1 A_2 \dotso A_{\vert \mathcal{E}\vert}].$ Since $\rho(\mathcal{E})=n\vert \mathcal{E}\vert,$ $B$ is invertible and can be assumed to identity (If $B$ is not identity, one can define verctor spaces $V'_i, i \in \lceil \vert \mathcal{E}\vert+1\rfloor$ to be column span of $A'_i=B^{-1}A_i$ which will also form a representation of $\mathbb{D}.$

Also, one can assume the lower $n \vert \mathcal{E} \setminus \mathcal{S}\vert \times n \vert \mathcal{E} \setminus \mathcal{S}\vert$ sub-matrix of $A_{\vert \mathcal{E}\vert+1}$ to be an identity matrix. The reason for this is as follows: Define $B_1=[A_1 A_2 \dotso A_{\vert \mathcal{S}\vert}], B_2=[A_{\vert \mathcal{S}\vert+1} A_{\vert \mathcal{S}\vert+2}\dotso A_{\vert \mathcal{E}\setminus \mathcal{S}\vert}].$ Also, let $\Gamma_1$ and $\Gamma_2$ respectively denote the upper $n \vert \mathcal{S}\vert \times n \vert \mathcal{E} \setminus \mathcal{S}\vert$ submatrix and lower  $n \vert  \mathcal{E} \setminus \mathcal{S}\vert \times n \vert \mathcal{E} \setminus \mathcal{S}\vert$ submatrix of $A_{\vert \mathcal{E}\vert+1}$ respectively. From (NID4), it follows that $rank([B_1 \;A_{\vert \mathcal{E}\vert+1}])=rank([B_1\; B_2\; A_{\vert \mathcal{E}\vert+1}])=n \vert \mathcal{E}\vert.$ Let $I_a$ denote the identity matrix of order $a$ and $0_{a \times b}$ denote the all-zero matrix of size $a \times b.$ Since the matrix $[B_1 \;A_{\vert \mathcal{E}\vert+1}]= 
\left[\begin{matrix} I_{n \vert \mathcal{S}\vert} &\Gamma_1\\0_{n \vert \mathcal{E}\setminus \mathcal{S}\vert \times n \vert \mathcal{S}\vert} &\Gamma_2\end{matrix}\right]$ is full rank, the lower $n \vert \mathcal{E}\setminus \mathcal{S}\vert$ rows should have a rank $n \vert \mathcal{E}\setminus \mathcal{S}\vert$ and hence $\Gamma_2$ is an invertible matrix. Post-multiplying by an invertible matrix $\Gamma_2^{-1}$ does not change the column span of $A_{\vert \mathcal{E}\vert+1}.$ Hence, we can assume $\Gamma_2$ to be an identity matrix.

Let $C$ denote the matrix $[B\; A_{\vert \mathcal{E}\vert+1}].$ The matrix $C$ is of the form given below:

$$C=\left[
\underbrace{\begin{matrix}
I_{n \vert \mathcal{S}\vert} &0_{n \vert \mathcal{S}\vert\times n \vert \mathcal{E}\setminus \mathcal{S}\vert}\\0_{n \vert \mathcal{E}\setminus \mathcal{S} \times n \vert \mathcal{S}\vert} & I_{n \vert \mathcal{E}\setminus\mathcal{S}\vert}
\end{matrix}}_{B} \underbrace{\begin{matrix}
\Gamma_1\\I_{n \vert \mathcal{E}\setminus \mathcal{S}\vert}
\end{matrix}}_{A_{\vert \mathcal{E}\vert+1}}\right].$$ 

 Let $\Theta=[I_{n \vert \mathcal{S}\vert} \; \Gamma_1].$ Let $V_i'', i \in \mathcal{E},$ denote the span of ${(i-1)n+1}^{\text{th}}$ to ${in+1}^{\text{th}}$ columns of $\Theta.$ The claim is that the discrete polymatroid $\mathbb{D}'=\mathbb{D}(V''_1,V''_2,\dotso V''_{\vert \mathcal{E}\vert})$ with rank function $\rho'$ satisfies the condition in Theorem \ref{thm3}.

Assuming that the edges of the network are numbered as in Section \ref{VII A}, define the network-discrete polymatroid mapping $f$ to be $f(\{i\})=i, i \in  \mathcal{E}.$ Clearly $f$ is one-to-one on the elements of $\mathcal{S}$ and hence (DN1) is satisfied. 

To show that (DN2) is satisfied, it needs to be shown that the vector $u=\sum_{i \in  \mathcal{S}}n \epsilon_{i,\vert \mathcal{E}\vert}$ is in $\mathbb{D'}.$ For $X\subseteq   \mathcal{E},$ we have $\vert u(X)\vert=n \vert X'\vert,$ where $X'=X \cap \mathcal{S}.$ We have $\rho'(X)\geq \rho'(X')=dim(\sum_{i \in X'}V''_i)=n \vert X'\vert.$ Hence, from the definition of a discrete polymatroid, it follows that $u \in \mathbb{D}'$ and (DN2) is satisfied.

For all $i \in \lceil \vert \mathcal{S}\vert\rfloor,$ we have $dim(V''_i)=n$ and hence $\rho'(\{i\})=n.$ Since each one of the vector subspaces $V''_i$ is a span of $n$ columns, we have $\max_{i \in \mathcal{E}\setminus \mathcal{S}} \rho'(\{i\})=n.$ Hence (DN3) is satisfied.

Let $\Gamma'_i,i \in \mathcal{E},$ denote a $n \vert \mathcal{E}\vert \times n$ matrix whose $(a,b)^{\text{th}}$ entry is the $(a,(i-1)n+b)^{\text{th}}$ entry of $\Gamma_1$ for $a \in \lceil \vert \mathcal{E}\setminus \mathcal{S}\vert \rfloor$ and zero otherwise. 
From this definition, we have $A_{\vert \mathcal{E}+1\vert}=[(\Gamma'_1+A_{\vert \mathcal{S}\vert+1})\;(\Gamma'_2+A_{\vert \mathcal{S}\vert+2})\dotso\;(\Gamma'_{\vert\mathcal{E}\setminus \mathcal{S}\vert}+A_{\vert \mathcal{E}\vert})].$

For $v \in \mathcal{V}$ and $e\in Out(v)\setminus \delta(v),$ from (NID2) it follows that $A_e$ can be written as a linear combination of the columns of the matrices $A_i, i \in In(v)$ and $A_{\vert \mathcal{E}\vert+1}.$ In other words, for appropriate choices of weight matrices whose columns represent the linear combinations, $A_e$ can be written as,

{\footnotesize
\begin{align}
\nonumber
A_e=&\sum_{i\in In(v)}A_i W_i+A_{\vert \mathcal{E}\vert+1}W'_{\vert \mathcal{E}\vert+1}\\
\nonumber
=&\sum_{i \in In(v)\cap \mathcal{S}}A_i W_i+\sum_{j\in In(v)\cap \mathcal{E}\setminus \mathcal{S}} A_j W_j +\sum_{j\in In(v)\cap \mathcal{E}\setminus \mathcal{S}} (\Gamma'_j+A_j) W'_j \\
\label{Ae_expansion}
&+
\sum_{k\in \mathcal{E}\setminus \mathcal{S}\setminus In(v)\setminus \{e\}}(\Gamma'_k+A_k)W_k'+(\Gamma'_e+A_e)W'_e. 
\end{align}}

The columns of the matrices $A_i,i \in \lceil \vert \mathcal{E}\setminus \mathcal{S}\vert \rfloor,$ form a set of basis vectors for an $n \vert \mathcal{E}\vert \times n\vert \mathcal{E} \vert$ vector space over $\mathbb{F}_q.$ The matrices $\Gamma'_j$ can be written as a linear combination of the columns of the matrices $A_j, j \in  \mathcal{S}.$
Hence, from  \eqref{Ae_expansion}, it follows that $W'_e$ is an identity matrix, $W'_k=0, \forall k\in \mathcal{E}\setminus \mathcal{S}$ and $W'_j=-W_j, \forall j \in  In(v)\cap \mathcal{E}\setminus \mathcal{S}.$ Hence, we have $\Gamma'_e=-\sum_{i \in In(v)\cap \mathcal{S}}A_i W_i -\sum_{j \in  In(v)\cap \mathcal{E}\setminus \mathcal{S}}\Gamma'_j W'_j.$ Hence (DN3) is satisfied for $e\in out(v)\setminus\delta(v).$

For $e\in \delta(v),$ similar to \eqref{Ae_expansion}, from (NID3), $A_e$ can be written as,

{\footnotesize
\begin{align}
\nonumber
A_e=&\sum_{i \in In(v)\cap \mathcal{S}\setminus \{e\}}A_i W_i+\sum_{j\in In(v)\cap \mathcal{E}\setminus \mathcal{S}} A_j W_j \\
\nonumber
&+\sum_{j\in In(v)\cap \mathcal{E}\setminus \mathcal{S}} (\Gamma'_j+A_j) W'_j +
\sum_{k\in \mathcal{E}\setminus \mathcal{S}\setminus In(v)}(\Gamma'_k+A_k)W_k'. 
\end{align}}

From the above equation, it follows $W'_k=0, \forall k\in \mathcal{E}\setminus \mathcal{S}$ and $W'_j=-W_j, \forall j \in  In(v)\cap \mathcal{E}\setminus \mathcal{S}.$ Hence, we have, $A_e=\sum_{i \in In(v)\cap \mathcal{S}\setminus \{e\}}A_i W_i+\sum_{j\in In(v)\cap \mathcal{E}\setminus \mathcal{S}} \Gamma'_j W'_j$ and (DN3) is satisfied for $e \in \delta(v).$
\subsection*{From Theorem \ref{thm3} to Theorem \ref{thm9}}
Let $\mathbb{D}'$ denote the discrete polymatroid in Theorem \ref{thm3} with respect to which the network considered is discrete polymatroidal.As shown in the proof of Theorem \ref{thm3}, one can assume $\lceil r\rfloor$ to be the image of the mapping $f$ and $rank(D')=n \vert \mathcal{S}\vert.$ Let $V_i, i \in \lceil r \rfloor$ denote the vector subspaces which form a representation of $\mathbb{D}$ with $V_i$ being the column span of a matrix $A_i$  of size $n \vert \mathcal{S}\vert\times n.$

Let $B_i,i\in \mathcal{E},$ denote the $n \vert \mathcal{E}\vert$ matrix obtained by taking the $(i-1)n+1^{th}$ to $in^{th}$ columns of the $n  \vert \mathcal{E}\vert \times n \vert \mathcal{E}\vert$ identity matrix. Let $B_{\vert \mathcal{E}\vert+1}$ denote the matrix $\left[\begin{matrix}A_{f(\{\vert \mathcal{S}\vert+1\}}& A_{f(\{\vert \mathcal{S}\vert+2\}}& \dotso &A_{\vert \mathcal{E}\vert}\\&I_{n \vert \mathcal{E}\setminus \mathcal{S}\vert\times n \vert \mathcal{E}\setminus \mathcal{S}\vert}&&\end{matrix}\right].$ Let $V'_i, i \in \vert\mathcal{E}\vert+1,$ denote the column span of $B_i.$ It is claimed that the discrete polymatroid $\mathbb{D}''=\mathbb{D}(V'_1,V'_2,\dotso,V'_{\vert\mathcal{E}\vert+1})$ satisfies the conditions in Theorem \ref{thm9}. (NID1) follow and (NID4) directly from the definition of $\mathbb{D}''.$ 

Let $A'_i$ denote the matrix $\left[\begin{matrix}A_i\\0_{n \vert \mathcal{S}\vert\times n}\end{matrix}\right].$
For $e \in \mathcal{E}\setminus \mathcal{S},$ from (DN2), it follows that $A'_{f(\{e\}}=\sum_{i \in In(head(e)\setminus \mathcal{S}}A'_{f(\{i\})}W_i+\sum_{j \in In(head(e))\cap \mathcal{S}}B_jW'_j.$ The matrix $B_e$ can be written as follows:

\begin{align*}
\nonumber
B_e=&(A'_{f(\{e\})}+B_e)-A'_{f(\{e\})}\\
=&(A'_{f(\{e\})}+B_e)-\sum_{i \in In(head(e)\setminus \mathcal{S}}(A'_{f(\{i\})}+B_i)W_i\\
&+\sum_{i \in In(head(e)\setminus \mathcal{S}}B_iW_i+\sum_{j \in In(head(e))\cap \mathcal{S}}B_jW'_j.
\end{align*}

From the above equation, it follows that $B_e$ can be written as a linear combination of the columns of the matrices $B_i, i \in In(head(e))$ and $B_{\vert \mathcal{E}\vert+1}.$ Hence (NID2) is satisfied.

For $v\in \mathcal{v},$ from (DN2) $B_{\delta(v)}$ can be written as 

\begin{align*}
B_{\delta(v)}=&\sum_{i \in In(v)\setminus \mathcal{S}}A'_{f(\{i\})}W_i+\sum_{j \in In(v)\cap \mathcal{S}}B_jW'_j\\
=&\sum_{i \in In(v)\setminus \mathcal{S}}(A'_{f(\{i\})}+B_i)W_i+\sum_{j \in In(v)\cap \mathcal{S}}B_jW'_j\\
&-\sum_{i \in In(v)\setminus \mathcal{S}}B_iW_i.
\end{align*}

From the above equation, it follows that $B_{\delta(v)}$ can be written as a linear combination of the columns of the matrices $B_i, i \in In(v)$ and $B_{\vert \mathcal{E}\vert+1}.$ Hence (NID3) is satisfied.

This completes the proof of equivalence of Theorem \ref{thm3} and Theorem \ref{thm9}.
\section{Equivalence of Theorem \ref{thm5} and Theorem \ref{thm_IC_NC_DPMD}}
\subsection*{From Theorem \ref{thm_IC_NC_DPMD} to Theorem \ref{thm5}}
Let $\mathbb{D}$ denote the discrete polymatroid satisfying the conditions in Theorem \ref{thm_IC_NC_DPMD}.
Consider the following two cases.\\
\textit{Case 1:} $\rho(f(\{e_{m+1,m+2}\}))\notin \lceil m \rfloor$\\
Without loss of generality, assume $f(\{e_{m+1,m+2}\})=m+1.$ 
Define a new discrete polymatroid $\mathbb{D}'$ on ground set $\lceil m+1 \rfloor$ with rank function $\rho':2^{\lceil m+1 \rfloor} \rightarrow \mathbb{Z}_{\geq 0}$ given by $\rho'(A)=\rho(A).$ The fact that $\mathbb{D}'$ is a discrete polymatroid follows directly from the fact that $\mathbb{D}$ is a discrete polymatroid. 

\textit{Proof of (C1):}
From (IND2), we have $\rho'(\{i\})=n, \forall i\in \lceil m \rfloor.$ From (IND1), since $\sum_{i \in \lceil m \rfloor}n \epsilon_{i,r} \in \mathbb{D},$ we have $\rho'(\lceil m \rfloor)= \rho(\lceil m \rfloor)\leq nm.$ Since $\mathbb{D}'$ is a discrete polymatroid, from (R3) it follows that,

\begin{align*}
\rho'(\lceil m \rfloor)&\leq \rho'(\{m\})+\rho'(\lceil m-1\rfloor)\\
&\leq \rho'(\{m\})+\rho'(\{m-1\})+\rho'(\lceil m-2\rfloor)\\
\dotso &\leq \sum_{i \in \lceil m \rfloor} \rho'(\{i\})=nm.
\end{align*}

Hence, we have $\rho'(\lceil m \rfloor)=nm.$
From (IND2), it follows that $max_{i\in\lceil m+1 \rfloor}\rho'(\{i\})=\max\{n,\rho'(\{m+1\})\}=c.$ Hence, we have $\rho'(\{m+1\})=c.$
Also, we have, $rank(\mathbb{D'})=\rho'(\lceil m+1 \rfloor\})=\rho'(\{\lceil m \rfloor\})=nm$ (follows from (IND4)). 

Condition (C2) in Theorem \ref{thm5} follows directly from condition (IND3) in Theorem \ref{thm_IC_NC_DPMD}.

\textit{Case 2:} $\rho(f(\{e_{m+1,m+2}\}))\in \lceil m \rfloor$\\
Let $V_i, i \in \lceil r \rfloor,$ denote the vector subspaces which form a representation of $\mathbb{D}.$ Following exactly the same steps as in \textit{Case 1}, it can be shown that the discrete polymatroid $\mathbb{D}(V_1,V_2,\dotso V_m,V_f(\{e_{m+1,m+2}\}))$ satisfies the conditions in Theorem \ref{thm5}.
\subsection*{From Theorem \ref{thm5} to Theorem \ref{thm_IC_NC_DPMD}}
Let $\mathbb{D}$ denote the discrete polymatroid on ground set $\lceil m+1 \rfloor$ satisfying the conditions in Theorem \ref{thm5}. It will be shown that $\mathbb{D}$ satisfies all the conditions in Theorem \ref{thm_IC_NC_DPMD} with $f(e_{m+1,m+2})=m+1.$

(IND2) and (IND3) follow directly from (C1) and (C2) respectively. Since $\rho(\lceil m \rfloor)=rank(\mathbb{D})=nm,$ we have,
$$nm=\rho(\lceil m \rfloor)\leq \rho(\lceil m+1 \rfloor)\leq rank(\mathbb{D})=nm.$$ Hence, we have $\rho(\lceil m+1 \rfloor)=nm$ and (IND4) is satisfied.

Consider a set $A'\subseteq \lceil m \rfloor$ and 
$B'=\lceil m \rfloor \setminus A.$ 
From (R3), it can be shown that $\rho(B')\leq\sum_{i \in B'}\rho(\{i\})=n\vert B' \vert=n(m-\vert A' \vert).$ Also from (R3) we have, $\rho(\{A'\})+\rho(\{B'\})\geq \rho(\lceil m\rfloor)=nm.$ Hence, we have $\rho(A')\geq mn-\rho(B') \geq mn-(mn-n\vert A' \vert)=n \vert A'\vert.$ 

Define $x=\sum_{i \in \lceil m \rfloor}n \epsilon_{i,m+1}.$ For $x$ to be a member of $\mathbb{D},$ it should satisfy $\vert x(A) \vert \leq \rho(A),\forall A \subseteq \lceil m+1 \rfloor.$ Define $A'=A \cap \lceil m \rfloor.$ We have $\vert x(A)\vert=n \vert A' \vert.$ Also, we have, $\rho(A)=\rho(A' \cup (A \cap \{m+1\}))\geq \rho(A')\geq n \vert A' \vert=\vert x(A)\vert.$ Hence, (IND1) is satisfied. 

This completes the proof of equivalence of Theorem \ref{thm5} and Theorem \ref{thm_IC_NC_DPMD}.
\label{appC}
\end{appendices}
\section*{Acknowledgment}
This work was supported partly by the Science and Engineering Research Board (SERB) of Department of Science and Technology (DST), Government of India, through J.C. Bose National Fellowship to B. Sundar Rajan.
   

\begin{thebibliography}{160}
\bibitem{Ah}
R. Ahlswede, N. Cai, R. Li and R. Yeung, ``Network Information
Flow,'' \textit{IEEE Transactions on Information Theory}, vol. 46, no. 4, July 2000, pp. 1204--1216.
\bibitem{Li}
S.-Y. R. Li, R. Yeung and N. Cai, ``Linear Network Coding,'' \textit{IEEE Transactions
on Information Theory}, vol. 49, no. 2, Feb. 2003, pp. 371--381.
\bibitem{KoMe}
R. Koetter and M. Medard, ``An Algebraic Approach to Network
Coding,'' \textit{IEEE/ACM Transactions on Networking}, vol. 11, no. 5, Oct.
2003, pp. 782--795.
\bibitem{DoFrZe}
R. Dougherty, C. Freiling, and K. Zeger, ``Networks, Matroids, and
Non-Shannon Information Inequalities'', \textit{IEEE Transactions on Information
Theory}, vol. 53, no. 6, June 2007, pp. 1949--1969.
\bibitem{KiMe}
A. Kim and M. Medard, ``Scalar-linear Solvability of Matroidal
Networks Associated with Representable Matroids,'' \textit{in Proc. Int. Symp. Turbo Codes and Iterative Information Processing}, Brest, 2010, pp. 452--456.
\bibitem{DoFrZe_In}
R. Dougherty, C. Freiling, and K. Zeger, ``Insufficiency of linear coding in network information flow,'' \textit{IEEE Transactions on Information Theory}, vol. 51, no. 8, pp. 2745--2759, Aug. 2005.
\bibitem{LiSu}
S. Li and Q. Sun, ``Network coding theory via commutative algebra,'' \textit{IEEE Transactions on Information Theory}, vol. 57, no. 1, pp. 403--415, Jan 2011.
\bibitem{SuLiCh}
Q. Sun, S. Li, and C. Chan. "Matroidal Characterization of Optimal Linear Network Codes over Cyclic Networks," \textit{IEEE Communications Letters}, vol. 17, no. 10, pp. 1992--1995, Oct. 2013.
\bibitem{PrRa_ISIT}
K. Prasad and B. S. Rajan, ``A Matroidal framework for Network-Error Correcting Codes,'' \textit{in Proc. IEEE Int. Symp. Information Theory}, Cambridge, 2012.  
\bibitem{PrRa_ISITA}
K. Prasad and B. S. Rajan, ``A Construction of Matroidal Error Correcting Codes,'' \textit{in Proc. Int. Symp. Information Theory and its Applications}, Honolulu, 2012.  
\bibitem{MeEfHoKa}
M. Medard, M. Effros, T. Ho and D. R. Karger, ``On coding for non-multicast networks,'' \textit{in Proc. 41st Annual Allerton Conf. Commun., Control, Comput.}, Oct. 2003.
\bibitem{CaDoFrZe}
J. Cannons, R. Dougherty, C. Freiling, and K. Zeger, ``Network Routing Capacity'', \textit{IEEE Transactions on Information Theory}, vol. 52, March 2006, pp. 777--788.
\bibitem{KiMe_FNC}
A. Kim and M. Medard, ``Computing Bounds on Network Capacity Regions as a Polytope Reconstruction Problem'', \textit{in Proc. Int. Symp. Inf. Theory}, St. Petersburg, 2011, pp. 588--592.
\bibitem{DoFrZe_FNC}
R. Dougherty, C. Freiling, and K. Zeger, ``Achievable Rate Regions for Network Coding'', \textit{in Proc. Information Theory and Applications Workshop}, San Diego, 2012, pp. 160--167.
 \bibitem{BaBiJaKo}
 Z. Bar-Yossef, Y. Birk, T. S. Jayram, T. Kol, ``Index Coding with Side Information,'' \textit{47th Annual IEEE Symposium on Foundations of Computer Science, 2006. (FOCS '06.)}, pp.197--206, Oct. 2006.
 \bibitem{ChSp}
 M. A. R. Chaudhry, A. Sprintson, "Efficient algorithms for Index Coding," \textit{IEEE INFOCOM Workshops 2008}, IEEE , pp.1--4, 13-18 April 2008.
 \bibitem{RoSpGe}
S. El Rouayheb, A. Sprintson, and C. Georghiades, ``On the Index Coding Problem and Its Relation to Network Coding and Matroid Theory,'' \textit{IEEE Transactions on Information Theory}, vol. 56, no. 7, June 2010.
\bibitem{We} 
D. J. A. Welsh, \textit{Matroid Theory} London, U.K.: Academic, 1976.
\bibitem{Ox}
J. G. Oxley, \textit{Matroid Theory}. New York: Oxford Univ. Press, Jan. 1993.
\bibitem{HeHi}
J. Herzog and T. Hibi, ``Discrete Polymatroids,'' \textit{J. Algebraic Combinatorics,} {16} (2002) pp. 239--268.  
\bibitem{Vl}
M. Vladoiu, ``Discrete polymatroids,'' \textit{An. St. Univ. Ovidius, Constanta,} {14}, 2006, pp. 89–-112.
\bibitem{Fu}
S. Fujishige, ``Submodular functions and optimization,'' vol. 58, Elsevier, 2005.
\bibitem{FaFaPa}
O. Farras, J. M.-Farre, and C. Padro, ``Ideal Multipartite Secret Sharing Schemes,'' \textit{Advances in Cryptology-EUROCRYPT}, 2007, pp.  448--465.
\bibitem{Ing}
A. W. Ingleton, ``Representation of matroids,'' \textit{Combinatorial Mathematics and its Applications (ed. D. J. A. Welsh)}, pp. 149--167, Academic Press, London.  
\bibitem{HsCh}
C.-F. Hsu and Q. Cheng, ``On Representable Matroids and Ideal Secret Sharing,'' \textit{IACR Cryptology ePrint Archive}, 2010: 232 (2010).
\bibitem{FaMaPa}
O. Farras, J. M.-Farre and C. Padro, ``Ideal multipartite secret sharing schemes,'' \textit{Journal of cryptology}, vol. 25, no. 3, pp. 434-463, 2012.
\bibitem{FaPa}
O. Farras and  C. Padro, ``Ideal Hierarchical Secret Sharing Schemes,'' \textit{IEEE Trans. Inf. Theory}, vol.58, pp.3273--3286, May 2012.
\bibitem{SiAs}
J. Simonis and A. Ashikhmin, ``Almost affine codes,'' \textit{Designs, Codes Cryptography}, vol. 14, pp. 179--197, 1998.
\bibitem{Ma}
F. Matus, ``Matroid representations by partitions,'' \textit{Discrete Math.}, vol. 203, pp. 169--194, 1999.
\bibitem{Ed}
J. Edmonds, ``Submodular functions, matroids, and certain polyhedra,''
\textit{Proc. Calgary Int. Conf. Combinatorial Structures and Their Applications}, Gordon and Breach, New York, 1970, pp. 69--87.
\bibitem{Ye}
R. W. Yeung, \textit{Information theory and network coding,} Springer, 2008.
\bibitem{EfRoLa}
M. Effros, S. El Rouayheb and M. Langberg, ``An Equivalence between Network Coding and Index Coding,'' \textit{IEEE Trans. Inf. Theory}, vol. 61, pp. 2478--2487, May 2015.
\end{thebibliography}
\end{document}